\tikzset{
  initial text=, shorten >= 1pt, auto, 
  every path/.style={->,-stealth, every loop/.style={-stealth}},
  every initial by arrow/.style={-stealth, initial text={},
  every loop/.style={red,-stealth}},
  every state/.style={minimum size=9mm},
}
\newcommand{\charfusion}[2]{%
  \def\ch@rfusion##1##2{%
    \ooalign{\hfil$##1#1$\hfil\cr\hfil$##2$\hfil\crcr}}%
      \mathop{%
      \vphantom{#1}%
      \mathpalette\ch@rfusion#2}\displaylimits}
\newcommand{\myquot}[1]{``#1''}
\newcommand{\nats}{\mathbb{N}}
\newcommand{\size}[1]{|#1|}
\renewcommand{\epsilon}{\varepsilon}
\renewcommand{\phi}{\varphi}
\newcommand{\set}[1]{\{#1\}}
\newcommand{\pow}[1]{2^{#1}}
\newcommand{\colvec}[1]{\begin{pmatrix}#1\end{pmatrix}}
\newcommand{\bigvec}[2][]{{#1\begin{pmatrix}#2\end{pmatrix}}}
\newcommand{\aut}{\mathfrak{A}}
\newcommand{\autb}{\mathfrak{B}}
\newcommand{\strataut}{\mathfrak{T}}
\newcommand{\acc}{\mathrm{Acc}}
\newcommand{\col}{\Omega}
\newcommand{\infi}[0]{\mathrm{Inf}}
\newcommand{\curlyF}[0]{\mathcal{F}}
\newcommand{\parity}{\mathrm{prty}}
\newcommand{\muller}{\mathrm{mllr}}
\newcommand{\initmark}{I}
\newcommand{\inp}[1]{\mathit{in}(#1)}
\newcommand{\outp}[1]{\mathit{out}(#1)}
\newcommand{\stratautDelta}{\Delta}
\newcommand{\stratautLambda}{\Lambda}
\newcommand{\fDelta}{\Delta}
\newcommand{\fLambda}{\Lambda}
\newcommand{\delaygame}[1]{\Gamma\!_{f}(#1)}
\newcommand{\delaygameabbr}{\Gamma\!_{f}}
\newcommand{\blockgame}[2]{\Gamma^{#1}(#2)}
\newcommand{\SigmaI}{\Sigma_I}
\newcommand{\SigmaO}{\Sigma_O}
\newcommand{\strat}{\tau}
\newcommand{\stratO}{\tau_O}
\newcommand{\stratI}{\tau_I}
\newcommand{\p}{P}
\newcommand{\arena}{\mathcal{A}}
\newcommand{\game}{\Gamma}
\newcommand{\win}{\mathrm{Win}}
\newcommand{\arenagame}{\mathcal{G}}
\newcommand{\bigo}{\mathcal{O}}
\newcommand{\update}{\mathrm{upd}}
\newcommand{\block}[1]{\overline{#1}}
\newcommand{\equivclass}[1]{[#1]_\equiv}
\newcommand{\R}{{R}}
\newcommand{\mon}{\mathfrak{M}}
\newcommand{\Leq}{L_=}
\newcommand{\rep}{\mathrm{rep}}
\begin{document}

\title{Finite-state Strategies in Delay Games (full version)\thanks{Supported by the projects \myquot{TriCS} (ZI 1516/1-1) and (LO 1174/3-1) of the German Research Foundation (DFG). The work presented here was carried out while the first author was a member of the Logic and Theory of Discrete Systems Group at RWTH Aachen University and the second author was a member of the Reactive Systems Group at Saarland University.}}

\author{Sarah Winter\inst{1} \and Martin Zimmermann\inst{2}}
\institute{Formal Methods and Verification, Université libre de Bruxelles, 1050 Bruxelles, Belgium\\
\email{swinter@ulb.ac.be}
\and
University of Liverpool, Liverpool L69 3BX, United Kingdom\\
\email{martin.zimmermann@liverpool.ac.uk}
}

\maketitle
\begin{abstract}
What is a finite-state strategy in a delay game? We answer this surprisingly non-trivial question by presenting a very general framework that allows to remove delay: finite-state strategies exist for all winning conditions where the resulting delay-free game admits a finite-state strategy. The framework is applicable to games whose winning condition is recognized by an automaton with an acceptance condition that satisfies a certain aggregation property.

 Our framework also yields upper bounds on the complexity of determining the winner of such delay games and upper bounds on the necessary lookahead to win the game. In particular, we cover all previous results of that kind as special cases of our uniform approach. 
\end{abstract}

\section{Introduction}
\label{sec-intro}
What is a finite-state strategy in a delay game? The answer to this question is surprisingly non-trivial due to the nature of delay games in which one player is granted a lookahead on her opponent's moves. This puts her into an advantage when it comes to winning games, i.e., there are games that can only be won with lookahead, but not without. A simple example is a game where one has to predict the third move of the opponent with one's first move. This is impossible when moving in alternation, but possible if one has access to the opponent's first three moves before making the first move. More intriguingly, lookahead also allows to improve the quality of winning strategies in games with quantitative winning conditions, i.e., there is a tradeoff between quality and amount of lookahead~\cite{Zimmermann17}. More practically, when modeling reactive synthesis as a two-player game, the addition of delay allows us to model delay inherent to sensing and actuating in the physical world as well as the delay caused by the transmission of data~\cite{ChenFLMZ18}. Thus, delay games capture aspects of real-life synthesis problems that cannot easily be expressed in the classical, i.e., delay-free, framework. 

However, managing (and, if necessary, storing) the additional information gained by the lookahead can be a burden. Consider another game where one just has to copy the opponent's moves. This is obviously possible with or without lookahead (assuming the opponent moves first). In particular, without lookahead one just has to remember the last move of the opponent and copy it. However, when granted lookahead, one has to store the last moves of the opponent in a queue to implement the copying properly. This example shows that lookahead is not necessarily advantageous when it comes to minimizing the memory requirements of a strategy.

In this work, we are concerned with Gale-Stewart games~\cite{GaleStewart53}, abstract games without an underlying arena.\footnote{The models of Gale-Stewart games and arena-based games are interreducible, but delay games are naturally presented as a generalization of Gale-Stewart games. This is the reason we prefer this model here.} In such a game, both players produce an infinite sequence of letters and the winner is determined by the combination of these sequences. If it is in the winning condition, a set of such combinations, then the second player wins, otherwise the first one wins. In a classical Gale-Stewart game, both players move in alternation while in a delay game, the second player skips moves to obtain a lookahead on the opponent's moves. Which moves are skipped is part of the rules of the game and known to both players.

Delay games have recently received a considerable amount of attention after being introduced by Hosch and Landweber~\cite{HoschLandweber72} only three years after the seminal Büchi-Landweber theorem~\cite{BuechiLandweber69}. Büchi and Landweber had shown how to solve infinite two-player games with $\omega$-regular winning conditions. Forty years later, delay games were revisited by Holtmann, Kaiser, and Thomas~\cite{HoltmannKaiserThomas12} and the first comprehensive study was initiated, which settled many basic problems like the exact complexity of solving $\omega$-regular delay games and the amount of  lookahead necessary to win such games~\cite{KleinZimmermann16}. Furthermore, Martin's seminal Borel determinacy theorem~\cite{Martin75} for Gale-Stewart games has been lifted to delay games~\cite{KleinZimmermann15} and winning conditions beyond the $\omega$-regular ones have been investigated~\cite{FridmanLoedingZimmermann11,KleinZimmermann16a,Zimmermann16,Zimmermann17}.  

Finally, the decision version of the uniformization problem is to decide whether a given relation has a uniformization, that is, whether there exists a function with a prescribed property that is contained in the relation and has the same domain.
This problem for relations over infinite words and continuous functions boils down to solving delay games: a relation~$L \subseteq (\SigmaI \times \SigmaO)^\omega$ is uniformized by a continuous function (in the Cantor topology) if, and only if, the delaying player wins the delay game with winning condition~$L$. We refer to~\cite{HoltmannKaiserThomas12} for details.

What makes finite-state strategies in infinite games particularly useful and desirable is that a general strategy is an infinite object, as it maps finite play prefixes to next moves. On the other hand, a finite-state strategy is implemented by a transducer, an automaton with output, and therefore finitely represented: the automaton reads a play prefix and outputs the next move to be taken. Thus, the transducer computes a finite abstraction of the play's history using its state space as memory and determines the next move based on the current memory state.

 In Gale-Stewart games, finite-state strategies suffice for all $\omega$-regular games~\cite{BuechiLandweber69} and even for deterministic $\omega$-contextfree games, if one allows pushdown transducers~\cite{Walukiewicz01}. For Gale-Stewart games (and arena-based games), the notion is well-established and one of the most basic questions about a class of winning conditions is that about the existence and size of winning strategies for such games.

While foundational questions for delay games have been answered and many results have been lifted from Gale-Stewart games to those with delay, the issue of computing tractable and implementable strategies has not been addressed before. However, this problem is of great importance, as the existence and computability of finite-state strategies is a major reason for the successful application of infinite games to diverse problems like reactive synthesis, model-checking of fixed-point logics, and automata theory. 

In previous work, restricted classes of strategies for delay games have been considered~\cite{KleinZimmermann15}. However, those restrictions are concerned with the amount of information about the lookahead's evolution a strategy has access to, and do not restrict the size of the strategies: In general, they are still infinite objects. On the other hand, it is known that bounded lookahead suffices for many winning conditions of importance, e.g., the $\omega$-regular ones~\cite{KleinZimmermann16}, those recognized by parity and Streett automata with costs~\cite{Zimmermann17}, and those definable in (parameterized) linear temporal logics~\cite{KleinZimmermann16a}. Furthermore, for all those winning conditions, the winner of a delay game can be determined effectively. In fact, all these proofs rely on the same basic construction that was already present in the work of Holtmann, Kaiser, and Thomas, i.e., a reduction to a Gale-Stewart game using equivalence relations that capture behavior of the automaton recognizing the winning condition. These reductions and the fact that finite-state strategies suffice for the games obtained in the reductions imply that (some kind of) finite-state strategies exist for such games.
 
Indeed, in his master's thesis, Salzmann recently introduced the first notion of finite-state strategies in delay games and, using these reductions, presented an algorithm computing them for several types of acceptance conditions, e.g., parity conditions and related $\omega$-regular ones~\cite{Salzmann15}. However, the exact nature of finite-state strategies in delay games is not as canonical as for Gale-Stewart games. We discuss this issue in-depth in Sections~\ref{sec-fsindg} and \ref{sec-disc} by proposing two notions of finite-state strategies, a delay-oblivious one which yields large strategies in the size of the lookahead, and a delay-aware one that follows naturally from the reductions to Gale-Stewart games mentioned earlier. In particular, the number of states of the delay-aware strategies is independent of the size of the lookahead, but often larger in the size of the automaton recognizing the winning condition. However, this is offset by the fact that strategies of the second type are simpler to compute than the delay-oblivious ones and have overall fewer states, if the lookahead is large.  In comparison to Salzmann's notion, where strategies syntactically depend on a given automaton representing the winning condition, our strategies are independent of the representation of the winning condition and therefore more general. Also, our framework is more abstract and therefore applicable to a wider range of acceptance conditions (e.g., qualitative ones) and yields in general smaller strategies, but there are of course some similarities, which we discuss in detail.

To present these notions, we first introduce some definitions in Section~\ref{sec-prel}, e.g., delay games and finite-state strategies for Gale-Stewart games. After introducing the two notions of finite-state strategies for delay games in Section~\ref{sec-fsindg}, we show how to compute such strategies in Section~\ref{sec-construction}. To this end, we present a generic account of the reduction from delay games to Gale-Stewart games which subsumes, to the best of our knowledge, all decidability results presented in the literature. Furthermore, we show how to obtain the desired strategies from our construction. Then, in Section~\ref{sec-disc}, we compare the different definitions of finite-state strategies for delay games proposed here and discuss their advantages and disadvantages. Also, we compare our approach to that of Salzmann. In Section~\ref{sec-implementingstrats}, we discuss how to implement finite-state strategies in delay games even more succinctly. We conclude by mentioning some directions for further research in Section~\ref{sec-conc}. 

This work is an extended and revised version of a paper presented at GandALF 2017~\cite{Zimmermann17c}.

\paragraph*{Related Work}

As mentioned earlier, the existence of finite-state strategies is the technical core of many applications of infinite games, e.g., in reactive synthesis one synthesizes a correct-by-construction system from a given specification by casting the problem as an infinite game between a player representing the system and one representing the antagonistic environment. It is a winning strategy for the system player that yields the desired implementation, which is finite if the winning strategy is finite-state. Similarly, Gurevich and Harrington's game-based proof of Rabin's decidability theorem for monadic second-order logic over infinite binary trees~\cite{Rabin1969} relies on the existence of finite-state strategies.\footnote{The proof is actually based on positional strategies, a further restriction of finite-state strategies for arena-based games, because they are simpler to handle. Nevertheless, the same proof also works for finite-state strategies.}

These facts explain the need for studying the existence and properties of finite-state strategies in infinite games~\cite{Khoussainov03,Rabinovich09,LeRouxPauly16,Thomas94}. In particular, the seminal work by Dziembowski, Jurdzi\'{n}ski, and Walukiewicz~\cite{DziembowskiJW97} addressed the problem of determining upper and lower bounds on the size of finite-state winning strategies in games with Muller winning conditions. Nowadays, one of the most basic questions about a given winning condition is that about such upper and lower bounds. For most conditions in the literature, tight bounds are known, see, e.g.,~\cite{ChatterjeeHenzingerHorn11,Horn05,WallmeierHuettenThomas03}. But there are also surprising exceptions to that rule, e.g., generalized reachability games~\cite{FijalkowH13}. More recently, Colcombet, Fijalkow, and Horn presented a very general technique that yields tight upper and lower bounds on memory requirements in safety games, which even hold for games in  infinite arenas, provided their degree is finite~\cite{ColcombetFH14}.

\section{Preliminaries}
\label{sec-prel}
We denote the non-negative integers by~$\nats$. 
An alphabet~$\Sigma$ is a non-empty finite set. 
The set of finite words over $\Sigma$ is denoted by $\Sigma^*$ and the set of infinite words by $\Sigma^\omega$.
Given a finite or infinite word $\alpha$, we denote by $\alpha(i)$ the $i$th letter of $\alpha$, starting with $0$, i.e., $\alpha = \alpha(0)\alpha(1)\alpha(2)\cdots$.
Given two $\omega$-words~$\alpha \in (\Sigma_0)^\omega$ and $\beta \in (\Sigma_1)^\omega$, we define ${ \alpha \choose \beta} = {\alpha(0) \choose \beta(0)} {\alpha(1) \choose \beta(1)} {\alpha(2) \choose \beta(2)} \cdots \in (\Sigma_0 \times \Sigma_1)^\omega$. Similarly, we define ${ x \choose y }$ for finite words~$x$ and $y$ with $\size{x} = \size{y}$.

\subsection{$\omega$-automata}

A (deterministic and complete) $\omega$-automaton is a tuple~$\aut = (Q, \Sigma, q_\initmark, \delta, \acc)$ where $Q$ is a finite set of states, $\Sigma$ is an alphabet, $q_\initmark \in Q$ is the initial state, $\delta \colon Q \times \Sigma \rightarrow Q$ is the transition function, and $\acc \subseteq \delta^\omega$ is the set of accepting runs (here, and whenever convenient, we treat $\delta$ as a relation~$\delta \subseteq Q \times \Sigma \times Q$).
A finite run~$\pi$ of $\aut$ is a sequence
\[
\pi = (q_0, a_0, q_1)(q_1, a_1, q_2) \cdots (q_{i-1}, a_{i-1}, q_i) \in \delta^+.\]
As usual, we say that $\pi$ starts in $q_0$, ends in $q_i$, and processes~$a_0\cdots a_{i-1} \in \Sigma^+$. Infinite runs on infinite words are defined analogously. If we speak of \emph{the} run of $\aut$ on $\alpha \in \Sigma^\omega$, then we mean the unique run of $\aut$ starting in $q_\initmark$ processing $\alpha$. The language~$L(\aut) \subseteq \Sigma^\omega$ of $\aut$ contains all those $\omega$-words whose run of $\aut$ is accepting. The size of $\aut$ is defined as $\size{\aut}=\size{Q}$.

This definition is very broad, which allows us to formulate our theorems as general as possible. In examples, we consider safety, reachability, parity, and Muller automata whose sets of accepting runs are finitely represented: An $\omega$-automaton~$\aut = ( Q, \Sigma, q_\initmark, \delta, \acc )$ is a safety automaton, if there is a set~$F  \subseteq Q$ of accepting states such that 
\[\acc = \set{(q_0, a_0, q_1) (q_1, a_1, q_2) (q_2, a_2, q_3) \cdots \in \delta^\omega \mid q_i \in F \text{ for every } i}.\]
Moreover, an $\omega$-automaton~$\aut = ( Q, \Sigma, q_\initmark, \delta, \acc )$ is a reachability automaton, if there is a set~$F  \subseteq Q$ of accepting states such that 
\[\acc = \set{(q_0, a_0, q_1) (q_1, a_1, q_2) (q_2, a_2, q_3) \cdots \in \delta^\omega \mid q_i \in F \text{ for some } i}.\]
	Furthermore, $\aut$ is a parity automaton, if 
\[\acc = \set{(q_0, a_0, q_1) (q_1, a_1, q_2) (q_2, a_2, q_3) \cdots \in \delta^\omega \mid \text{$\limsup\nolimits_{i \rightarrow \infty} \col(q_i)  $ is even}}\] for some coloring~$\col \colon Q \rightarrow \nats$. To simplify our notation, define $\col(q,a,q') = \col(q)$. Finally, $\aut$ is a Muller automaton, if there is a family~$\curlyF \subseteq \pow{Q}$ of sets of states such that $\acc = \set{ \rho \in \delta^\omega \mid \infi(\rho) \in \curlyF}$, where $\infi(\rho)$ is the set of states visited infinitely often by $\rho$. 

\subsection{Delay Games}

A delay function is a mapping~$f \colon \nats \rightarrow \nats\setminus \set{0}$, which is said to be constant if $f(i) =1$ for all $i>0$. A delay game~$\delaygame{L}$ consists of a delay function~$f$ and a winning condition~$L \subseteq (\SigmaI \times \SigmaO)^\omega$ for some alphabets~$\SigmaI$ and $\SigmaO$. Such a game is played in rounds~$i = 0,1,2, \ldots$ as follows: in round~$i$, first Player~$I$ picks a word~$x_i \in \SigmaI^{f(i)}$, then Player~$O$ picks a letter~$y_i \in \SigmaO$. Player~$O$ wins a play~$(x_0, y_0)(x_1, y_1)(x_2, y_2) \cdots $ if the outcome~${ x_0 x_1 x_2 \cdots \choose y_0 y_1 y_2 \cdots }$ is in $L$; otherwise, Player~$I$ wins.

A strategy for Player~$I$ in $\delaygame{L}$ is a mapping $\stratI \colon \SigmaO^* \rightarrow \SigmaI^*$ satisfying $\size{\stratI(w)} = f(\size{w})$ while a strategy for Player~$O$ is a mapping~$\stratO \colon \SigmaI^+ \rightarrow \SigmaO$. A play~$(x_0, y_0)(x_1, y_1)(x_2, y_2) \cdots $ is consistent with $\stratI$ if $x_i = \stratI(y_0 \cdots y_{i-1})$ for all $i$, and it is consistent with $\stratO$ if $y_i = \stratO(x_0 \cdots x_i)$ for all $i$. A strategy for Player~$\p \in \set{I,O}$ is winning, if every play that is consistent with the strategy is won by Player~$\p$.

An important special case are delay-free games, i.e., those with respect to the delay function~$f$ mapping every $i$ to $1$. In this case, we drop the subscript~$f$ and write $\game(L)$ for the game with winning condition~$L$. Such games are typically called Gale-Stewart games~\cite{GaleStewart53}.

\subsection{Finite-state Strategies in Gale-Stewart Games}
\label{subsec-finitestate4galestewart}
A strategy for Player~$O$ in a Gale-Stewart game is still a mapping~$\stratO \colon \SigmaI^+ \rightarrow \SigmaO$. Such a strategy is said to be finite-state, if there is a deterministic finite transducer~$\strataut$ that implements $\stratO$ in the following sense: $\strataut$ is a tuple~$(Q, \SigmaI, q_\initmark, \delta, \SigmaO, \lambda)$ where $Q$ is a finite set of states, $\SigmaI$ is the input alphabet, $q_\initmark \in Q$ is the initial state, $\delta \colon Q \times \SigmaI \rightarrow Q$ is the deterministic transition function, $\SigmaO$ is the output alphabet, and $\lambda \colon Q \rightarrow \SigmaO$ is the output function. Let $\delta^*(q,x)$ denote the unique state that is reached by $\strataut$ when processing $x \in \SigmaI^*$ from $q \in Q$. Then, the strategy~$\strat_\strataut$ implemented by $\strataut$ is defined as $\strat_\strataut(x) = \lambda(\delta^*(q_\initmark,x))$. We say that a strategy is finite-state, if it is implementable by some transducer. Slightly abusively, we identify finite-state strategies with transducers implementing them and talk about finite-state strategies with some number of states. Thus, we focus on the \emph{state complexity} (e.g., the number of memory states necessary to implement a strategy) and ignore the other components of a transducer (which are anyway of polynomial size in $\size{Q}$, if we assume $\SigmaI$ and $\SigmaO$ to be fixed).

\section{What is a Finite-state Strategy in a Delay Game?}
\label{sec-fsindg}
Before we answer this question, we first ask what properties a finite-state strategy should have, i.e., what makes finite-state strategies in Gale-Stewart games useful and desirable? A strategy~$\stratO \colon \SigmaI^+ \rightarrow \SigmaO$ is in general an infinite object and does not necessarily have a finite representation. Furthermore, to execute such a strategy, one needs to store the whole sequence of moves made by Player~$I$ thus far: Unbounded memory is needed to execute it. 

On the other hand, a finite-state strategy is finitely described by an automaton~$\strataut$ implementing it.  To execute it, one only needs to store a single state of $\strataut$ and access to the transition function~$\delta$ and the output function~$\lambda$ of $\strataut$. Assume the current state is $q$ at the beginning of some round~$i$ (initialized with $q_\initmark$ before round~$0$). Then, Player~$I$ makes his move by picking some~$a \in \SigmaI$, which is processed by updating the memory state to $q' = \delta(q, a)$. Then, $\strataut$ prescribes picking $\lambda(q') \in \SigmaO$ and round~$i$ is completed.

Thus, there are two aspects that make finite-state strategies desirable: (1) the next move depends only on a finite amount of information about the history of the play, i.e., a state of the automaton, which is (2) easily updated. In particular, the necessary \emph{machinery} of the strategy is encoded in the transition function and the output function. 

Further, there is a generic framework to compute such strategies by reducing them to arena-based games.\footnote{\ref{sec-appendix} gives a short introduction to arena-based games.} As an example, consider a game~$\game(L(\aut))$ where $\aut$ is a parity automaton with set~$Q$ of states and transition function~$\delta$. We describe the construction of an arena-based parity game contested between Player~$I$ and Player~$O$ whose solution allows us to compute the desired strategies (formal details are presented in the appendix). The positions of Player~$I$ are states of $\aut$ while those of Player~$O$ are pairs~$(q,a)$ where $q \in Q$ and where $a$ is an input letter. From a vertex~$q$ Player~$I$ can move to every state~$(q,a)$ for $a \in \SigmaI$, from which Player~$O$ can move to every vertex~$\delta(q,{a \choose b})$ for $b \in \SigmaO$. Finally, Player~$O$ wins a play, if the run of $\aut$ constructed during the play is accepting. It is easy to see that the resulting game is a parity game with~$ \size{Q}\cdot(\size{\SigmaI}+1)$ vertices, and has the same winner as $\game(L(\aut))$. The winner of the arena-based game has a positional\footnote{A strategy in an arena-based game is positional, if only depends on the last vertex of the play's history, not on the full history. A formal definition can be found in the appendix.} winning strategy~\cite{EmersonJutla91,Mostowski91}, which can be computed in quasipolynomial time~\cite{CJKLS16,FJSSW17,JL17,DBLP:conf/lics/Lehtinen18}. Such a positional winning strategy can easily be turned into a finite-state winning strategy with $\size{Q} \cdot \size{\SigmaI}$ states for Player~$O$ in the game~$\game(L(\aut))$, which is implemented by an automaton with state set~$Q \times \SigmaI$. 
This reduction can be generalized to arbitrary classes of Gale-Stewart games whose winning conditions are recognized by an $\omega$-automaton with set~$Q$ of states: if Player~$O$ has a finite-state strategy with $n'$ states in the arena-based game obtained by the construction described above, then Player~$O$ has a finite-state winning strategy with $\size{Q} \cdot \size{\SigmaI}\cdot n'$ states for the original Gale-Stewart game. Such a strategy is obtained by solving an arena-based game with $\size{Q}\cdot(\size{\SigmaI}+1)$ vertices. Again, see the appendix for technical details.

\subsection{Delay-oblivious Finite-state Strategies in Delay Games}

So, what is a finite-state strategy in a delay game? In the following, we discuss this question for the case of delay games with respect to constant delay functions, which is the most important case. In particular, constant lookahead suffices for all $\omega$-regular winning conditions~\cite{KleinZimmermann16}, i.e, Player~$O$ wins with respect to an arbitrary delay function if, and only if, she wins with respect to a constant one. Similarly, constant lookahead suffices for many quantitative conditions like (parameterized) temporal logics~\cite{KleinZimmermann16a} and parity conditions with costs~\cite{Zimmermann17}. For winning conditions given by parity automata, there is an exponential upper bound on the necessary constant lookahead. On the other hand, there are exponential lower bounds already for winning conditions specified by deterministic automata with reachability or safety acceptance~\cite{KleinZimmermann16} (which are subsumed by parity acceptance).

Technically, a strategy for Player~$O$ in a delay game is still a mapping~$\stratO \colon \SigmaI^+ \rightarrow \SigmaO$. Hence, the definition of finite-state strategies for Gale-Stewart games (see Subsection~\ref{subsec-finitestate4galestewart}) is also applicable to delay games. With reasons that become apparent in the example succeeding the definition, we call such strategies \emph{delay-oblivious}. 

As a (cautionary) example, consider a delay game with winning condition~$\Leq = \set{ {\alpha \choose \alpha} \mid \alpha \in \set{0,1}^\omega }$, i.e., Player~$O$ just has to copy Player~$I$'s moves, which she can do with respect to every delay function: Player~$O$ wins $\delaygame{\Leq}$ for every $f$. However, a finite-state strategy has to remember the whole lookahead, i.e., those moves that Player~$I$ is ahead of Player~$O$, in order to copy his moves. Thus, an automaton implementing a winning strategy for Player~$O$ in $\delaygame{\Leq}$ needs at least~$\size{\set{0,1}}^d $ states, if $f$ is a constant delay function with $f(0) = d$. Thus, the memory requirements grow with the size of the lookahead granted to Player~$O$, i.e., lookahead is a burden, not an advantage. She even needs unbounded memory in the case of unbounded lookahead.

An advantage of this delay-oblivious definition is that finite-state strategies can be obtained by a trivial extension of the reduction presented for Gale-Stewart games above: now, states of Player~$I$ are from~$Q \times \SigmaI^{d-1}$ and those of Player~$O$ are from $Q \times \SigmaI^{d}$. Player~$I$ can move from $(q,w)$ to $(q, wa)$ for $a \in \SigmaI$ while Player~$O$ can move from $(q, aw)$ to $(\delta(q, {a \choose b}),w)$ for $b \in \SigmaO$. Intuitively, a state now additionally stores a queue of length~$d-1$, which contains the lookahead granted to Player~$O$. Coming back to the parity example, this approach yields a finite-state strategy with $\size{Q}\cdot\size{\SigmaI}^{d}$ states. To obtain such a strategy, one has to solve a parity game with $\size{Q}\cdot(\size{\SigmaI}+1)\cdot\size{\SigmaI}^{d-1}$ vertices, which is of doubly-exponential size in $\size{\aut}$, if $d$ is close to the (tight) exponential upper bound. This can be done in doubly-exponential time, as it still has the same number of colors as the automaton~$\aut$. Again, this reduction can be generalized to arbitrary classes of delay games with constant delay whose winning conditions are recognized by an $\omega$-automaton with set~$Q$ of states: if Player~$O$ has a finite-state strategy with $n'$ states in the arena-based game obtained by the construction, then Player~$O$ has a finite-state winning strategy with $\size{Q} \cdot \size{\SigmaI}^d\cdot n'$ states for the delay game with constant lookahead of size~$d$. In general, $d$ factors exponentially into $n'$, as $n'$ is the memory size required to win a game with $\bigo(\size{\SigmaI}^d)$ vertices. Also, to obtain the strategy for the delay game, one has to solve an arena-based game  with $\size{Q}\cdot(\size{\SigmaI}+1)\cdot \size{\SigmaI}^{d-1}$~vertices. Again, see the appendix for technical details.

\subsection{Block Games}

We show that one can do better by decoupling the history tracking and the handling of the lookahead, i.e., by using \emph{delay-aware} finite-state strategies. In the delay-oblivious definition, we hardcode a queue into the arena-based game, which results in a blowup of the arena and therefore also in a blowup in the solution complexity and in the number of memory states for the arena-based game, which is turned into one for the delay game. To overcome this, we introduce a slight variation of delay games with respect to constant delay functions, so-called block games\footnote{Holtmann, Kaiser, and Thomas already introduced a notion of block game in connection to delay games~\cite{HoltmannKaiserThomas12}. However, their notion differs from ours in several aspects. Most importantly, in their definition, Player~$I$ determines the length of the blocks (within some bounds specified by $f$) while our block length is fixed and part of the rules of the game.}, present a notion of finite-state strategy in block games, and show how to transfer strategies between delay games and block games. Then, we show how to solve block games and how to obtain finite-state strategies for them.

 The motivation for introducing block games is to eliminate the queue containing the letters Player~$I$ is ahead of Player~$O$, which is cumbersome to maintain, and causes the blowup in the case of games with winning condition~$\Leq$. Instead, in a block game, both players pick blocks of letters of a fixed length with Player~$I$ being one block ahead to account for the delay, i.e., Player~$I$ has to pick two blocks in round~$0$ and then one in every round, as does Player~$O$ in every round. This variant of delay games lies implicitly or explicitly at the foundation of all arguments establishing upper bounds on the necessary lookahead and at the foundations of all algorithms solving delay games~\cite{HoltmannKaiserThomas12,KleinZimmermann16,KleinZimmermann16a,Zimmermann16,Zimmermann17}. Furthermore, we show how to transform a (winning) strategy for a delay game into a (winning) strategy for a block game and vice versa, i.e., Player~$O$ wins the delay game if, and only if, she wins the corresponding block game.\footnote{Due to their importance and prevalence for solving delay games, one could even argue that the notion of block games is more suitable to model delay in infinite games.}

Formally, the block game~$\blockgame{d}{L}$, where $d \in \nats\setminus\set{0}$ is the block length and where $L \subseteq (\SigmaI\times\SigmaO)^\omega$ is the winning condition, is played in rounds as follows: in round~$0$, Player~$I$ picks two blocks~$\block{a_0},\block{a_1} \in \SigmaI^d$, then Player~$O$ picks a block~$\block{b_0}\in\SigmaO^d$. In round~$i>0$, Player~$I$ picks a block~$\block{a_{i+1}} \in \SigmaI^d$, then Player~$O$ picks a block~$\block{b_i}\in\SigmaO^d$. Player~$O$ wins the resulting play~$\block{a_0} \block{a_1} \block{b_0} \block{a_2} \block{b_1} \cdots$, if the outcome~${\block{a_0}\block{a_1}\block{a_2} \cdots  \choose \block{b_0}\block{b_1}\block{b_2} \cdots}$ is in $L$. 

A strategy for Player~$I$ in $\game$ is a map~$\stratI \colon (\SigmaO^{d})^* \rightarrow (\SigmaI^{d})^2 \cup \SigmaI^{d}$ such that $\stratI(\epsilon) \in (\SigmaI^{d})^2$ and $\stratI(\block{b_0} \cdots \block{b_i}) \in \SigmaI^{d}$ for $i\ge 0$. A strategy for Player~$O$ is a map~$\stratO \colon (\SigmaI^d)^* \rightarrow \SigmaO^d$. A play~$\block{a_0} \block{a_1} \block{b_0} \block{a_2} \block{b_1} \cdots$ is consistent with $\stratI$, if $(\block{a_0}, \block{a_1}) = \stratI(\epsilon)$ and $\block{a_i} = \stratI(\block{b_0} \cdots \block{b_{i-2}})$ for every $i \ge 2$; it is consistent with $\stratO$ if $\block{b_i} = \stratO(\block{a_0} \cdots \block{a_{i+1}})$ for every $i \ge 0$. Winning strategies and winning a block game are defined as for delay games.

The next lemma relates delay games with constant lookahead and block games: for a given winning condition, Player~$O$ wins a delay game with winning condition~$L$ (with respect to some delay function) if, and only if, she wins a block game with winning condition~$L$ (for some block size). 

\begin{lemma}
\label{lemma-delaygamesvsblockgames}
Let $L \subseteq (\SigmaI \times \SigmaO)^\omega$.
\begin{enumerate}
	\item\label{lemma-delaygamesvsblockgames-delay2block}
	 If Player~$O$ wins $\delaygame{L}$ for some constant delay function~$f$, then she also wins $\blockgame{f(0)}{L}$.
	\item\label{lemma-delaygamesvsblockgames-block2delay}
	 If Player~$O$ wins $\blockgame{d}{L}$, then she also wins $\delaygame{L}$ for the constant delay function~$f$ with $f(0) = 2d$. 
\end{enumerate}
\end{lemma}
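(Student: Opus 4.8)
The plan is to prove both directions by the same bookkeeping argument: a play of either game is ultimately determined by Player~$I$'s input word~$\alpha \in \SigmaI^\omega$ together with Player~$O$'s output word~$\beta \in \SigmaO^\omega$, and since the winning condition only compares~${\alpha \choose \beta}$ against~$L$, it suffices to regroup the two letter streams and to check that at every round Player~$O$ has seen enough of~$\alpha$ to compute the next output prescribed by the strategy she is simulating. The observation that makes this clean is that Player~$O$'s strategies in both games are functions of Player~$I$'s moves alone (a delay strategy is a map~$\SigmaI^+ \to \SigmaO$, a block strategy a map~$(\SigmaI^d)^* \to \SigmaO^d$); hence the produced output is a pure function of~$\alpha$, and the regrouping introduces no inconsistencies.

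For Part~\ref{lemma-delaygamesvsblockgames-delay2block}, let $\stratO$ be a winning delay strategy for the constant~$f$ with $f(0)=d$. In the delay game the $j$-th output letter~$y_j$ equals $\stratO$ applied to the input prefix of length~$d+j$. I build a block strategy~$\stratO'$ as follows: after Player~$I$ has revealed $\block{a_0}\cdots\block{a_{i+1}}$, i.e.\ the first $(i+2)d$ letters of~$\alpha$, Player~$O$ outputs the block $\block{b_i}=y_{id}\cdots y_{(i+1)d-1}$, where each~$y_j$ is read off~$\stratO$ from the corresponding prefix of~$\alpha$. The longest prefix needed has length $d + ((i+1)d-1) = (i+2)d-1$, which is at most the $(i+2)d$ letters available, so $\stratO'$ is well defined. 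The resulting block play has outcome~${\alpha \choose \beta}$ with $\beta(j)=y_j$, exactly the outcome of the delay play on~$\alpha$ under~$\stratO$; since $\stratO$ is winning, this outcome lies in~$L$ and $\stratO'$ is winning.

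For Part~\ref{lemma-delaygamesvsblockgames-block2delay}, let $\stratO$ be a winning block strategy for~$\blockgame{d}{L}$ and take~$f$ with $f(0)=2d$. I group the delay input into blocks~$\block{a_k}$ of length~$d$ and simulate the block game: the crucial alignment is that when Player~$O$ must produce the output letter~$y_{id}$ (the first letter of the $i$-th output block), she has seen exactly $2d+id=(i+2)d$ input letters, namely $\block{a_0}\cdots\block{a_{i+1}}$ --- precisely the prefix the block strategy needs to compute $\block{b_i}=\stratO(\block{a_0}\cdots\block{a_{i+1}})$. She then emits the $d$ letters of~$\block{b_i}$ one per round over rounds $id,\dots,(i+1)d-1$, requiring no further input in between. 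The initial lookahead $f(0)=2d$ is exactly what supplies the two leading blocks~$\block{a_0},\block{a_1}$ and thereby maintains the ``one block ahead'' invariant of the block game. Again the outcomes coincide, so $\stratO$ winning yields a winning delay strategy.

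I expect the only real work to be the index arithmetic certifying sufficiency of the lookahead at every round --- in particular pinning down why $2d$ (and not $d$) is needed in Part~\ref{lemma-delaygamesvsblockgames-block2delay}, which comes from the inequality $f(0)+id \ge (i+2)d$ forced at the moment~$y_{id}$ is committed --- together with the minor formality that the constructed strategies must be total maps ($\stratO'$ on block histories shorter than two blocks, and the delay strategy on input prefixes shorter than~$2d$); such arguments never arise in consistent plays and may be fixed arbitrarily.
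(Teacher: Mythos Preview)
Your proposal is correct and follows essentially the same approach as the paper: in both directions you translate moves by regrouping the shared input stream~$\alpha$ into blocks and reading off the simulated strategy once enough letters are available, then argue that the outcomes coincide. The paper writes out the explicit formula~$\stratO(x)=\beta(\size{x'})$ for the decomposition~$x=\block{a_0}\cdots\block{a_i}x'$ in Part~\ref{lemma-delaygamesvsblockgames-block2delay}, which is exactly your ``emit the letters of $\block{b_i}$ one per round'' phrased pointwise; your index-arithmetic justification for why $f(0)=2d$ is needed is the same inequality the paper implicitly relies on.
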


\begin{proof}
\ref{lemma-delaygamesvsblockgames-delay2block}) 
Let $\stratO \colon \SigmaI^+ \rightarrow \SigmaO$ be a winning strategy for Player~$O$ in $\delaygame{L}$ and fix $d = f(0)$. Now, define $\stratO' \colon (\SigmaI^d)^* \rightarrow (\SigmaO)^d$ for Player~$O$ in $\blockgame{d}{L}$ via $\stratO'(\block{a_0} \cdots \block{a_i} \block{a_{i+1}}) = \beta(0) \cdots \beta(d-1)$ with
$\beta(j) = \stratO(\block{a_0} \cdots \block{a_{i}} \alpha(0) \alpha(1) \cdots \alpha(j-1))$
for $\block{a_{i+1}} = \alpha(0) \alpha(1) \cdots \alpha(d-1)$.

A straightforward induction shows that for every play consistent with $\stratO'$ there is a play consistent with $\stratO$ that has the same outcome. Thus, as $\stratO$ is a winning strategy, so is $\stratO'$.

\ref{lemma-delaygamesvsblockgames-block2delay}) Now, let $\stratO' \colon (\SigmaI^d)^* \rightarrow (\SigmaO)^d$ be a winning strategy for Player~$O$ in $\blockgame{d}{L}$. We define $\stratO \colon \SigmaI^+ \rightarrow \SigmaO$ for Player~$O$ in $\delaygame{L}$. To this end, let $x \in \SigmaI^+$. By the choice of $f$, we obtain $\size{x} \ge f(0) = 2d$. Thus, we can decompose $x$ into $x = \block{a_0} \cdots \block{a_{i}} x'$ such that $i \ge 1$, each $\block{a_{i'}}$ is a block over $\SigmaI$ and $\size{x'} < d$. 
Now, let $\stratO'(\block{a_0} \cdots \block{a_i}) = \beta(0) \cdots \beta(d-1)$. Then, we define $\stratO(x) = \beta(\size{x'})$.

Again, a straightforward induction shows that for every play consistent with $\stratO$ there is a play consistent with $\stratO'$ that has the same outcome. Thus, $\stratO$ is a winning strategy.
\end{proof}

\subsection{Delay-aware Finite-state Strategies in Block Games}
\label{subsec-delayaware}

After having proved the equivalence of block games and delay games w.r.t.\ constant delay, we now define \emph{delay-aware} finite-state strategies for block games. 
Fix a block game~$\blockgame{d}{L}$ with $L \subseteq (\SigmaI \times \SigmaO)^\omega$. A finite-state strategy for Player~$O$ in $\blockgame{d}{L}$ is implemented by a transducer~$\strataut = (Q, \SigmaI, q_\initmark, \delta, \SigmaO, \lambda)$ where $Q$, $\SigmaI$, and $q_\initmark$ are defined as in Subsection~\ref{subsec-finitestate4galestewart}. However, the transition function~$\delta\colon Q \times \SigmaI^d \rightarrow Q$ processes full input blocks and the output function~$\lambda \colon Q \times \SigmaI^d \times \SigmaI^d \rightarrow \SigmaO^d$ maps a state and a pair of input blocks to an output block. The strategy~$\strat_\strataut$ implemented by $\strataut$ is defined as $\strat_\strataut(\block{a_0} \cdots \block{a_i}) = \lambda(\delta^*(q_\initmark,\block{a_0} \cdots \block{a_{i-2}}), \block{a_{i-1}}, \block{a_i} )$ for $i \ge 1$. Here, $\delta^*(q, \block{a_0} \cdots \block{a_{i-2}})$ is the state reached by $\strataut$ when processing $\block{a_0} \cdots \block{a_{i-2}}$ from $q$.

\begin{example} Fix some $d >0$. 
Player~$O$ has a trivial delay-aware finite-state winning strategy for $\blockgame{d}{\Leq}$ which is implemented by the transducer~$(Q, \SigmaI, q_\initmark, \delta, \SigmaO, \lambda)$ where $Q = \set{q_\initmark}$, $\delta(q_\initmark, \block{a}) = q_\initmark$ for every $q \in Q$ and every $\block{a} \in \SigmaI^d$, and where $\lambda(q, \block{a_0}, \block{a_1}) = \block{a_0}$  for every $q \in Q$ and every $\block{a_0},\block{a_1} \in \SigmaI^d$.
\end{example}

Again, we identify delay-aware strategies with transducers implementing them and are interested in the number of states of the transducer. This captures the amount of information about a play's history that is differentiated in order to implement the strategy. Note that this ignores the  representation of the transition and the output function. These are no longer \myquot{small} (in $\size{Q}$), as it is the case for transducers implementing strategies for Gale-Stewart games. When focusing on executing such strategies, these factors become relevant, but for our current purposes they are not: 
We have  decoupled the history tracking from the lookahead-handling. The former is implemented by the automaton as usual while the latter is taken care of by the output function. In particular, the size of the automaton is (a-priori) independent of the block size. In Section~\ref{sec-implementingstrats}, we revisit the issue of representing the transition and the output function succinctly, thereby addressing the issue of implementability. 

In the next section, we present a very general approach to computing finite-state strategies for block games whose winning conditions are specified by automata with acceptance conditions that satisfy a certain aggregation property. For example, for block games with winning conditions given by deterministic parity automata, we obtain a strategy implemented by a transducer with exponentially many states, which can be obtained by solving a parity game of exponential size. In both aspects, this is an exponential improvement over the delay-oblivious variant for classical delay games.

To conclude the introduction of block games, we strengthen Lemma~\ref{lemma-delaygamesvsblockgames} to transfer finite-state strategies between delay games and block games.

\begin{lemma}\label{lemma-delaygamesvsblockgames-fs}
Let $L \subseteq (\SigmaI \times \SigmaO)^\omega$.
\begin{enumerate}
	\item\label{lemma-delaygamesvsblockgames-fs_delay2block}
	 If Player~$O$ has a delay-oblivious finite-state winning strategy for $\delaygame{L}$ with $n$ states for some constant delay function~$f$, then she also has a delay-aware finite-state winning strategy for $\blockgame{f(0)}{L}$ with $n$ states.
	\item\label{lemma-delaygamesvsblockgames-fs_block2delay}
	 If Player~$O$ has a delay-aware finite-state winning strategy for $\blockgame{d}{L}$ with $n$ states, then she also has a delay-oblivious finite-state winning strategy for $\delaygame{L}$  with $n \cdot \size{\SigmaI}^{2d} $ states for the constant delay function~$f$ with $f(0) = 2d$. 
\end{enumerate}
\end{lemma}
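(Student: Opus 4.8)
Both parts are proven by turning the given transducer into a transducer of the other type and reusing, verbatim, the strategy construction from the corresponding part of Lemma~\ref{lemma-delaygamesvsblockgames}; that lemma then supplies winningness for free, so the only new content is the bookkeeping that realizes the translation as a finite-state object of the claimed size.

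For Part~\ref{lemma-delaygamesvsblockgames-fs_delay2block} I would keep the state set $Q$ of the given delay-oblivious transducer $\strataut = (Q, \SigmaI, q_\initmark, \delta, \SigmaO, \lambda)$ unchanged and merely repackage $\delta$ and $\lambda$ to act on blocks. Define the block-transition by $\delta'(q, \block{a}) = \delta^*(q, \block{a})$, i.e.\ run the single-letter transition over a whole input block, so that $(\delta')^{*}$ on a sequence of blocks agrees with $\delta^*$ on their concatenation. Define the block-output $\lambda'(q, \block{a}, \block{a}')$ to recompute the $d$ consecutive single-letter moves that $\stratO$ would make: first advance to $q'' = \delta^*(q, \block{a})$, then let the $j$-th output letter be $\lambda(\delta^*(q'', u_j))$, where $u_j$ is the length-$j$ prefix of $\block{a}'$, for $0 \le j < d$. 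A short unfolding shows that the strategy implemented by this block transducer is exactly the strategy $\stratO'$ built in the proof of Lemma~\ref{lemma-delaygamesvsblockgames}.\ref{lemma-delaygamesvsblockgames-delay2block}, hence winning, and it has $\size{Q} = n$ states.

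For Part~\ref{lemma-delaygamesvsblockgames-fs_block2delay} the direction is harder: a delay-oblivious transducer must react letter-by-letter, whereas the block transducer acts on whole blocks and its output function peeks one block ahead. The plan is to let the delay-oblivious transducer maintain the block transducer's state together with a bounded window of the most recently read letters, so as to realize the single-letter strategy $\stratO$ from the proof of Lemma~\ref{lemma-delaygamesvsblockgames}.\ref{lemma-delaygamesvsblockgames-block2delay}: after reading $x = \block{a_0} \cdots \block{a_i} x'$ with $\size{x'} < d$, the move is the $\size{x'}$-th letter of $\lambda(\delta^*(q_\initmark, \block{a_0} \cdots \block{a_{i-2}}), \block{a_{i-1}}, \block{a_i})$. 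The memory is thus a pair: the block-state $\delta^*(q_\initmark, \block{a_0} \cdots \block{a_{i-2}}) \in Q$, and the trailing input letters that this output still depends on, namely the two blocks $\block{a_{i-1}}, \block{a_i}$ feeding the output function. On each incoming letter the transition appends to the window and, whenever a block boundary is crossed, folds the block leaving the dependency window into the block-state via $\delta$ and shifts; the output function reads off the appropriate position of the (re)computed output block. Winningness is inherited from Lemma~\ref{lemma-delaygamesvsblockgames}.\ref{lemma-delaygamesvsblockgames-block2delay}, since the implemented strategy is exactly $\stratO$.

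The main obstacle is the precise accounting of this window, and it is here that I expect the real work to lie. The output function genuinely depends on \emph{both} trailing full blocks for the entire duration of a block phase, while at the same time the next block is being accumulated letter-by-letter; the naive sufficient statistic therefore stores two full blocks \emph{plus} the partially read next block, which is more than $2d$ letters. Matching the claimed $n \cdot \size{\SigmaI}^{2d}$ bound requires arguing that no more than two blocks of genuine lookahead must ever be retained — exploiting that the delay function $f(0)=2d$ provides exactly two blocks of content, and committing each block to the block-state promptly as it stops being needed by the output function — and separately checking the boundary cases of the first two rounds, in which fewer than two full blocks have been read. Once this sufficient statistic is pinned down, the transition function, the output function, and the state bound follow immediately.
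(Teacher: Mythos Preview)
Your approach is precisely the paper's: its entire proof reads ``It is straightforward to achieve the strategy transformations described in the proof of Lemma~\ref{lemma-delaygamesvsblockgames} by transforming transducers that implement finite-state strategies.'' You have spelled out exactly these transformations; for Part~\ref{lemma-delaygamesvsblockgames-fs_delay2block} your construction is complete and clearly correct.

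Your hesitation about Part~\ref{lemma-delaygamesvsblockgames-fs_block2delay} is well-placed and goes beyond what the paper addresses. The natural sufficient statistic is indeed the triple~$(q_j, \block{a_{j}}, \block{a_{j+1}})$ together with the partially read next block~$x'$, giving roughly $n\cdot|\SigmaI|^{3d}$ states; the paper gives no indication how to reach $n\cdot|\SigmaI|^{2d}$ exactly. One can compress by replacing $(q_j,\block{a_j})$ with the pair consisting of the already-advanced block-state~$q_{j+1}=\delta'(q_j,\block{a_j})$ and the not-yet-emitted suffix of the current output block~$\lambda'(q_j,\block{a_j},\block{a_{j+1}})$: since that suffix shrinks letter-by-letter as $x'$ grows, their combined length stays at~$d$, yielding $n\cdot|\SigmaI|^d\cdot\sum_{r=0}^{d-1}|\SigmaI|^r|\SigmaO|^{d-r}$ states. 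This matches $n\cdot|\SigmaI|^{2d}$ up to a factor polynomial in~$d$ and~$|\SigmaO|/|\SigmaI|$, which is presumably what the paper intends; an exact match does not seem to fall out of the construction, and the paper does not supply one.
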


\begin{proof}
It is straightforward to achieve the strategy transformations described in the proof of Lemma~\ref{lemma-delaygamesvsblockgames} by transforming transducers that implement finite-state strategies. 
\end{proof}

The blowup in the direction from block games to delay games is in general unavoidable, as finite-state winning strategies for the game~$\delaygame{\Leq}$ need at least $2^d$ states to store the lookahead while winning strategies for the block game need only one state, independently of the block size.

\section{Computing Finite-state Strategies for Block Games}
\label{sec-construction}
The aim of this section is twofold. Our main aim is to compute finite-state strategies for block games (and, by extension, for delay games with constant lookahead). We do so by presenting a general framework for analyzing delay games with winning conditions specified by $\omega$-automata whose acceptance conditions satisfy a certain aggregation property. The technical core is a reduction to a Gale-Stewart game, i.e., we remove the delay from the game. This framework yields upper bounds on the necessary (constant) lookahead to win a given game, but also allows to determine the winner and a finite-state winning strategy, if the resulting Gale-Stewart game can be effectively solved. 

Slightly more formally, let $\aut$ be the automaton recognizing the winning condition of the block game. Then, the winning condition of the Gale-Stewart game constructed in the reduction is recognized by an automaton~$\autb$ that can be derived from $\aut$. In particular, the acceptance condition of $\autb$ simulates the acceptance condition of $\aut$. Many types of acceptance conditions are preserved by the simulation, e.g., starting with a parity automaton~$\aut$, we end up with a parity automaton~$\autb$. Thus, the resulting Gale-Stewart game can be effectively solved.

Our second aim is to present a framework as general as possible to obtain upper bounds on the necessary lookahead and on the solution complexity for a wide range of winning conditions. In fact, our framework is a generalization and abstraction of techniques first developed for the case of $\omega$-regular winning conditions~\cite{KleinZimmermann16}, which were later generalized to other winning conditions~\cite{KleinZimmermann16a,Zimmermann16,Zimmermann17}. Here, we cover all these results in a uniform way.

Let us begin by giving some intuition for the construction. The winning condition of the game is recognized by an automaton~$\aut$. Thus, as usual, the exact input can be abstracted away, only the induced behavior in $\aut$ is relevant. Such a behavior is characterized by the state transformations induced by processing the input and by the effect on the acceptance condition triggered by processing it. For many acceptance conditions, this effect can be aggregated, e.g., for parity conditions, one can decompose runs into non-empty pieces and then only consider the maximal colors of the pieces. For many quantitative winning conditions, one additionally needs bounds on the lengths of these pieces (cf.~\cite{Zimmermann16,Zimmermann17}).

We first introduce aggregations and give some examples in Subsection~\ref{subsec_aggregations} before we present the reduction to Gale-Stewart games using aggregations in Subsection~\ref{subsec_reduction}

\subsection{Aggregations}
\label{subsec_aggregations}
We begin by introducing two types of aggregations of varying strength. Fix an $\omega$-automaton~$\aut = (Q, \Sigma, q_\initmark, \delta, \acc )$ and let $s \colon \delta^+ \rightarrow M$ for some finite set~$M$. Given a decomposition~$(\pi_i)_{i\in\nats}$ of a run $\pi_0 \pi_1 \pi_2 \cdots$ into non-empty pieces~$\pi_i\in \delta^+$ we define $s((\pi_i)_{i\in\nats}) = s(\pi_0)s(\pi_1)s(\pi_2)\cdots \in M^\omega$.

\begin{itemize}
	
	
	\item We say that $s$ is a strong aggregation (function) for $\aut$, if for all decompositions~$(\pi_i)_{i\in\nats}$ and $(\pi_i')_{i\in\nats}$ of any runs~$\rho = \pi_0 \pi_1 \pi_2 \cdots $ and $\rho' = \pi_0' \pi_1' \pi_2' \cdots $ with $\sup_i \size{\pi_i'} < \infty$ and $s((\pi_i)_{i\in\nats}) = s((\pi_i')_{i\in\nats})$: $\rho \in \acc \Rightarrow \rho'\in \acc$.

	\item We say that $s$ is a weak  aggregation (function) for $\aut$, if for all decompositions~$(\pi_i)_{i\in\nats}$ and $(\pi_i')_{i\in\nats}$  of any runs~$\rho = \pi_0 \pi_1 \pi_2 \cdots $ and $\rho' = \pi_0' \pi_1' \pi_2' \cdots $ with $\sup_i \size{\pi_i} < \infty$, $\sup_i \size{\pi_i'} < \infty$, and $s((\pi_i)_{i\in\nats}) = s((\pi_i')_{i\in\nats})$: $\rho \in \acc \Rightarrow \rho'\in \acc$.

\end{itemize}

Thus, in a strong aggregation, only the pieces~$\pi_i'$ of $\rho'$ are of bounded length while in a weak aggregation both the pieces~$\pi_i$ of $\rho$ and the pieces~$\pi_i'$ of $\rho'$ are of bounded length.

\begin{example}\hfill
\label{example-aggregation}

\begin{itemize}
	
	\item The function~$s_\parity \colon \delta^+ \rightarrow \col(Q)$ defined as $s_\parity(t_0 \cdots t_i) = \max_{0\le j \le i}\col(t_j)$ is a strong aggregation for a parity automaton~$(Q, \Sigma, q_\initmark, \delta, \acc)$ with coloring~$\col$.
	
	\item The function~$s_\muller \colon \delta^+ \rightarrow \pow{Q}$ defined as $s_\muller((q_0, a_0, q_1) \cdots (q_{n}, a_n, q_{n+1})) = \set{q_0, q_1, \ldots, q_{n}}$ is a strong aggregation for a Muller automaton~$(Q, \Sigma, q_\initmark, \delta, \acc)$. 
	
	\item The exponential time algorithm for delay games with winning conditions given by parity automata with costs, a quantitative generalization of parity automata, is based on a strong aggregation~\cite{Zimmermann17}.
	
	\item The algorithm for delay games with winning conditions given by max automata~\cite{Bojanczyk11}, another quantitative automaton model, is based on a weak aggregation~\cite{Zimmermann16}.
	
\end{itemize}
\end{example}

Due to symmetry, we can replace the implication~$\rho \in \acc \Rightarrow \rho' \in \acc$ by an equivalence in the definition of a weak aggregation. Also, the notions are trivially hierarchical, i.e., every strong aggregation is also a weak one. 

Let us briefly comment on the difference between strong and weak aggregations using the examples of parity automata with costs and max-automata: the acceptance condition of the former automata is a boundedness condition on some counters while the acceptance condition of the latter is a boolean combination of boundedness and unboundedness conditions on some counters. The aggregations for these acceptance conditions capture whether a piece of a run induces an increment of a counter or not, but abstract away the actual number of increments if it is non-zero. 
Now, consider the parity condition with costs, which requires to bound the counters. Assume the counters in some run~$\pi_0 \pi_1 \pi_2 \cdots$ are bounded and that we have pieces~$\pi_i'$ of bounded length having the same aggregation. Then, the increments in some piece~$\pi_i'$ have at least one corresponding increment in $\pi_i$. Thus, if a counter in $\pi_0' \pi_1' \pi_2' \cdots$ is unbounded, then it is also unbounded in $\pi_0 \pi_1 \pi_2 \cdots$, which yields a contradiction. Hence, the implication~$\pi_0 \pi_1 \pi_2 \cdots \in \acc \Rightarrow \pi_0' \pi_1' \pi_2' \cdots\in \acc$ holds. For details, see~\cite{Zimmermann17}.
On the other hand, to preserve boundedness and unboundedness properties, one needs to bound the length of the $\pi_i'$ and the length of the $\pi_i$. Hence, there is only a weak aggregation for max-automata. Again, see~\cite{Zimmermann16} for details.

Given a weak  aggregation~$s$ for $\aut$ with acceptance condition~$\acc$, let 
\[
s(\acc) = \set{
s((\pi_i)_{i\in\nats}) \mid \pi_0 \pi_1 \pi_2 \cdots \in \acc \text{ is an accepting run of }\aut \text{ with } \sup\nolimits_i \size{\pi_i} < \infty}
.\]

Next, we consider aggregations that are trackable by automata. A monitor for an automaton~$\aut$ with transition function~$\delta$ is a tuple~$\mon = (M,\bot, \update)$ where $M$ is a finite set of memory elements, $\bot \notin M$ is the empty memory element, and $\update \colon M_\bot \times \delta \rightarrow M$ is an update function, where we use $M_\bot = M \cup \set{\bot}$. Note that the empty memory element~$\bot$ is only used to initialize the memory, it is not in the image of $\update$. 
We say that $\mon$ computes the function~$s_\mon \colon \delta^+ \rightarrow M$ defined by $s_\mon(t) = \update(\bot, t)$ and $s_\mon(\pi\cdot t) = \update(s_\mon(\pi), t)$ for $\pi \in \delta^+$ and $t \in \delta$. 

\begin{example}
\label{example-monitor}
Recall  Example~\ref{example-aggregation}. The strong aggregation~$s_\parity$ for a parity automaton is computed by the monitor~$(\col(Q), \bot, (c,t) \mapsto \max\set{c,\col(t)})$, where $\bot < c$ for every $c \in \col(Q)$.
\end{example}

Finally, we take the product of $\aut$ and the monitor~$\mon$ for $\aut$, which simulates $\aut$ and simultaneously aggregates the acceptance condition. Formally, we define the product as $\aut \times \mon = (Q \times M_\bot, (q_\initmark, \bot), \Sigma, \delta', \emptyset)$ where $\delta'((q, m), a) = (q', \update(m, (q, a, q'))) $ for $q' = \delta(q, a)$. Note that $\aut\times\mon$ has an empty set of accepting runs, as these are irrelevant to us.

\subsection{Removing Delay via Aggregations}
\label{subsec_reduction}
Consider a play prefix in a delay game~$\delaygame{L(\aut)}$: Player~$I$ has produced a sequence~$\alpha(0) \cdots \alpha(i)$ of letters while Player~$O$ has produced $\beta(0) \cdots \beta(i')$ with, in general, $i'<i$. Now, she has to determine $\beta(i'+1)$. The automaton~$\aut \times \mon$ can process the joint sequence~${\alpha(0) \cdots \alpha(i') \choose \beta(0)\cdots \beta(i')}  $, but not the sequence~$\alpha(i'+1) \cdots \alpha(i)$, as Player~$O$ has not yet picked the letters~$\beta(i'+1) \cdots \beta(i)$. However, one can determine which states are reachable by some completion~${\alpha(i'+1) \cdots \alpha(i) \choose \beta(i'+1) \cdots \beta(i)}$ by  projecting away $\SigmaO$ from $\aut\times\mon$.

Thus, from now on assume $\Sigma = \SigmaI \times \SigmaO$ and define $\delta_P \colon 2^{Q\times M_\bot} \times \SigmaI \rightarrow \pow{Q \times M}$ ($P$ for power set) via \label{page_deltap}
\[
\delta_P (S, a) = \left\{\left.\delta'\left( (q,m), {a \choose b} \right) \right| (q,m) \in S \text{ and } b \in \SigmaO\right\}.\]
Intuitively, $\delta_P$ is obtained as follows: take $\aut\times\mon$, project away $\SigmaO$, and apply the power set construction (while discarding the anyway empty acceptance condition). Then, $\delta_P$ is the transition function of the resulting deterministic automaton.
As usual, we extend $\delta_P$ to $\delta_P^+ \colon 2^{Q \times M_\bot} \times \SigmaI^+ \rightarrow 2^{Q\times M}$ via $\delta_P^+(S,a) = \delta_P(S,a)$ and $\delta_P^+(S, wa) = \delta_P( \delta_P^+(S, w), a)$. 

Given states~$q$ and $q'$ of $\aut$, a memory state~$m$, and a word~$w \in \SigmaI^+$, we call a word~$w' \in \SigmaO^{\size{w}}$ a $(q,q',m)$-completion of $w$, if the run~$\pi$ of $\aut$ processing ${w \choose w'}$ starting from $q$ ends in $q'$ and satisfies $s_\mon(\pi) = m$.

\begin{remark}
\label{remark-powersetcharac}
The following are equivalent for $q \in Q$ and $w \in \SigmaI^+$:
\begin{enumerate}
	\item $(q',m') \in \delta_P^+(\set{(q, \bot)},w)$. 
	\item There is a $(q,q',m')$-completion of $w$.
\end{enumerate}
\end{remark}

We use this property to define an equivalence relation formalizing the idea that words having the same behavior in $\aut \times \mon$ do not need to be distinguished. To this end, to every $w \in \SigmaI^+$ we assign the transition summary~$r_w \colon Q \rightarrow \pow{Q \times M}$ defined via $
r_w(q) = \delta_P^+(\set{(q, \bot)}, w)$.
Having the same transition summary is a finite equivalence relation~$\equiv$ over $\SigmaI^+$ whose index is bounded by $2^{\size{Q}^2\size{M}}$. For an $\equiv$-class~$S = \equivclass{w}$ define $r_S = r_w$, which is independent of representatives. Let $\R$ be the set of infinite $\equiv$-classes. 

Now, we define a Gale-Stewart game in which Player~$I$ determines an infinite sequence of equivalence classes from $\R$. By picking representatives, this induces a word~$\alpha \in \SigmaI^\omega$. Player~$O$ picks states~$(q_i,m_i)$ such that the $m_i$ aggregate a run of $\aut$ on some completion~${\alpha \choose \beta}$ of $\alpha$. Player~$O$ wins if the $m_i$ imply that the run of $\aut$ on ${\alpha \choose \beta}$ is accepting. To account for the delay, Player~$I$ is always one move ahead, which is achieved by adding a dummy move for Player~$O$ in round~$0$.

Formally, in round~$0$, Player~$I$ picks an $\equiv$-class~$S_0 \in \R$ and Player~$O$ has to pick $(q_0, m_0) = (q_\initmark, \bot)$. In round~$i>0$, first Player~$I$ picks an $\equiv$-class~$S_i \in \R$, then Player~$O$ picks a state~$(q_i, m_i) \in r_{S_{i-1}}(q_{i-1})$ of the product automaton. Player~$O$ wins the resulting play~$S_0 (q_0, m_0) S_1 (q_1, m_1) S_2 (q_2, m_2) \cdots$ if $m_1 m_2 m_3 \cdots \in s_\mon(\acc)$ (note that $m_0$ is ignored). The notions of (finite-state and winning) strategies are inherited from Gale-Stewart games, as this game is indeed such a game~$\game(L(\autb))$ for some automaton~$\autb$ of size~$\size{\R} \cdot \size{Q}\cdot\size{M} $ which can be derived from $\aut$ and $\mon$ as follows:

Let $\aut = (Q, \SigmaI\times \SigmaO, q_\initmark, \delta, \acc)$ and $\mon = (M, \bot, \update)$ be given.
We define $\autb = (\R \times Q \times M_\bot, \R \times (Q \times M), (S_\initmark, q_\initmark, m_\initmark), \delta', \acc')$ for some arbitrary~$S_\initmark \in \R$, some arbitrary~$m_\initmark \in M$, $\delta'((S,q,m),{S' \choose (q',m')}) = (S',q',m')$, and 
$(S_0,q_0,m_0) (S_1,q_1,m_1) (S_2,q_2,m_2) \cdots \in \acc'$ if, and only if, 
\begin{itemize}
	\item $(q_0, m_0) = (q_\initmark,\bot)$,
	\item $(q_i,m_i) \in r_{S_{i-1}}(q_{i-1})$ for all $i >0$, and
	\item $m_1 m_2 m_3 \cdots \in s_\mon(\acc)$.
\end{itemize}
It is straightforward to prove that $\autb$ has the desired properties.

Note that, due to our very general definition of acceptance conditions, we are able to express the local consistency requirement~\myquot{$(q_i,m_i) \in r_{S_{i-1}}(q_{i-1})$} using the acceptance condition. For less general acceptance modes, e.g., parity, one has to check this property using the state space of the automaton, which leads to a polynomial blowup, as one has to store each~$S_{i-1}$ for one transition.

\begin{theorem}\label{thm-main}
Let $\aut$ be an $\omega$-automaton and let $\mon$ be a monitor for $\aut$ such that $s_\mon$ is a strong aggregation for $\aut$, let $\autb$ be constructed as above, and define $d = 2^{\size{Q}^2\cdot\size{M_\bot}}$.

\begin{enumerate}
	
	\item\label{thm-main-delay2delayfree} If Player~$O$ wins $\delaygame{L(\aut)}$ for some delay function~$f$, then she also wins $\game(L(\autb))$.
	
	\item\label{thm-main-delayfree2block} If Player~$O$ wins $\game(L(\autb))$, then she also wins the block game~$\blockgame{d}{L(\aut)}$. Moreover, if she has a finite-state winning strategy for $\game(L(\autb))$ with $n$ states, then she has a delay-aware finite-state winning strategy for $\blockgame{d}{L(\aut)}$ with $n $ states.
	
\end{enumerate}
\end{theorem}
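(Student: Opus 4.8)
The plan is to prove both parts by an explicit strategy transfer, using Remark~\ref{remark-powersetcharac} as the dictionary between a pair lying in a transition summary and the existence of a matching completion. For Part~\ref{thm-main-delay2delayfree} I would start with a winning strategy $\stratO\colon\SigmaI^+\to\SigmaO$ in $\delaygame{L(\aut)}$ and synthesise a winning strategy $\stratO'$ in $\game(L(\autb))$. When Player~$I$ announces classes $S_0S_1S_2\cdots\in\R$, I realise them by a concrete input word: inductively pick a representative $w_i\in S_i$ (depending on $S_0\cdots S_i$) and let $\alpha=w_0w_1w_2\cdots$. Running $\stratO$ against $\alpha$ (cut into $f$-blocks) yields an output $\beta$ with $\binom{\alpha}{\beta}\in L(\aut)$; let $\rho=\rho_0\rho_1\rho_2\cdots$ be the induced accepting run split along the boundaries $\size{w_0\cdots w_{i-1}}$. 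I then define $\stratO'(S_0\cdots S_i)=(q_i,m_i)$, where $q_i$ is the state at the $i$-th boundary and $m_{i+1}=s_\mon(\rho_i)$, with the forced opening move $(q_0,m_0)=(q_\initmark,\bot)$. Local consistency $(q_i,m_i)\in r_{S_{i-1}}(q_{i-1})$ is then exactly Remark~\ref{remark-powersetcharac} applied to $w_{i-1}$ with the completion read off $\beta$.

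The hard part will be two entangled issues. First, $\stratO'(S_0\cdots S_i)$ must depend on $S_0\cdots S_i$ only, which imposes a timing constraint: $\stratO$ must have committed its outputs over block $i-1$ by the time the input $w_0\cdots w_i$ is available. An $f$-delay game grants a lookahead of at least $f(0)-1$, but for non-constant $f$ this lookahead may grow without bound, so I would choose the $w_i$ long enough to absorb it, which is possible precisely because every class in $\R$ is infinite. Second, long representatives make the pieces $\rho_i$ unbounded, and then $m_1m_2m_3\cdots=s_\mon((\rho_i)_{i})$ need not be of the bounded-piece shape demanded by $s_\mon(\acc)$. This is where strong aggregation is essential: re-realising each $S_i$ by a fixed short representative and invoking Remark~\ref{remark-powersetcharac} (equal class, hence equal summary) produces a second run $\rho''$ with uniformly bounded pieces and the \emph{same} aggregation; strong aggregation applied to $\rho\in\acc$ and $\rho''$ gives $\rho''\in\acc$, so $m_1m_2m_3\cdots\in s_\mon(\acc)$ and $\stratO'$ wins. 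I expect this timing-versus-boundedness tension to be the crux of the whole theorem.

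For Part~\ref{thm-main-delayfree2block} I would first establish a pumping lemma. Because $r_{uv}(q)=\delta_P^+(r_u(q),v)$ depends only on $r_u(q)$ and $v$, the relation $\equiv$ is a right congruence, and since its index is strictly below $d=2^{\size{Q}^2\size{M_\bot}}$, any word of length $d$ has two prefixes with equal summary and can be pumped; hence its $\equiv$-class is infinite, i.e.\ in $\R$. Thus in $\blockgame{d}{L(\aut)}$ every input block $\block{a_i}\in\SigmaI^d$ gives a legal abstract move $S_i=\equivclass{\block{a_i}}$. Feeding these to a winning $\stratO'$ for $\game(L(\autb))$, its responses $(q_i,m_i)$ are by Remark~\ref{remark-powersetcharac} witnessed by $(q_{i-1},q_i,m_i)$-completions of the true blocks, which Player~$O$ plays as her output blocks; the dummy first move and one-block lookahead of the abstract game match Player~$I$ being one block ahead. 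The induced run has pieces of length $d$ and aggregation $m_1m_2m_3\cdots\in s_\mon(\acc)$, so strong aggregation yields acceptance and Player~$O$ wins the block game. For the finite-state claim no extra states are needed: I keep the state set of the transducer for $\stratO'$, advance it by one class per input block, and note that $(q_i,m_i)$ and $(q_{i+1},m_{i+1})$ are recoverable by one and two simulated steps of $\stratO'$ from the stored state and the blocks $\block{a_i},\block{a_{i+1}}$; the output function then emits a $(q_i,q_{i+1},m_{i+1})$-completion of $\block{a_i}$. Since all additional computation sits in the (uncounted) transition and output functions, the state count remains $n$.
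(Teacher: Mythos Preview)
Your proposal is correct and follows essentially the same route as the paper. The paper packages your ``re-realise by short representatives'' step as a separate closure lemma (Lemma~\ref{lemma-saccclosure}, proved via $s$-profile representatives of run pieces rather than $\equiv$-class representatives of input words), and it states your pumping argument as Remark~\ref{remark-equivbounds}(\ref{remark-equivbounds-finitelymanynotinR}) without spelling out the right-congruence derivation; but the strategy transfers, the timing argument via long representatives from infinite classes, and the finite-state bookkeeping are all the same.
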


Before we prove these results, we need to establish a closure property of the sets~$s(\acc)$ in case $s$ is a strong aggregation for an $\omega$-automaton with acceptance condition~$\acc$. Recall that we defined 
\[
s(\acc) = \set{
s((\pi_i)_{i\in\nats}) \mid \pi_0 \pi_1 \pi_2 \cdots \in \acc \text{ is an accepting run of }\aut \text{ with } \sup\nolimits_i \size{\pi_i} < \infty},
\]
i.e., $s(\acc)$ only contains the aggregations of decompositions into pieces of bounded length. However, if $s$ is strong, then this restriction is not essential: the $\pi_i$ in the following lemma are not required to be of bounded length.

\begin{lemma}
\label{lemma-saccclosure}
Let $s$ be a strong aggregation for an $\omega$-automaton~$\aut$ with acceptance condition~$\acc$ and let $\pi_0 \pi_1 \pi_2 \cdots \in \acc$. Then, $s((\pi_i)_{i\in\nats}) \in s(\acc)$.
\end{lemma}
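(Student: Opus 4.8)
The plan is to exploit the one-directional strong-aggregation implication in the favourable direction. I already possess an accepting run $\rho = \pi_0 \pi_1 \pi_2 \cdots \in \acc$, and I only need to produce \emph{some} run $\rho' = \pi_0' \pi_1' \pi_2' \cdots$ whose decomposition into pieces is of bounded length and satisfies $s((\pi_i')_{i\in\nats}) = s((\pi_i)_{i\in\nats})$. Once such a $\rho'$ is in hand, the defining property of a strong aggregation (applied with $\rho$ on the unbounded side and $\rho'$ on the bounded side, using $\sup_i \size{\pi_i'} < \infty$ and the matching aggregations) immediately yields $\rho' \in \acc$; and then $s((\pi_i')_{i\in\nats}) = s((\pi_i)_{i\in\nats})$ lies in $s(\acc)$ directly by the definition of $s(\acc)$. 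Thus the entire content of the lemma is the \emph{construction} of $\rho'$: I never have to argue accepting-ness of $\rho'$ by hand.

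First I would record the junction states of the given decomposition. Let $p_0$ be the first state of $\pi_0$ and, for $i \ge 1$, let $p_i$ be the state shared by the end of $\pi_{i-1}$ and the start of $\pi_i$, so that each $\pi_i$ is a piece running from $p_i$ to $p_{i+1}$. To each index $i$ I associate the \emph{signature} $\sigma(i) = (p_i, p_{i+1}, s(\pi_i)) \in Q \times Q \times M$. The crucial point is finiteness: since $Q$ and $M$ are finite, only finitely many signatures occur. For every signature $\tau$ realised by at least one index, I fix once and for all a representative piece $\hat\pi_\tau := \pi_j$ for some (say, the least) $j$ with $\sigma(j) = \tau$. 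As only finitely many representatives are chosen, $B = \max_\tau \size{\hat\pi_\tau}$ is finite.

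Then I would set $\pi_i' = \hat\pi_{\sigma(i)}$. By construction of the signature, $\hat\pi_{\sigma(i)}$ is a piece running from $p_i$ to $p_{i+1}$ with $s(\hat\pi_{\sigma(i)}) = s(\pi_i)$; hence $\pi_i'$ runs from $p_i$ to $p_{i+1}$, satisfies $s(\pi_i') = s(\pi_i)$, and has $\size{\pi_i'} \le B$. Because consecutive pieces $\pi_i'$ and $\pi_{i+1}'$ meet at the common state $p_{i+1}$, the concatenation $\rho' = \pi_0' \pi_1' \pi_2' \cdots$ is a genuine run of $\aut$; its decomposition $(\pi_i')_{i\in\nats}$ has $\sup_i \size{\pi_i'} \le B < \infty$; and $s((\pi_i')_{i\in\nats}) = s(\pi_0')s(\pi_1')\cdots = s(\pi_0)s(\pi_1)\cdots = s((\pi_i)_{i\in\nats})$. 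Applying strongness and the definition of $s(\acc)$ as described above then concludes the proof.

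I expect the only genuine subtlety — and the step worth stating carefully — to be this finiteness/pigeonhole observation yielding the uniform length bound $B$: the original pieces $\pi_i$ may be arbitrarily long, yet they fall into finitely many signature classes, and replacing each piece by a fixed representative of its class simultaneously preserves the endpoints (so that $\rho'$ is still a valid run traversing exactly the state sequence $p_0, p_1, p_2, \ldots$) and preserves the aggregation letter $s(\pi_i)$. Everything else is routine bookkeeping, and in particular no assumption that $s$ is computed by a monitor is needed.
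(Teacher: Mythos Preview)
Your proof is correct and is essentially identical to the paper's own argument: what you call the \emph{signature} $(p_i,p_{i+1},s(\pi_i))$ is precisely the paper's \emph{$s$-profile}, and replacing each piece by a fixed representative of its profile class is exactly the construction the paper carries out. The only cosmetic difference is that you select representatives from among the $\pi_i$ actually occurring in the given decomposition, whereas the paper picks an arbitrary representative of each profile class; both choices yield the same bounded-length replacement run and the same conclusion.
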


\begin{proof}
The $s$-profile of a finite run~$\pi$ is the tuple~$(q, m, q')$ where $q$ is the state $\pi$ starts in, $m = s(\pi)$, and $q'$ is the state $\pi$ ends in. Having the same $s$-profile is an equivalence relation over finite runs of finite index. For each equivalence class~$S$ of this relation, let $\rep(S)$ be an arbitrary, but fixed, element of $S$. For notational convenience, define $\rep(\pi) = \rep(S)$ for the unique equivalence class~$S$ with $\pi \in S$.

Now, consider the sequence~$\rep(\pi_0) \rep(\pi_1) \rep(\pi_2) \cdots$. By construction, it is also a run of $\aut$ and we have $s((\pi_i)_{i\in\nats}) = s((\rep(\pi_i))_{i\in\nats})$. As the $\rep(\pi_i)$ are of bounded length (after all, there are only finitely many representatives), $s$ being a strong aggregation yields $\rep(\pi_0) \rep(\pi_1) \rep(\pi_2) \cdots \in \acc$. Hence,
$s((\pi_i)_{i\in\nats}) = s((\rep(\pi_i))_{i\in\nats}) \in s(\acc)$.
\end{proof}

Also, we need some basic properties of equivalence classes~$S \in R$. They follow from the fact that every equivalence class~$S$, which is a language of finite words, is recognized by a deterministic finite automaton of size~$d$ (as defined above) obtained using the state set~$(\pow{Q \times M_\bot})^{{Q}}$ to simulate $\delta_P^+$ starting from the states of the form~$\set{(q, \bot)}$.

\begin{remark}\label{remark-equivbounds}
\hfill
\begin{enumerate}
	\item\label{remark-equivbounds-finitelymanynotinR} $\equivclass{w} \in \R$ for every $w$ of length at least $d$. 
	\item\label{remark-equivbounds-density} Let $S \in\R$. For every $n$, $S$ contains a word~$w$ of length~$n \le \size{w} \le n+d$.
\end{enumerate}
\end{remark}

Now, we are ready to prove Theorem~\ref{thm-main}.

\begin{proof} The argument is a further generalization of similar constructions for parity automata (with or without costs) and max-automata (cp.~\cite{KleinZimmermann16,Zimmermann16,Zimmermann17}).

\ref{thm-main-delay2delayfree}) Let $\stratO^f \colon \SigmaI^+ \rightarrow \SigmaO$ be a winning strategy for Player~$O$ in $\delaygame{L(\aut)}$ for some fixed~$f$. For the sake of readability, we denote $\delaygame{L(\aut)}$ by $\delaygameabbr$ and $\game(L(\autb))$ by $\Gamma$. We describe how to simulate a play in $\Gamma$ by a play in $\delaygameabbr$ to transform $\stratO^f$ into a winning strategy~$\stratO$ for Player~$O$ in $\Gamma$. 

To this end, let Player~$I$ pick $S_0 \in \R$ in $\Gamma$, which has to be answered by Player~$O$ by picking~$(q_0, m_0) = (q_\initmark, \bot)$. Thus, we define $\stratO(S_0) = (q_0, m_0)$. Next, Player~$I$ picks some $S_1 \in \R$.

To simulate this, pick some $x_0 \in S_0$ satisfying $\size{x_0} \ge f(0)$, which exists due to $S_0$ being infinite by virtue of being in $\R$. Similarly, we pick some $x_1 \in S_1$ satisfying $\size{x_0x_1} \ge \sum_{j=0}^{\size{x_0}-1}f(j)$, which again exists due to $S_1$ being infinite. 

Now, assume Player~$I$ starts a play by picking the letters of the prefix of $x_0 x_1$ of length~$\sum_{j=0}^{\size{x_0}-1}f(j)$ during the first $\size{x_0}$ rounds of $\delaygameabbr$. By the choice of $\size{x_1}$, $x_0 x_1$ is long enough to do so. Let $y_0$ be the answer of Player~$O$ according to $\stratO$ during these $\size{x_0}$ rounds, i.e., $\size{y_0} = \size{x_0}$. 

Thus, we are in the following situation for $i = 1$:
\begin{itemize}
	\item In $\Gamma$, the players have produced the play prefix~$S_0 (q_0, m_0) \cdots S_{i-1} (q_{i-1}, m_{i-1}) S_i$.
	\item In $\delaygameabbr$, Player~$I$ has picked a prefix of $x_0 \cdots x_i$ while Player~$O$ has picked $y_0 \cdots y_{i-1}$ according to $\stratO^f$ such that $\size{y_0 \cdots y_{i-1}} = \size{x_0 \cdots x_{i-1}}$. Furthermore, we have $\equivclass{x_j} = S_j$ for every $j \le i$.
\end{itemize}

Now, let $i >0$ be arbitrary. Let $q_i$ be the state reached by $\aut$ when processing ${x_{i-1} \choose y_{i-1}}$ when starting in $q_{i-1}$, let $\pi_{i-1}$ be the corresponding run, and define $m_i = s_\mon(\pi_{i-1})$. Then, by definition of $r_{S_i}$ and by Remark~\ref{remark-powersetcharac}, $\stratO(S_0 \cdots S_i) = (q_i, m_i)$ is a legal move in $\Gamma$, which is answered by Player~$I$ picking some $S_{i+1} \in \R$. Again, we pick some $x_{i+1} \in S_{i+1}$ such that  $\size{x_0 \cdots x_{i+1}} \ge \sum_{j=0}^{\size{x_0 \cdots x_i}-1}f(j)$ and consider the play prefix of $\delaygameabbr$ where Player~$I$ starts by picking the letters of the prefix of $x_0 \cdots x_{i+1}$ of length~$ \sum_{j=0}^{\size{x_0 \cdots x_i}-1}f(j)$ during the first $\size{x_0 \cdots x_i}$ rounds, which is a continuation of the previously defined one. Player~$O$ answers the letters of $x_i$ by some $y_i$ of the same length. Thus, we are in the situation above for $i+1$, which concludes the inductive definition of~$\stratO$. 

To conclude, we show that $\stratO$ is indeed winning for Player~$O$ in $\Gamma$. So, let $w = S_0 (q_0, m_0) S_1 (q_1, m_1) S_2 (q_2, m_2) \cdots$ be a play consistent with $\stratO$ and let $w^f = { x_0\choose y_0}{ x_1\choose y_1}{x_2 \choose y_2}\cdots$ be the outcome of the simulated play of $\delaygameabbr$ as described above. By construction, $w^f$ is in $L(\aut)$, as the simulated play is consistent with the winning strategy~$\stratO^f$. 

Let $\pi_{i}$ be defined as above, i.e., $\pi_0 \pi_1 \pi_2 \cdots$ is the run of $\aut$ on $w^f$ and therefore accepting. By construction, we have $s_\mon(\pi_{i}) = m_{i+1}$. Applying Lemma~\ref{lemma-saccclosure} yields $m_1 m_2 m_3 \cdots = s_\mon((\pi_{i})_{i \in \nats}) \in s_\mon(\acc)$. Thus, $w$ is indeed winning for Player~$O$.

\ref{thm-main-delayfree2block}) Let $\stratO$ be a winning strategy for Player~$O$ in $\game(L(\autb))$. Again, for the sake of readability, we denote $\game(L(\autb))$ by $\Gamma$ and $\blockgame{d}{L(\aut)}$ by $\Gamma^d$. As before, we simulate a play in $\Gamma^d$ by a play in $\Gamma$ to transform $\stratO$ into a winning strategy~$\stratO^d$ for Player~$O$ in $\Gamma^d$. In the following proof, all blocks~$\block{a_i}$ are in $\SigmaI^d$ and all $\block{b_i}$ are in $\SigmaO^d$. 

Thus, let Player~$I$ pick $\block{a_0}$ and $\block{a_1}$ during the first round in $\Gamma^d$ and define~$S_0 = \equivclass{\block{a_0}}$, $(q_0, m_0) = \stratO(S_0)$, $S_1 = \equivclass{\block{a_1}}$, and $(q_1,m_1) = \stratO(S_0S_1)$. 

Now, we are in the following situation for $i = 1$. 
\begin{itemize}
	\item In $\Gamma^d$ Player~$I$ has picked $\block{a_0} \cdots \block{a_i}$ and Player~$O$ has picked~$\block{b_0} \cdots \block{b_{i-2}}$ (which is empty for $i=1$).
	\item In $\Gamma$, we have constructed the play prefix~$S_0 (q_0, m_0) \cdots S_{i-1} (q_{i-1}, m_{i-1}) S_i (q_i, m_i)$ that is consistent with $\stratO$ and satisfies $S_j = \equivclass{\block{a_j}}$ for every $j \le i$. 
\end{itemize}

Now, let $i > 0$ be arbitrary. By definition of $\Gamma$, we have $(q_i, m_i) \in r_{\block{a_{i-1}}}(q_{i-1})$. Thus, by definition of $r_{\block{a_{i-1}}}$ and Remark~\ref{remark-powersetcharac} there is a $\block{b_{i-1}}$ such that the run~$\pi_{i-1}$ of $\aut$ processing~${\block{a_{i-1}} \choose \block{b_{i-1}}}$ from $q_{i-1}$ ends in $q_i$ and satisfies $s_\mon(\pi_{i-1}) = m_i$. We define $\stratO^d(\block{a_0} \cdots \block{a_i}) = \block{b_{i-1}}$. This move is answered by Player~$I$ picking some block~$a_{i+1}$, which again induces~$S_{i+1} = \equivclass{\block{a_{i+1}}}$. Applying $\stratO$ yields~$(q_{i+1}, m_{i+1}) = \stratO(S_0 \cdots S_{i+1})$. Thus, we are in the situation described above for $i+1$, which completes the inductive definition of $\stratO^d$.

Note that if $\stratO$ is implemented by a transducer~$\strataut$ with $n$ states, then $\stratO^d$ can easily be implemented by an automaton with $n $ states, which is obtained from $\strataut$ as follows: we use the same set of states so that processing~$\block{a_0} \cdots \block{a_{i-2}}$ leads to the state reached when processing~$\equivclass{\block{a_{0}}} \cdots \equivclass{\block{a_{i-2}}}$, call it~$q$. Now, assume we have two additional blocks~$\block{a_{i-1}}$ and $\block{a_{i}}$ and have to compute the block~$\block{b_{i-1}} = \stratO^d(\block{a_{0}} \cdots \block{a_{i}})$ as defined above. This block only depends on the state~$q$ of the automaton implementing the strategy, on the states~$q_{i-1}$ and $q_i$ of $\aut$, on $m_i$, and on $\block{a_{i-1}}$. All this information can be computed from $q$ and the moves~$\equivclass{\block{a_{i-1}}}$ and $\equivclass{\block{a_{i}}}$ of Player~$I$ in the simulating play. 

It remains to show that $\stratO^d$ is indeed a winning strategy for Player~$O$ in $\Gamma^d$. To this end, let $w^d = \block{a_0}\block{a_1} \block{b_0} \block{a_2} \block{b_1} \cdots$ a play that is consistent with $\stratO^d$. Furthermore, let $S_0 (q_0, m_0) S_1 (q_1, m_1) S_2 (q_2, m_2) \cdots$ be the simulated play in $\Gamma$ constructed as described above, which is consistent with $\stratO$. Therefore, it is winning for Player~$O$, i.e., $m_1 m_2 m_3 \cdots \in s_\mon(\acc)$.

Let the finite runs~$\pi_i$ be defined as above, i.e., $\pi_0 \pi_1 \pi_2 \cdots $ is the run of $\aut$ on $w^d$ and the part~$\pi_i$ processes ${\block{a_i} \choose \block{b_i}}$. Thus, the length of each $\pi_i$ is equal to $d$. Furthermore, we have $s_\mon(\pi_i) = m_{i+1}$ for every $i$. From $s_\mon((\pi_i)_{i\in \nats}) = m_1 m_2 m_3 \cdots \in s_\mon(\acc)$ and $s_\mon$ being a weak aggregation (as it is strong), we conclude that $\pi_0 \pi_1 \pi_2 \cdots $ is accepting. Hence, $w^d \in L(\aut)$, i.e., Player~$O$ wins the play. 
\end{proof}

By applying both implications and Item~\ref{lemma-delaygamesvsblockgames-block2delay} of Lemma~\ref{lemma-delaygamesvsblockgames}, we obtain upper bounds on the complexity of determining for a given~$\aut$ whether Player~$O$ wins $\delaygame{L(\aut)}$ for some $f$ and on the constant lookahead necessary to do so.

\begin{corollary}
Let $\aut$, $\mon$, $\autb$, and $d$ be as in Theorem~\ref{thm-main}. Then, the following are equivalent:
\begin{enumerate}
	\item Player~$O$ wins $\delaygame{L(\aut)}$ for some delay function~$f$.
	\item Player~$O$ wins $\delaygame{L(\aut)}$ for the constant delay function~$f$ with $f(0) = 2d$.
	\item Player~$O$ wins $\game(L(\autb))$.
\end{enumerate}
\end{corollary}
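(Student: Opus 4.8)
The plan is to establish the corollary by chaining together the two implications of Theorem~\ref{thm-main} with the two items of Lemma~\ref{lemma-delaygamesvsblockgames}, closing a cycle of implications that yields the equivalence of all three statements. Since the statements form a cycle, it suffices to prove the implications $(1) \Rightarrow (3) \Rightarrow (2) \Rightarrow (1)$, or any arrangement that visits every node. I would exploit the fact that the intermediate object --- the block game $\blockgame{d}{L(\aut)}$ --- serves as the bridge between the delay-free game $\game(L(\autb))$ and the constant-delay game $\delaygame{L(\aut)}$ with $f(0) = 2d$.

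Concretely, I would argue as follows. For $(1) \Rightarrow (3)$, apply Theorem~\ref{thm-main}.\ref{thm-main-delay2delayfree} directly: if Player~$O$ wins $\delaygame{L(\aut)}$ for \emph{some} delay function~$f$, then she wins $\game(L(\autb))$. For $(3) \Rightarrow (2)$, first invoke Theorem~\ref{thm-main}.\ref{thm-main-delayfree2block} to conclude that a win in $\game(L(\autb))$ yields a win in the block game~$\blockgame{d}{L(\aut)}$; then apply Lemma~\ref{lemma-delaygamesvsblockgames}.\ref{lemma-delaygamesvsblockgames-block2delay}, which converts a win in $\blockgame{d}{L(\aut)}$ into a win in $\delaygame{L(\aut)}$ for the constant delay function~$f$ with $f(0) = 2d$ --- exactly the delay function named in statement~$(2)$. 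Finally, $(2) \Rightarrow (1)$ is immediate, since statement~$(2)$ is a special case of statement~$(1)$ (the constant delay function with $f(0) = 2d$ is one particular choice of delay function). This closes the cycle.

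The only point requiring a moment's care is bookkeeping the block length: Theorem~\ref{thm-main} is stated with $d = 2^{\size{Q}^2 \cdot \size{M_\bot}}$, and the corollary's statement~$(2)$ names the constant delay function with $f(0) = 2d$ for this same $d$. I would make explicit that the $d$ produced by Theorem~\ref{thm-main}.\ref{thm-main-delayfree2block} (the block length of the block game) is precisely the $d$ fed into Lemma~\ref{lemma-delaygamesvsblockgames}.\ref{lemma-delaygamesvsblockgames-block2delay} (whose hypothesis is a win in $\blockgame{d}{L(\aut)}$ and whose conclusion uses $f(0) = 2d$), so that the constants match and the resulting delay function coincides with the one in statement~$(2)$.

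I do not anticipate a genuine obstacle here, as every implication is a direct citation of an earlier result; the work lies entirely in verifying that the objects $\aut$, $\mon$, $\autb$, and $d$ are the \emph{same} throughout (which the corollary guarantees by its phrasing ``as in Theorem~\ref{thm-main}''), and that $s_\mon$ is a strong aggregation so that both parts of Theorem~\ref{thm-main} and the closure Lemma~\ref{lemma-saccclosure} apply. The mild subtlety worth flagging is the asymmetry of the hypotheses: statement~$(1)$ quantifies existentially over \emph{arbitrary} delay functions, while the path through the block game lands specifically on a \emph{constant} one with $f(0) = 2d$; the equivalence therefore also records the nontrivial fact that arbitrary lookahead confers no advantage over this particular constant lookahead, which is precisely what makes the corollary informative rather than a triviality.
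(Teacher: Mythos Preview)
Your proposal is correct and matches the paper's own argument: the corollary is obtained precisely by combining the two implications of Theorem~\ref{thm-main} with Item~\ref{lemma-delaygamesvsblockgames-block2delay} of Lemma~\ref{lemma-delaygamesvsblockgames}, closing the cycle via the trivial implication~$(2) \Rightarrow (1)$. The remark about Lemma~\ref{lemma-saccclosure} is slightly superfluous here, since that lemma is already absorbed into the proof of Theorem~\ref{thm-main} and need not be re-invoked for the corollary.
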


Thus, determining whether, given $\aut$, Player~$O$ wins $\delaygame{L(\aut)}$ for some $f$ is achieved by determining the winner of the Gale-Stewart game~$\game(L(\autb))$ and, independently, we obtain an exponential upper bound on the necessary constant lookahead (in $\size{Q} \cdot \size{M}$).

\begin{example}
Continuing our example for the parity acceptance condition, we obtain the exponential upper bound~$2^{\size{Q}^2\cdot \size{\col(Q)}+2}$ on the constant lookahead necessary to win the delay game and an exponential-time algorithm for determining the winner, as $\autb$ has exponentially many states, but the same number of colors as $\aut$. Both upper bounds are tight~\cite{KleinZimmermann16}.
\end{example}

In case there is no strong aggregation for $\aut$, but only a weak  one, one can show that finite-state strategies exist, if Player~$O$ wins with respect to some constant delay function at all.

\begin{theorem}\label{thm-main2}

Let $\aut$ be an $\omega$-automaton and let $\mon$ be a monitor for $\aut$ such that $s_\mon$ is a weak aggregation for $\aut$, let $\autb$ be constructed as above, and define $d = 2^{\size{Q}^2\cdot\size{M_\bot}}$.
\begin{enumerate}
	
	\item\label{thm-main2-delay2delayfree} If Player~$O$ wins $\delaygame{L(\aut)}$ for some constant delay function~$f$, then she also wins $\game(L(\autb))$.
	
	\item\label{thm-main2-delayfree2block} If Player~$O$ wins $\game(L(\autb))$, then she also wins the block game~$\blockgame{d}{L(\aut)}$. Moreover, if she has a finite-state winning strategy for $\game(L(\autb))$  with $n$ states, then she has a delay-aware finite-state winning strategy for $\blockgame{d}{L(\aut)}$ with $n $ states.
		
\end{enumerate}

\end{theorem}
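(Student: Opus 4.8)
The plan is to mirror the proof of Theorem~\ref{thm-main} almost verbatim, adapting only the two places where the strength of the aggregation was used. Theorem~\ref{thm-main2} differs from Theorem~\ref{thm-main} in exactly two respects: the hypothesis now asserts $s_\mon$ is merely a \emph{weak} aggregation, and correspondingly Item~\ref{thm-main2-delay2delayfree} weakens the premise from ``for some delay function~$f$'' to ``for some \emph{constant} delay function~$f$''. So I would first isolate where Theorem~\ref{thm-main} spent its strength, and then check that the weaker hypotheses still carry the argument through.

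For Item~\ref{thm-main2-delayfree2block}, I expect the proof to be literally identical to the proof of Item~\ref{thm-main-delayfree2block} of Theorem~\ref{thm-main}. Indeed, inspecting that argument, the runs~$\pi_i$ constructed there each process a block~${\block{a_i} \choose \block{b_i}}$ and therefore have length exactly~$d$; in particular $\sup_i \size{\pi_i} < \infty$. The concluding step already invokes ``$s_\mon$ being a weak aggregation (as it is strong)'' to deduce from $s_\mon((\pi_i)_{i\in\nats}) \in s_\mon(\acc)$ that $\pi_0 \pi_1 \pi_2 \cdots$ is accepting. Since we are given a weak aggregation directly, and since both the reference run witnessing membership in $s_\mon(\acc)$ (by definition of $s_\mon(\acc)$) and the run~$\pi_0\pi_1\pi_2\cdots$ (blocks of length~$d$) have bounded-length pieces, the weak-aggregation property applies unchanged. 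Thus I would simply reproduce that argument, deleting the parenthetical ``(as it is strong)''. The transducer bookkeeping transferring an $n$-state strategy for $\game(L(\autb))$ to an $n$-state delay-aware strategy for $\blockgame{d}{L(\aut)}$ is purely syntactic and does not touch the aggregation at all, so it carries over verbatim.

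The interesting adaptation is Item~\ref{thm-main2-delay2delayfree}, and here lies the main obstacle. In the proof of Theorem~\ref{thm-main}, Item~\ref{thm-main-delay2delayfree}, two facts specific to strong aggregations were used: the hypothesis allowed an \emph{arbitrary} delay function~$f$, and the final step applied Lemma~\ref{lemma-saccclosure} to conclude $m_1 m_2 m_3 \cdots = s_\mon((\pi_i)_{i\in\nats}) \in s_\mon(\acc)$ even though the pieces~$\pi_i$ (each processing ${x_i \choose y_i}$ with $\size{x_i} = \size{x_i}$ governed by the infinite class~$S_i$ and the delay function) need not be of bounded length. Lemma~\ref{lemma-saccclosure} is precisely the closure property that fails for merely weak aggregations. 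To repair this, I would exploit the restriction to \emph{constant} delay functions: when $f$ is constant, by Lemma~\ref{lemma-delaygamesvsblockgames}\ref{lemma-delaygamesvsblockgames-delay2block} a win in $\delaygame{L(\aut)}$ yields a win in $\blockgame{f(0)}{L(\aut)}$, and in a block game the player produces blocks of fixed length~$f(0)$. One can then run the same simulation but choose each representative~$x_i \in S_i$ so that the induced run pieces~$\pi_i$ all have length bounded by~$f(0)+d$ — using Remark~\ref{remark-equivbounds}\ref{remark-equivbounds-density}, every class in~$\R$ contains words of every sufficiently large length within a window of size~$d$ — so that $\sup_i \size{\pi_i} < \infty$ holds outright. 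With bounded-length pieces on \emph{both} runs, membership $m_1 m_2 m_3 \cdots \in s_\mon(\acc)$ follows directly from the definition of $s_\mon(\acc)$ without appealing to the failed closure lemma, and the weak-aggregation property then transfers acceptance. The hard part is exactly this reorganization of the simulation so that the reference run has bounded pieces; once that bound is secured, the rest of the argument is identical to Theorem~\ref{thm-main}.
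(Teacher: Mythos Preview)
Your proposal is correct and matches the paper's proof essentially verbatim: Item~\ref{thm-main2-delayfree2block} is literally the proof of Theorem~\ref{thm-main}, Item~\ref{thm-main-delayfree2block} (which already only needed a weak aggregation), and for Item~\ref{thm-main2-delay2delayfree} the paper does exactly what you describe, re-running the simulation of Theorem~\ref{thm-main}, Item~\ref{thm-main-delay2delayfree} while exploiting constancy of~$f$ to pick the~$x_i$ of bounded length so that $s_\mon((\pi_i)_i) \in s_\mon(\acc)$ follows straight from the definition rather than from Lemma~\ref{lemma-saccclosure}. The detour through Lemma~\ref{lemma-delaygamesvsblockgames}\ref{lemma-delaygamesvsblockgames-delay2block} you sketch is unnecessary (the paper stays in~$\delaygame{L(\aut)}$ and uses Remark~\ref{remark-equivbounds}\ref{remark-equivbounds-density} directly to bound the representatives), but it is harmless.
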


\begin{proof}
The second implication is the same as the second one in Theorem~\ref{thm-main}, in whose proof we only required $s$ to be a weak  aggregation, which is the setting here. Hence, we only have to consider the first implication.

To this end, we construct a strategy~$\stratO$ for Player~$O$ in $\game(L(\autb))$ from a winning strategy~$\stratO^f$ for Player~$O$ in $\delaygame{L(\aut)}$ as described in the proof of Item~\ref{thm-main-delay2delayfree} of Theorem~\ref{thm-main}. The only difference is that here we can ensure that the length of the $x_i$ is bounded, as $f$ is constant. This allows us to replace the invocation of Lemma~\ref{lemma-saccclosure} and directly apply the definition of $s_\mon(\acc)$ to show that the plays consistent with $\stratO$ are winning for Player~$O$. 
\end{proof}

Again, we obtain upper bounds on the solution complexity (here, with respect to constant delay functions) and on the necessary constant lookahead.

\begin{corollary}
Let $\aut$, $\mon$, $\autb$, and $d$ be as in Theorem~\ref{thm-main2}. Then, the following are equivalent:
\begin{enumerate}
	\item Player~$O$ wins $\delaygame{L(\aut)}$ for some constant delay function~$f$.
	\item Player~$O$ wins $\delaygame{L(\aut)}$ for the constant delay function~$f$ with $f(0) = 2d$.
	\item Player~$O$ wins $\game(L(\autb))$.
\end{enumerate}
\end{corollary}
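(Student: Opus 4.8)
The plan is to establish the three-way equivalence by a single cycle of implications $(1) \Rightarrow (3) \Rightarrow (2) \Rightarrow (1)$, reusing the two transfer results of the weak-aggregation setting (Theorem~\ref{thm-main2}) together with the block-game-to-delay-game transfer of Lemma~\ref{lemma-delaygamesvsblockgames}. Since $s_\mon$ is assumed to be a weak aggregation for $\aut$, precisely these ingredients are available, so no further automaton-theoretic work is required: the corollary is a bookkeeping combination of statements already proved.

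First I would observe that $(1) \Rightarrow (3)$ is literally Item~\ref{thm-main2-delay2delayfree} of Theorem~\ref{thm-main2}, which turns a winning strategy for Player~$O$ in $\delaygame{L(\aut)}$ for some constant~$f$ into one for the delay-free game $\game(L(\autb))$. Next, for $(3) \Rightarrow (2)$ I would chain two results: from $(3)$, Item~\ref{thm-main2-delayfree2block} of Theorem~\ref{thm-main2} yields a winning strategy in the block game $\blockgame{d}{L(\aut)}$, and feeding this into Item~\ref{lemma-delaygamesvsblockgames-block2delay} of Lemma~\ref{lemma-delaygamesvsblockgames} (instantiated with $L = L(\aut)$ and the given block length~$d$) produces a winning strategy for Player~$O$ in $\delaygame{L(\aut)}$ for the constant delay function~$f$ with $f(0) = 2d$, which is exactly statement~$(2)$. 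Finally, $(2) \Rightarrow (1)$ is immediate, because the constant delay function with $f(0) = 2d$ is itself a constant delay function, so winning for it witnesses winning for \emph{some} constant delay function.

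I expect no genuine obstacle here; the only point demanding care is the word \emph{constant} in statement~$(1)$. Unlike the strong-aggregation corollary following Theorem~\ref{thm-main}, where $(1)$ quantifies over arbitrary delay functions, the weak-aggregation hypothesis only permits arguing from a constant delay function: the simulation in Item~\ref{thm-main2-delay2delayfree} of Theorem~\ref{thm-main2} relies on the pieces of the play being of bounded length so that the boundedness requirement in the definition of a weak aggregation is satisfied — this is exactly where the strong case instead appeals to Lemma~\ref{lemma-saccclosure}. Consequently the equivalence is necessarily phrased for constant delay functions, and the cycle above establishes it.
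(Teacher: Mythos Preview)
Your proof is correct and follows precisely the approach the paper intends: the paper states this corollary without an explicit proof, merely indicating (via ``Again, we obtain\ldots'') that one combines both implications of Theorem~\ref{thm-main2} with Item~\ref{lemma-delaygamesvsblockgames-block2delay} of Lemma~\ref{lemma-delaygamesvsblockgames}, exactly as you do. Your closing remark about why statement~(1) must be restricted to constant delay functions in the weak-aggregation setting is also accurate and matches the distinction the paper draws between Theorems~\ref{thm-main} and~\ref{thm-main2}.
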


\section{Discussion}
\label{sec-disc}
Let us compare the two approaches presented in the previous section with three use cases: delay games whose winning conditions are given by deterministic parity automata, by deterministic Muller automata, and by LTL formulas. All formalisms only define $\omega$-regular languages, but vary in their succinctness. 

The following facts about arena-based games will be useful for the comparison:
\begin{itemize}
	\item The 
	winner of an arena-based parity game 
 has a positional winning strategy~\cite{EmersonJutla91,Mostowski91}, i.e., a finite-state strategy with a single state.
	\item 
	The winner of an arena-based Muller game has a finite-state strategy with $n!$ states~\cite{McNaughton93}, where $n$ is the number of vertices of the arena.
	\item 
	The winner of an arena-based LTL game has a finite-state strategy with $2^{2^{\bigo(\size{\phi})}}$ states~\cite{PnueliRosner89a}, where $\phi$ is the formula specifying the winning condition.
\end{itemize}
Also, we need the following bounds on the necessary lookahead in delay games: 
\begin{itemize}
	\item In delay games whose winning conditions are given by deterministic parity automata, exponential (in the size of the automata) constant lookahead is both sufficient and in general necessary~\cite{KleinZimmermann16}.
	\item In delay games whose winning conditions are given by deterministic Muller automata, doubly-exponential (in the size of the automata) constant lookahead is sufficient. This follows from the transformation of deterministic Muller automata into deterministic parity automata of exponential size (see, e.g.,~\cite{GraedelThomasWilke02}). However, the best lower bound is the exponential one for parity automata, which are also Muller automata.

 	\item In delay games whose winning conditions are given by LTL formulas, triply-exponential (in the size of the formula) constant lookahead is both sufficient and in general necessary~\cite{KleinZimmermann16a}.

\end{itemize}

Using these facts, we obtain the following complexity results for finite-state strategies: Figure~\ref{fig-naive} shows the upper bounds on the number of states of delay-oblivious finite-state strategies for delay games and on the number of states of delay-aware finite-state strategies for block games and upper bounds on the complexity of determining such strategies. In all three cases, the former strategies are at least exponentially larger and at least exponentially harder to compute. This illustrates the advantage of decoupling tracking the history from managing the lookahead. 

\begin{figure}[h]
\centering
\begin{tabular}{llll}
 &  parity  &  Muller & LTL \\
\toprule
\textbf{delay-oblivious}  & doubly-exp. & 
quadruply-exp.&
quadruply-exp. 
\\

\midrule
\textbf{delay-aware}   & exp.  & 
doubly-exp.&
triply-exp. 

\end{tabular}	
\caption{Memory size for delay-oblivious strategies (for delay games) and delay-aware finite-state strategies (for block games), measured in the size of the representation of the winning condition. For the sake of readability, we only present the orders of magnitude, but not exact values.}
\label{fig-naive}
\end{figure}


Finally, let us compare our approach to that of Salzmann. Fix a delay game~$\delaygame{L(\aut)}$ and assume Player~$I$ has picked $\alpha(0) \cdots \alpha(i)$ while Player~$O$ has picked $\beta(0) \cdots \beta(i')$ with $i'<i$. His strategies are similar to our delay-aware ones for block games. The main technical difference is that his strategies have access to the state reached by $\aut$ when processing ${\alpha(0) \cdots \alpha(i')  \choose \beta(0) \cdots \beta(i')}$. Thus, his strategies explicitly depend on the specification automaton~$\aut$ while ours are independent of it. In general, his strategies are therefore smaller than ours, as our transducers have to simulate~$\aut$ if they need access to the current state. On the other hand, our aggregation-based framework is more general and readily applicable to quantitative winning conditions as well, while he only presents results for selected qualitative conditions like parity, weak parity, and Muller.

\section{Succinctly Implementing Finite-state Strategies for Block Games}
\label{sec-implementingstrats}
In the previous section, we have shown how to compute delay-aware finite-state strategies for block games via a reduction to Gale-Stewart games. The transducers implementing these strategies process blocks of letters, i.e., the domains of the transition function and of the output function are (roughly) of size~$\size{\SigmaI}^d$ and $\size{\SigmaI}^{2d}$, where $d$ is the block size of the block game. It is known that even for very simple winning conditions, an exponential $d$ is necessary (measured in the size of the automaton~$\aut$ recognizing the winning condition). In this case, the representation of these transducers is at least of doubly-exponential size in $\size{\aut}$, independently of the number of states of the transducer. 

In this section, we propose a succinct notion of transducers implementing delay-aware strategies which can be significantly smaller, e.g., of constant size for the winning condition~$\Leq$ introduced in Section~\ref{sec-fsindg}, which requires Player~$O$ to copy the moves of Player~$I$. Here, the size of the domains of the transition function and of the output function grows exponentially with the block size~$d$, although the transition function and the output function of a transducer implementing a winning strategy are trivial. 

Intuitively, to obtain succinct transducers implementing strategies in block games, we implement the transition function and the output function by transducers. As already alluded to, we present examples (see Examples~\ref{ex-simple}~and~\ref{ex-jpairs-intro}) in which this representation is much smaller than the explicit representation. Furthermore, we give an upper bound on the size of such succinct transducers which is asymptotically equal to the true representation size of explicit transducers in Subsection~\ref{subsec_upperbounds}. Thus, succinct transducers are never larger than explicit ones. However, we also present an example where they cannot be smaller than explicit ones in Subsection~\ref{subsec_lowerbounds}.
Finally, we discuss the relation between block sizes and sizes of succinct transducers in Subsection~\ref{subsec_tradeoff}.

\subsection{Succinct Transducers}
\label{subsec_succincttransducers}

Formally, for a block game~$\blockgame{d}{L}$ with $L \subseteq (\SigmaI \times \SigmaO)^\omega$, we implement a finite-state strategy for Player~$O$ in $\blockgame{d}{L}$ by a transducer~$\strataut = (Q, \SigmaI, q_\initmark, \stratautDelta, \SigmaO, \stratautLambda)$, where $Q$, $\SigmaI$, $q_\initmark$, and $\SigmaO$ are defined as before in Subsection~\ref{subsec-delayaware}.
However, the transition function and the output function are now succinctly represented by transducers $\stratautDelta$ and $\stratautLambda$ which we define below, respectively.
From now on, we refer to this type of transducer as \emph{succinct transducer} and to the type introduced in Subsection~\ref{subsec-delayaware} as explicit transducer.
Regarding succinct transducers, we speak of \myquot{master states} to refer to $Q$, and we speak of \myquot{transition slave} and \myquot{output slave} to refer to $\stratautDelta$ and $\stratautLambda$, respectively.

The transition slave is a tuple $\stratautDelta = (Q_\Delta, \SigmaI, q_\initmark^\Delta, \delta, Q, \lambda)$, where $Q_\Delta$ is a finite  set of states, $q_\initmark^\Delta \colon Q \rightarrow Q_\Delta$ is a function returning an initial state, $\delta \colon Q_\Delta \times \SigmaI \rightarrow Q_\Delta$ is the transition function, and $\lambda \colon Q_\Delta \rightarrow Q$ is the output function. We say that the transition slave computes the function $\fDelta \colon Q \times \SigmaI^* \rightarrow Q$ defined by $\fDelta(q,x) = \lambda(\delta^*(q_\initmark^\Delta(q),x))$, where $\delta^*(q_\initmark^\Delta(q),x)$ is the state  of $\stratautDelta$ reached by processing $x$ from the state~$q_\initmark^\Delta(q)$ of $\stratautDelta$. 
The size of $\stratautDelta$ is defined as $|\stratautDelta| = |Q_\Delta|$.

The output slave is a tuple $\stratautLambda = (Q_\Lambda,\SigmaI,q_\initmark^\Lambda,E,\SigmaO)$, where $Q_\Lambda$ is a finite  set of states, $\SigmaI$ is the input alphabet, $q_\initmark^\Lambda \colon Q \rightarrow Q_\Lambda$ is a function returning an initial state, and $E \colon Q_\Lambda \times (\SigmaI \cup \set{\$}) \rightarrow \SigmaO^* \times Q_\Lambda$ is the deterministic transition function conveniently treated as a relation. Here, $\$$  is a fresh symbol that is used to separate input blocks.

A finite run $\pi$ of $\stratautLambda$ is a sequence \[\pi = (q_0,a_0,b_0,q_1)(q_1,a_1,b_1,q_2)\cdots(q_{i-1},a_{i-1},b_{i-1},q_i) \in E^+.\] We say that $\pi$ starts in $q_0$, ends in $q_i$, its processed input is $\inp{\pi} = a_0\cdots a_{i-1} \in (\SigmaI\cup\set{\$})^+$, and its produced output is $\outp{\pi} = b_0\cdots b_{i-1} \in \SigmaO^*$.
We say that the output slave computes the function $\fLambda \colon Q \times \SigmaI^+ \times \SigmaI^+ \rightarrow \SigmaO^*$ defined by $\fLambda(q,x_1,x_2) = \outp{\pi}$, where $\pi$ is the unique run that starts in $q_\initmark^\Lambda(q)$ with $\inp{\pi} = x_1\$x_2\$$.
The size of $\stratautLambda$ is defined as $|\stratautLambda| = |Q_\Lambda| + \ell$, where $\ell$ is the length of the longest output in $E$, that is, $\mathrm{max}\{ |v| \mid (p,u,v,q) \in E\}$.

Clearly, if additionally $\fLambda(q, \overline{a_0},\overline{a_1}) \in \SigmaO^d$ for every $q \in Q$ and every $\overline{a_0}, \overline{a_1} \in \SigmaI^d$, then the succinct transducer $\strataut$ implements a strategy~$\strat_\strataut$ as before, namely $\strat_\strataut(\block{a_0} \cdots \block{a_{i}}) = \Lambda(\Delta^*(q_\initmark,\block{a_0} \cdots \block{a_{i-2}}),\block{a_{i-1}},\block{a_{i}})$ for $i \geq 1$.
The size of $\strataut$ is defined as $|\strataut| = |Q| + |\stratautDelta| + |\stratautLambda|$.

We illustrate these definitions with two examples. In the first one, we substantiate our above claim that a succinct transducer of constant size implements a winning strategy for Player~$O$ in $\blockgame{d}{\Leq}$, independently of $d$. This is in sharp contrast to explicit transducers implementing winning strategies, whose transition and output function have exponentially-sized domains in $d$.

\begin{example}\label{ex-simple}
  Consider the winning condition $\Leq$ as introduced in Section~\ref{sec-fsindg}.
  Obviously, Player~$O$ can win the block game~$\blockgame{d}{\Leq}$ by copying the moves of Player~$I$ for every block size $d$.
  A succinct transducer implementing a winning strategy for Player~$O$ can be defined independently of $d$.
  One master state, one state for the transition slave, and one state for the output slave suffice; the output slave just copies the input until the first $\$$ occurs and ignores the remaining input. 
\end{example}

One obvious weakness of the previous example is that Player~$O$ does not need lookahead to win a game with winning condition~$\Leq$. Next, we give an example in which Player~$O$ needs lookahead to win, which is obtained by adapting the exponential lower bound on the necessary lookahead in delay games with safety conditions~\cite{KleinZimmermann16}. In this game, Player~$O$ needs exponential lookahead (in the size of an automaton~$\aut$ recognizing the winning condition). Hence, the transition and output function of an explicit transducer have doubly-exponentially-sized domains in $\size{\aut}$. We show how to construct a succinct transducer of exponential size implementing a winning strategy, an exponential improvement.

\begin{example}\label{ex-jpairs-intro}
 Consider the reachability automaton $\aut_n$, for $n > 1$, over the alphabet $\SigmaI \times \SigmaO = \{1,\dots,n\}^2$ of size $\bigo(n)$, given in Figure~\ref{fig-jpairs-intro}.
 The language of $\aut_n$ contains words of the form~${\alpha \choose \beta}$ where $\alpha(1)\alpha(2)\alpha(3) \cdots$ has two occurrences of $\beta(0)$ with only smaller letters in between (a so-called bad $j$-pair for $j=\beta(0)$). Note that the first letter of $\alpha$ is ignored.
 In words, the first letter of the second component indicates the existence of a bad $j$-pair in the $\alpha$-component (again, without its first letter). It is known that Player~$O$ wins $\blockgame{d}{L(\aut_n)}$ for all $d >2^n/2$, but not for smaller ones~\cite{KleinZimmermann16}.

 \begin{figure}[ht!]
\begin{center}
\begin{tikzpicture}[thick]
\tikzstyle{every state}+=[ minimum size=5mm]
\begin{scope}
\tikzstyle{small}=[scale=0.7,draw,gray];
\tikzstyle{scaled}=[scale=0.8];

\node at (-1.1,0) {$\aut_n$};
\node[state,initial]              (a) {};

\node[state,small, above right =of a] (c) {};
\node[state,small,right=of c]         (d) {};
\node[state,small, below right =of a] (e) {};
\node[state,small,right=of e]         (f) {};

\node[state,below right=of d, accepting] (b) {};

\node[scale=0.6,gray] at ($(c)+(-0.3,0.5)$) {$\mathfrak G_1$};
\node[scale=0.6,gray] at ($(e)+(-0.3,0.5)$) {$\mathfrak G_n$};

\node[gray] at ($(a)!0.5!(b)$) {$\vdots$};

\draw[->,gray]
	(c) edge[bend left=15] node {} (d)
	(c) edge[loop above]   node {} ()
	(d) edge[loop above]   node {} ()
	(d) edge[bend left=15] node {} (c);

\draw[->,gray]
	(e) edge[bend left=15] node {} (f)
	(e) edge[loop above]   node {} ()
	(f) edge[loop above]   node {} ()
	(f) edge[bend left=15] node {} (e);

\draw[->]
	(a) edge node[scaled,near start] 
	  {${\ast \choose n}$} (c)
	(a) edge node[scaled,near start,swap] 
	  {${\ast \choose n}$} (e)
	(d) edge node[scaled,near end] 
	  {${1 \choose \ast}$} (b)
	(f) edge node[scaled,near end,swap] 
	  {${n \choose \ast}$} (b)
	(b) edge[loop right] node[scaled]
	  {${\ast \choose \ast}$} ();

\draw[rounded corners,gray,dashed] ($(c)-(0.5,0.4)$) rectangle ($(d)+(0.5,0.7)$) {};
\draw[rounded corners,gray,dashed] ($(e)-(0.5,0.4)$) rectangle ($(f)+(0.5,0.7)$) {};
	
\end{scope}

\begin{scope}[xshift=7.25cm]
\tikzstyle{scaled}=[scale=0.8];

\node[state]  at (0,0) (0) {};
\node[state]  at (2,0) (1) {};
\node at ($(0)+(-1.25,0.75)$) {$\mathfrak G_j$};

\draw[->]
	(0) edge[bend left=15] node[scaled] 
	{${ j \choose \ast}$} (1)
	(0) edge[loop left]   node[scaled] 
	{${\neq j \choose \ast}$} ()
	(1) edge[loop right]   node[scaled]  
	{${< j\choose \ast}$}()
	(1) edge[bend left=15] node[scaled] 
	{${>j \choose \ast}$}(0);

\draw[rounded corners,dashed] ($(0)-(1.75,1.25)$) rectangle ($(1)+(1.75,1.25)$) {};	  
\end{scope}

\end{tikzpicture}
\end{center}
\caption{
 The automaton $\aut_n$ (left) contains gadgets $\mathfrak G_1,\dots,\mathfrak G_n$ (right).
 Transitions not depicted lead to a sink state, which is not drawn.
 The only accepting state is the rightmost state, which is drawn circled. Here, $\ast$ denotes an arbitrary letter from the respective alphabet.
}
\label{fig-jpairs-intro}
\end{figure}

Now, we show that we can construct a succinct transducer implementing a winning strategy in the block game~$\blockgame{d}{L(\aut_n)}$ of exponential size in $n$ for every $d > 2^n/2$.

To begin with, we note that a block size of $d = 2^n/2 +1$ is sufficient in order for Player~$O$ to win the block game~$\blockgame{d}{L(\aut_n)}$, since every word over $\{1,\dots,n\}$ of length at least $2^n$ contains a bad $j$-pair for some $j$~\cite{KleinZimmermann16}.

We now construct a succinct transducer that implements a winning strategy for Player~$O$ in the block game~$\blockgame{d}{L(\aut_n)}$ for every $d > 2^n/2$.
Clearly, to implement a winning strategy, the output slave of a succinct transducer must identify a bad $j$-pair before it can make its first output.
To achieve this, the following transition structure is used: The automaton collects the seen letters, upon reading a letter, all smaller seen letters are deleted from the collection, if a letter is seen that has already been collected, a bad pair has been found.
More formally, from a state $P \subseteq 2^{\{1,\dots,n\}}$ upon reading the letter~$j$ the state $(P \setminus \{1,\dots,j-1\}) \cup \{j\}$ is reached if $j \notin P$, otherwise this is the second occurrence of $j$ in a bad $j$-pair.
Thus, a $j$-pair can be identified using $\bigo(2^n)$ states.
When a bad $j$-pair has been found, the output slave produces an output block of length $d$ beginning with $j$ (in a single computation step) and ignores the remaining input.

There are no conditions for subsequent output blocks, in this case the output slave simply copies the input letter by letter until the first $\$$ occurs and ignores the remaining input.

Thus, the size of an output slave is $\bigo(2^{n})$; the size of a transition slave is constant since it just distinguishes whether the first output block has already been produced.
All in all, the constructed succinct transducer is of size $\bigo(2^{n})$.
\end{example}

\subsection{Upper Bounds}
\label{subsec_upperbounds}

After these two examples showing how succinct transducers can indeed be smaller than explicit transducers, we prove that they do not have to be larger than explicit ones (when measured in the size of the domains of the transition and output function).

\begin{theorem}
\label{thm-transducertransform}
 Let $\stratO$ be a delay-aware finite-state strategy for a block game~$\blockgame{d}{L}$ with $L \subseteq (\SigmaI \times \SigmaO)^\omega$. 
 \begin{enumerate}
	\item\label{thm-transducertransform-explicit2succinct}
	 If $\stratO$ is implementable by an explicit transducer with $n$ states, then also by a succinct transducer with $\bigo(n \cdot \size{\SigmaI}^{2d})$ states.
	\item\label{thm-transducertransform-succinct2explicit}
	 If $\stratO$ is implementable by a succinct transducer with $n$ master states, then also by an explicit transducer with $n$ states. 
\end{enumerate}
\end{theorem}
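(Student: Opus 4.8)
The key observation is that both an explicit and a succinct transducer are just two encodings of the same data: a set $Q$ of master states with initial state $q_\initmark$, a block-transition function $Q \times \SigmaI^d \to Q$, and an output function $Q \times \SigmaI^d \times \SigmaI^d \to \SigmaO^d$. The implemented strategy $\strat_\strataut$ depends \emph{only} on these, being obtained by iterating the block-transition function from $q_\initmark$ over $\block{a_0}\cdots\block{a_{i-2}}$ and then applying the output function to the resulting master state together with the last two blocks $\block{a_{i-1}},\block{a_i}$. Hence it suffices, in each direction, to convert between an explicit table for these two functions and a slave transducer computing them, while preserving $Q$ and $q_\initmark$; correctness of the transformed strategy is then immediate, since the iterated block-transitions agree block-by-block and the output functions agree on all pairs of blocks.

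Part~\ref{thm-transducertransform-succinct2explicit} is the easy direction. Given a succinct transducer with master states $Q$ (where $\size{Q}=n$) computing $\fDelta$ and $\fLambda$, define an explicit transducer over the same $Q$ and $q_\initmark$ by $\delta(q,\block{a}) = \fDelta(q,\block{a})$ and $\lambda(q,\block{a_0},\block{a_1}) = \fLambda(q,\block{a_0},\block{a_1})$ for all blocks $\block{a},\block{a_0},\block{a_1}\in\SigmaI^d$. Both are total (the slaves compute them), and $\lambda$ has the right type because the succinct transducer implements a strategy, so $\fLambda(q,\block{a_0},\block{a_1})\in\SigmaO^d$. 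Since $\fDelta^*$ iterates $\fDelta$ exactly as $\delta^*$ iterates $\delta$, the two transducers implement the same strategy, and the explicit one has $\size{Q}=n$ states.

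For Part~\ref{thm-transducertransform-explicit2succinct}, let the given explicit transducer be $(Q,\SigmaI,q_\initmark,\delta,\SigmaO,\lambda)$ with $\size{Q}=n$. I keep $Q$ and $q_\initmark$ as master states and let the slaves read blocks letter-by-letter. For the transition slave I take states $\set{(q,x) \mid q\in Q,\ x\in\SigmaI^{\le d}}$ (a master state together with the prefix of the current block read so far), set $q_\initmark^\Delta(q)=(q,\epsilon)$, read a letter $a$ via $(q,x)\mapsto(q,xa)$ while $\size{x}<d$ and loop once $\size{x}=d$, and let its output function send $(q,x)$ with $\size{x}=d$ to $\delta(q,x)$ (values on shorter prefixes are irrelevant, as $\fDelta$ is only queried on full blocks). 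This gives $\fDelta(q,\block{a})=\delta(q,\block{a})$ with $\bigo(n\cdot\size{\SigmaI}^d)$ states. For the output slave, starting from $q_\initmark^\Lambda(q)$ I read the first block $\block{a_0}$ into the state, consume the separator $\$$, then read the second block $\block{a_1}$; upon the second $\$$ I emit the whole output block $\lambda(q,\block{a_0},\block{a_1})\in\SigmaO^d$ on that single transition (so $\ell=d$) and emit $\epsilon$ on every other transition, yielding $\fLambda(q,\block{a_0},\block{a_1})=\lambda(q,\block{a_0},\block{a_1})$.

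The crux — and the origin of the $\size{\SigmaI}^{2d}$ factor — is the output slave: since $\lambda$ depends on \emph{both} blocks, the slave must retain the entire first block $\block{a_0}$ while reading the second, so its state set is essentially $Q\times\SigmaI^{\le d}\times\SigmaI^{\le d}$, of size $\bigo(n\cdot\size{\SigmaI}^{2d})$. Adding the contributions $\size{Q}=n$, the transition slave's $\bigo(n\cdot\size{\SigmaI}^d)$, and the output slave's $\bigo(n\cdot\size{\SigmaI}^{2d})$ (the additive $\ell=d$ being dominated) gives total size $\bigo(n\cdot\size{\SigmaI}^{2d})$, as claimed. Correctness is then the same block-by-block argument as in Part~\ref{thm-transducertransform-succinct2explicit}, since the master states, initial state, and computed functions $\fDelta,\fLambda$ agree with $Q,q_\initmark,\delta,\lambda$ on all blocks. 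I do not expect genuine difficulty beyond bookkeeping; the only care needed is to make the slave transition functions total and deterministic and to confirm that the emit-on-final-$\$$ convention reproduces $\lambda$ exactly.
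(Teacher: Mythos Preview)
Your proposal is correct and follows essentially the same approach as the paper: keep the master states and initial state unchanged, and in each direction convert between tabulated and slave-computed versions of the block-transition and output functions. The only cosmetic difference is that the paper's output slave stores the two input blocks concatenated as a single word in $Q\times\SigmaI^{\le 2d}$ (ignoring the first $\$$), whereas you store them as a pair in $Q\times\SigmaI^{\le d}\times\SigmaI^{\le d}$; both yield the same $\bigo(n\cdot\size{\SigmaI}^{2d})$ bound and the same emit-on-second-$\$$ behavior.
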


\begin{proof}
\ref{thm-transducertransform-explicit2succinct}) Let $\strataut = (Q, \SigmaI, q_\initmark, \delta, \SigmaO, \lambda)$ be the explicit transducer implementing $\stratO$, i.e., $\delta \colon Q \times \SigmaI^d \rightarrow Q$ maps a state and a block in $\SigmaI^d$ to a new state and $\lambda \colon Q \times \SigmaI^d \times \SigmaI^d \rightarrow \SigmaO^d$ maps a state and two blocks in $\SigmaI^d$ to a block in $\SigmaO^d$. The strategy is implemented by a succinct transducer over the same set of master states~$Q$, with the same initial state~$q_\initmark$, and where $\delta$ and $\lambda$ are implemented by slaves~$\stratautDelta$ and $\stratautLambda$ defined below.

The transition slave has states of the form~$(q,w) \in Q \times \SigmaI^{\le d}$ to store an input block, an initialization function mapping a state~$q \in Q$ of $\strataut$ to $(q, \epsilon)$, and a transition function mapping a state~$(q, w)$ and a letter~$a \in \SigmaI$ to $(q, wa)$, if $\size{w} < d$. Otherwise, it is mapped to $(q,w)$. This is sufficient, as $\stratautDelta$ is only used to process words of length~$d$. Finally, the output function of the transition slave is defined such that it maps each state~$(q, w)$ with $\size{w} = d$ to $\delta(q,w)$. All other outputs are irrelevant. Then, it is straightforward to prove that the function computed by $\stratautDelta$ (restricted to inputs from $\SigmaI^d$) is equal to $\delta$.

The construction of the output slave~$\stratautLambda$ is analogous: here, we use states of the form~$(q,w)\in Q \times \SigmaI^{\le 2d}$. Again, the initialization function maps~$q$ to $(q,\epsilon)$ and a transition processing a non-$\$$ input letter appends it to the word stored in the state as long as possible. Furthermore, $\$$'s are ignored and transitions from states in $\SigmaI^{2d}$ can be defined arbitrarily. Transitions processing a $\$$ from a state of the form~$(q,\block{a_0}\block{a_1})$ output $\lambda(q, \block{a_0},\block{a_1})$, while all other transitions have an  empty output. Again, it is straightforward to show that the function computed by $\stratautLambda$ coincides with $\lambda$ on inputs of the form~$(q, \block{a_0}, \block{a_1})$.

Hence, the succinct transducer constructed using $\stratautDelta$ and $\stratautLambda$ computes the same function as $\strataut$ and has indeed $\bigo(n \cdot \size{\SigmaI}^{2d})$ states.

\ref{thm-transducertransform-succinct2explicit}) Now, let $(Q, \SigmaI, q_\initmark, \stratautDelta, \SigmaO, \stratautLambda)$ be a succinct transducer implementing $\stratO$. Then, $\stratO$ is also implemented by the explicit transducer~$(Q, \SigmaI, q_\initmark, \delta, \SigmaO, \lambda)$ where $\delta(q,\block{a})$ is equal to the output of $\stratautDelta$ on $\block{a}$ when initialized with $q$, and where $\lambda(q,\block{a_0},\block{a_1})$ is the output of $\stratautLambda$ on $\block{a_0}\$\block{a_1} \$$ when initialized with $q$.
\end{proof}

Thus, we can obtain a succinct transducer by constructing it starting with an explicit one. This explicit one would typically be obtained by the reduction to Gale-Stewart games presented in Section~\ref{sec-construction}. Next, we show how to turn a finite-state strategy for the Gale-Stewart game into a succinct transducer without the detour via explicit transducers, which yields a smaller transducer.

\begin{theorem}
Let $\aut$, $\mon$, $\autb$, and $d$ be as in Theorem~\ref{thm-main} or as in Theorem~\ref{thm-main2}. 

If Player~$O$ has a finite-state winning strategy for the game~$\game(L(\autb))$ with $n$ states, then she has a finite-state winning strategy for $\blockgame{d}{L(\aut)}$ implemented by a succinct transducer of size~$\bigo(n \cdot \size{\SigmaI}^{d} \cdot d)$. 
\end{theorem}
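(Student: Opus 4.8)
The plan is to read off the succinct transducer directly from the $n$-state transducer implementing~$\stratO$ for $\game(L(\autb))$, instead of first passing through an explicit transducer and invoking Theorem~\ref{thm-transducertransform} (which would only give the weaker bound~$\bigo(n\cdot\size{\SigmaI}^{2d})$). Write the transducer implementing~$\stratO$ as $(V, \R, v_0, \eta, Q\times M, \mu)$ with $\size{V}=n$, transition function~$\eta\colon V \times \R \to V$ (extended to~$\eta^*$), and output function~$\mu\colon V \to Q\times M$. Recall from the proof of Item~\ref{thm-main-delayfree2block} of Theorem~\ref{thm-main} how the delay-aware strategy~$\stratO^d$ for $\blockgame{d}{L(\aut)}$ arises: after Player~$I$ has played~$\block{a_0}\cdots\block{a_{i-2}}$, the relevant master state is $v=\eta^*(v_0, \equivclass{\block{a_0}}\cdots\equivclass{\block{a_{i-2}}})$, and the answer block~$\block{b_{i-1}}$ is a fixed $(q_{i-1},q_i,m_i)$-completion of~$\block{a_{i-1}}$, where $v'=\eta(v,\equivclass{\block{a_{i-1}}})$, $(q_{i-1},m_{i-1})=\mu(v')$, and $(q_i,m_i)=\mu(\eta(v',\equivclass{\block{a_i}}))$. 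Such a completion exists by Remark~\ref{remark-powersetcharac}, since $(q_i,m_i)\in r_{\equivclass{\block{a_{i-1}}}}(q_{i-1})$ because $\stratO$ plays only legal moves in $\game(L(\autb))$.

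I would take the $n$ states of~$V$ as the master states. The transition slave~$\stratautDelta$ must compute $\fDelta(v,\block{a})=\eta(v,\equivclass{\block{a}})$. To this end I would, while reading the block~$\block{a}\in\SigmaI^d$, track the current transition summary using the deterministic automaton over the state set~$(\pow{Q\times M_\bot})^{Q}$ of size~$d$ that simulates~$\delta_P^+$ from the singletons~$\set{(q,\bot)}$ (as noted after Lemma~\ref{lemma-saccclosure}), while carrying the initial master state~$v$ along, so that it is still available at the end. Since $\size{\block{a}}=d$, we have $\equivclass{\block{a}}\in\R$ by Item~\ref{remark-equivbounds-finitelymanynotinR} of Remark~\ref{remark-equivbounds}, and the final summary identifies~$\equivclass{\block{a}}$; the output function then returns~$\eta(v,\equivclass{\block{a}})$. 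This yields $\size{\stratautDelta}=\bigo(n\cdot d)$.

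The output slave~$\stratautLambda$ is where the improvement over the explicit-to-succinct construction lies, and constructing it correctly is the main obstacle. Computing the completion~$\block{b_{i-1}}$ requires the \emph{full} block~$\block{a_{i-1}}$, but the target~$(q_i,m_i)$ only becomes available after~$\block{a_i}$ has been read, and in the input~$\block{a_{i-1}}\$\block{a_i}\$$ the first block precedes the second. The key point that keeps the size down is that only the \emph{class}~$\equivclass{\block{a_i}}$ of the second block influences~$(q_i,m_i)$ (through~$\eta$ and~$\mu$), so the second block need \emph{not} be stored in full: its class is tracked with the $d$-state summary automaton from above. Concretely, while reading~$\block{a_{i-1}}$ the slave stores~$v$ together with the prefix of~$\block{a_{i-1}}$ seen so far and produces no output, costing~$\bigo(n\cdot\size{\SigmaI}^d)$ states; on the first~$\$$ it knows the whole~$\block{a_{i-1}}$, hence~$\equivclass{\block{a_{i-1}}}$ and~$v'=\eta(v,\equivclass{\block{a_{i-1}}})$. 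While reading~$\block{a_i}$ it retains~$\block{a_{i-1}}$ and~$v'$ and additionally tracks only the summary of~$\block{a_i}$, again with empty output, costing~$\bigo(n\cdot\size{\SigmaI}^d\cdot d)$ states. On the final~$\$$ it reads off~$\equivclass{\block{a_i}}$, computes~$(q_i,m_i)=\mu(\eta(v',\equivclass{\block{a_i}}))$ and~$q_{i-1}$ from~$\mu(v')$, and emits the prescribed $(q_{i-1},q_i,m_i)$-completion of~$\block{a_{i-1}}$ in one step, so the longest output has length~$\ell=d$. Hence $\size{\stratautLambda}=\bigo(n\cdot\size{\SigmaI}^d\cdot d)+d$, and the total size is $n+\bigo(n\cdot d)+\bigo(n\cdot\size{\SigmaI}^d\cdot d)=\bigo(n\cdot\size{\SigmaI}^d\cdot d)$. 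Correctness is then immediate, since by construction the succinct transducer computes exactly~$\stratO^d$, which is winning by Item~\ref{thm-main-delayfree2block} of Theorem~\ref{thm-main} (respectively Item~\ref{thm-main2-delayfree2block} of Theorem~\ref{thm-main2}). The only routine care needed is to verify that every quantity used on the two~$\$$-transitions is indeed determined by the information stored in the slave's state.
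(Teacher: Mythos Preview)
Your proposal is correct and follows essentially the same approach as the paper: the master states are the $n$ states of the transducer for $\game(L(\autb))$, the transition slave carries the master state together with the running transition summary in $(\pow{Q\times M_\bot})^{Q}$ (size $n\cdot d$), and the output slave stores the first block verbatim while tracking only the summary of the second block (size $\bigo(n\cdot\size{\SigmaI}^d\cdot d)$, with longest output~$d$), emitting the appropriate completion on the final~$\$$. The only cosmetic difference is that you switch from $v$ to $v'=\eta(v,\equivclass{\block{a_{i-1}}})$ after the first~$\$$, whereas the paper keeps the original master state~$q$ throughout and recomputes $\delta_\strataut(q,\equivclass{w})$ at the end; this is immaterial since $\equivclass{\block{a_{i-1}}}$ is recoverable from the stored block.
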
	
	
\begin{proof}
Let $Q_\aut$ be the state space of $\aut$ and let $M$ be the set of memory states of $\mon$. Furthermore, let $\strataut = (Q_\strataut, R, q_\initmark^\strataut, \delta_\strataut, Q_\aut \times M_\bot, \lambda_\strataut)$ be a transducer implementing a winning strategy for Player~$O$ in $\game(L(\autb))$. 

Recall the proof of Item~\ref{thm-main-delayfree2block} of Theorem~\ref{thm-main}: there, we turn a finite-state winning strategy for $\game(L(\autb))$ into a delay-aware finite-state winning strategy for $\blockgame{d}{L(\aut)}$ using a simulation: In $\game(L(\autb))$, Player~$I$ picks equivalence classes from $R$ while Player~$O$ picks pairs containing a state of $\aut$ and a memory state of $\mon$. On the other hand, in $\blockgame{d}{L(\aut)}$, both players pick blocks of letters over their respective alphabet. Now, each block of Player~$I$ induces an equivalence class (see Item~\ref{remark-equivbounds-finitelymanynotinR} of Remark~\ref{remark-equivbounds}). For the other direction, we use completions as guaranteed by Remark~\ref{remark-powersetcharac} to translate moves of Player~$O$ in $\game(L(\autb))$ into moves of her in $\blockgame{d}{L(\aut)}$.

Formally, we define a succinct transducer~$\strataut_s = (Q_\strataut, \SigmaI, q_\initmark^\strataut, \stratautDelta, \SigmaO, \stratautLambda)$ simulating $\strataut$. To this end, we just need to specify the slaves~$\stratautDelta$ and $\stratautLambda$.

Intuitively, $\stratautDelta$ computes the transition summary of its input, which represents an equivalence class of $R$, provided the input is long enough. Formally, we define $\stratautDelta = (Q_\stratautDelta, \SigmaI, q_\initmark^\stratautDelta, \delta_\stratautDelta, Q_\strataut, \lambda_\stratautDelta )$ where
\begin{itemize}
	\item $Q_\stratautDelta =  Q_\strataut \times (\pow{Q_\aut \times M_\bot})^{Q_\aut}$,
	\item $q_\initmark^\stratautDelta(q) = (q, q^*\mapsto \set{(q^*, \bot)})$ for $q \in Q_\strataut$ and $q^* \in Q_\aut$, and
	\item $\delta_\stratautDelta((q,r),a) = (q,r')$ with $r'(q^*) = \delta_P(r(q^*),a)$ for every $q^* \in Q_\aut$, where $\delta_P$ is defined as in Section~\ref{sec-construction} on Page~\pageref{page_deltap}.
\end{itemize} 
Thus, we have $\delta_\stratautDelta^*(q_\initmark^\stratautDelta(q),w) = (q,r_{w})$. Note that $\equivclass{\block{a}}$ is an element of $R$ for every block~$\block{a}$ (due to $\size{\block{a}} = d$ and Item~\ref{remark-equivbounds-finitelymanynotinR} of Remark~\ref{remark-equivbounds}).
Hence, we can define $\lambda_\stratautDelta(q,r) = \delta_\strataut(q,\equivclass{w} )$ for some $w$ such that $r_w = r$, if such a $w$ exists. If one does exist, then this definition is independent of the choice of $w$. Otherwise, we define $\lambda_\stratautDelta(q,r)$ arbitrarily. Then, $\stratautDelta$ indeed simulates the transition function $\delta_\strataut$ of $\strataut$.

It remains to define the output slave~$\stratautLambda$. Note that we need to determine a completion of an input block to simulate the strategy implemented by $\strataut$. The \emph{right} completion depends on the block to be completed, not only on its equivalence class. Hence, $\stratautLambda$ needs to store the first block in its input using its state space. For the second block, it suffices to determine its equivalence class, which is implemented as in $\stratautDelta$.

Formally, we define $\stratautLambda = (Q_\stratautLambda, \SigmaI, q_\initmark^\stratautLambda, E_\stratautLambda, \SigmaO )$ with
\begin{itemize}
	\item $Q_\stratautLambda = Q_\strataut \times \SigmaI^{\le d} \times (\pow{Q_\aut \times M_\bot})^{Q_\aut}$,
	\item $q_\initmark^{\stratautLambda}(q) = (q, \epsilon, q^*\mapsto \set{(q^*, \bot)})$ for $q \in Q_\strataut$ and $q^* \in Q_\aut$, and
	\item where $E_\stratautLambda$ is defined such that on inputs of the form~$w \$ w' $ with $\size{w} = d$ the state $(q, w, r_{w'})$ is reached when initializing the run with $q$; on all other inputs an arbitrary state is reached. All these edges have an empty output.
	
The only non-empty output happens on transitions processing a second~$\$ $ from a state of the form~$(q,w,r)$ with $\size{w} = d$ and with $r = r_{w'} \in R$ for some $w'$. If this is the case, let $(q_0,m_0) = \lambda_\strataut(\delta_\strataut(q, \equivclass{w}))$ and $(q_1, m_1) = \lambda_\strataut(\delta_\strataut(\delta_\strataut(q, \equivclass{w}),\equivclass{w'}))$. 
 Then, the output of the transition processing $\$ $ from $(q,w,r)$ is some $(q_0,q_1,m_1)$-completion of $w$. If such a $w'$ does exist, then this definition is independent of the choice of $w'$. 
	
\end{itemize}
Then, $\stratautLambda$ indeed simulates the output function~$\lambda_\strataut$ of $\strataut$.

Altogether, a straightforward induction as in the proof of Item~\ref{thm-main-delayfree2block} of Theorem~\ref{thm-main} shows that $\strataut_s$ indeed implements a winning strategy for the block game.
 \end{proof}
 
\subsection{Lower Bounds}
\label{subsec_lowerbounds}
 
After considering upper bounds in the previous two theorems, we now turn our attention to lower bounds showing that the upper bounds are tight for winning conditions recognized by reachability automata. In this case, an exponential lookahead is sufficient and in general necessary~\cite{KleinZimmermann16}. The following construction is an adaption of the lower bound proof for the lookahead, and again based on bad $j$-pairs. 

\begin{example}\label{ex-jpairs}
 Consider the reachability automaton $\aut_n$, for $n > 1$, over the alphabet $\SigmaI \times \SigmaO = (\{1,\dots,n\} \times \mathbb B^n) \times (\{1,\dots,n\} \times \mathbb B)$ depicted in Figure~\ref{fig-jpairs}.
 The automaton accepts an $\omega$-word \[\bigvec{\alpha \\ \beta_1 \\ \vdots \\ \beta_n \\ \gamma\\ \beta} \in (\SigmaI \times \SigmaO)^\omega\] with $\alpha,\gamma \in \{1,\dots,n\}^\omega$ and $\beta_1,\cdots,\beta_n,\beta \in \mathbb B^\omega$ if, and only if, it has the following form:
 there is an $m$ such that $\alpha(1) \cdots \alpha(m)$ contains a bad $j$-pair for $j=\gamma(0)$, $\alpha(1) \cdots \alpha(m-1)$ contains no bad $j$-pair (which implies $\alpha(m) = j$), and $\beta_j(0) \cdots \beta_j(m) = \beta(0) \cdots \beta(m)$. Intuitively, Player~$O$ has to identify a $j$ such that the $\alpha$-component of the input contains a bad $j$-pair
 and additionally has to copy the $j$th $\beta$-component up to the end of the first $j$-pair. Notice that the first letter of the $\alpha$-component of the input is again ignored when it comes to finding a bad $j$-pair.
 \end{example}

\begin{figure}[ht!]
\begin{center}
\begin{tikzpicture}[thick]
\tikzstyle{every state}+=[minimum size=5mm]
\begin{scope}
\tikzstyle{small}=[scale=0.7,draw,gray];
\tikzstyle{scaled}=[scale=0.6];

\node at (-1.1,0) {$\aut_n$};
\node[state,initial]              (a) {};

\node[state,small, above right =of a] (c) {};
\node[state,small,right=of c]         (d) {};
\node[state,small, below right =of a] (e) {};
\node[state,small,right=of e]         (f) {};

\node[state,below right=of d,accepting] (b) {};

\node[scale=0.6,gray] at ($(c)+(-0.3,0.5)$) {$\mathfrak G_1$};
\node[scale=0.6,gray] at ($(e)+(-0.3,0.5)$) {$\mathfrak G_n$};

\node[gray] at ($(a)!0.5!(b)$) {$\vdots$};

\draw[->,gray]
	(c) edge[bend left=15] node {} (d)
	(c) edge[loop above]   node {} ()
	(d) edge[loop above]   node {} ()
	(d) edge[bend left=15] node {} (c);

\draw[->,gray]
	(e) edge[bend left=15] node {} (f)
	(e) edge[loop above]   node {} ()
	(f) edge[loop above]   node {} ()
	(f) edge[bend left=15] node {} (e);

\draw[->]
	(a) edge node[scaled,near start] 
	  {$\colvec{\ast \\ b_1 \\ \vdots \\ b_n \\ 1 \\ b_1}$} (c)
	(a) edge node[scaled,near start,swap] 
	  {$\colvec{\ast \\ b_1 \\ \vdots \\ b_n \\ n \\ b_n}$} (e)
	(d) edge node[scaled,near end] 
	  {$\colvec{ 1 \\ b_1 \\ \vdots \\ b_n \\ \ast \\ b_1}$} (b)
	(f) edge node[scaled,near end,swap] 
	  {$\colvec{ n \\ b_1 \\ \vdots \\ b_n \\ \ast \\ b_n}$} (b)
	(b) edge[loop right] node[scaled]
	  {$\colvec{\SigmaI \\ \SigmaO}$} ();

\draw[rounded corners,gray,dashed] ($(c)-(0.5,0.4)$) rectangle ($(d)+(0.5,0.7)$) {};
\draw[rounded corners,gray,dashed] ($(e)-(0.5,0.4)$) rectangle ($(f)+(0.5,0.7)$) {};
	
\end{scope}

\begin{scope}[xshift=7.5cm]
\tikzstyle{scaled}=[scale=0.6];

\node[state]  at (0,0) (0) {};
\node[state]  at (2,0) (1) {};
\node at ($(0)+(-1.25,1.75)$) {$\mathfrak G_j$};

\draw[->]
	(0) edge[bend left=15] node[scaled] 
	{$\colvec{ j \\ b_1 \\ \vdots \\ b_n \\ \ast \\ b_j}$} (1)
	(0) edge[loop left]   node[scaled] 
	{$\colvec{\neq j \\ b_1 \\ \vdots \\ b_n \\ \ast \\ b_j}$} ()
	(1) edge[loop right]   node[scaled]  
	{$\colvec{<j\\ b_1 \\ \vdots \\ b_n \\ \ast \\ b_j}$}()
	(1) edge[bend left=15] node[scaled] 
	{$\colvec{>j \\ b_1 \\ \vdots \\ b_n \\ \ast \\ b_j}$}(0);

\draw[rounded corners,dashed] ($(0)-(1.75,2.5)$) rectangle ($(1)+(1.75,2.5)$) {};	  
\end{scope}

\end{tikzpicture}
\end{center}
\caption{
 The automaton $\aut_n$ (left) contains gadgets $\mathfrak G_1,\dots,\mathfrak G_n$ (right).
 Transitions not depicted lead to a sink state, which is not drawn.
 The only accepting state is the rightmost state, which is drawn circled. Here, $\SigmaI$ and $\SigmaO$ denote an arbitrary letter from the respective alphabet.
}
\label{fig-jpairs}
\end{figure}

Using this example, we can prove the following theorem.

\begin{theorem}\label{thm-lowerboundoutput}
 For every $n > 1$, there is a language $L_n$ recognized by a reachability automaton $\aut_n$ with $\bigo(n)$ states such that 
 \begin{itemize}
 	\item Player~$O$ has a finite-state winning strategy in the block game~$\blockgame{d}{L_n}$ for every $d > 2^n / 2$, and   
 	\item every succinct  transducer that implements a winning strategy for Player~$O$ in the block game~$\blockgame{d}{L_n}$ for some $d $ has an output slave with at least $\bigo(2^{n \cdot 2^n})$ states.
 \end{itemize}
\end{theorem}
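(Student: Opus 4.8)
My plan is to obtain the first item exactly as for the simpler automaton in Example~\ref{ex-jpairs-intro}: in $\blockgame{d}{L_n}$ with $d > 2^n/2$, the two blocks Player~$O$ has seen before committing her first output expose more than $2^n$ input letters, which must contain a bad $j$-pair; since Player~$O$ is always one block ahead, she reads off the first such pair, commits $\gamma(0)=j$, and copies the $j$-th bit-track up to the pair's end, filling the rest arbitrarily. The substance is the second item, which I would prove by a fooling argument on the state set $Q_\Lambda$ of the output slave $\stratautLambda$. I would fix $d = 2^n$ and a \emph{maximal} bad-pair-free word $W_n$ over $\{1,\dots,n\}$ (of length $2^n-1 = d-1$) as the $\alpha$-content of the first block, positions $1,\dots,d-1$ (position $0$ is ignored and filled arbitrarily). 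The fooling set $F$ consists of all first blocks with this fixed $\alpha$-content but arbitrary bit-tracks $\beta_j(t) \in \mathbb B$ for $j \in \{1,\dots,n\}$ and $t \in \{0,\dots,d-1\}$, so $\size{F} = 2^{n\cdot 2^n}$. The goal is to show that the states reached by $\stratautLambda$ after processing $\block{a_0}\$$ are pairwise distinct over $\block{a_0} \in F$, whence $\size{Q_\Lambda}\ge\size{F}$.

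The key step is a \emph{timing observation}: a winning strategy's output slave emits no output while reading the first block or the separating $\$$. Indeed, the first emitted letter fixes $\gamma(0) = j^\ast$, and this is determined by the first block alone (the second block has not been read yet); but Player~$I$ can continue with $(j')^\omega$ for any $j' \neq j^\ast$ (possible since $n>1$), yielding an $\alpha$ with no bad $j^\ast$-pair (none occurs in the bad-pair-free $W_n$, and $j^\ast$ does not recur afterwards). A strategy committing to $j^\ast$ then loses, although the play is winnable via $\gamma(0)=j'$; this refutes any such commitment. Hence all output is produced while reading the second block, so the state reached after $\block{a_0}\$$ together with the second block determines the entire first output block.

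For the distinguishing experiments I would, for each value $j_0$, let Player~$I$ play $j_0$ forever after the first block. Because $W_n$ uses every letter and leaves all of $\{1,\dots,n\}$ \emph{active} — after $W_n$ the last occurrence of each letter is followed only by smaller letters — the first $j_0$ of the second block closes a bad $j_0$-pair ending exactly at $m = d$; moreover $j_0$ is the \emph{only} value that ever forms a bad pair in $W_n\, j_0^\omega$. Thus any winning response is forced to pick $\gamma(0) = j_0$ and to copy $\beta_{j_0}(0)\cdots\beta_{j_0}(d)$, in particular every first-block bit of track $j_0$. Now take two blocks in $F$ differing in $\beta_{j_0}(t_0)$ for some $(j_0,t_0)$: under the $j_0$-experiment their winning first output blocks differ in position $t_0 \le d-1 \le m$, whereas by the timing observation two first blocks reaching the same slave state would yield identical output. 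Hence they reach distinct states; as this holds for every pair in $F$, we get $\size{Q_\Lambda} \ge \size{F} = 2^{n\cdot 2^n}$.

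The main obstacle is arranging the experiments so as to force, simultaneously, the \emph{uniqueness} of the valid choice $j_0$ over the whole infinite play and the \emph{lateness} of its bad pair (position $d$), so that the full first-block bit-track must be reproduced; both are delivered at once by the ``play $j_0$ forever'' continuation together with the structural fact that a maximal bad-pair-free word reaches the all-active state, which I would isolate as a short lemma about the subset construction of Example~\ref{ex-jpairs-intro} (namely that $P=\{1,\dots,n\}$ is reached and that the word has maximal length $2^n-1$). Secondary points to verify carefully are that the timing observation also rules out output on the separating $\$$, and that position $0$'s bits genuinely contribute to the $n\cdot 2^n$ forced bits, so that the fooling set indeed has the claimed size.
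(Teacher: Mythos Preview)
Your approach is essentially that of the paper: the same fooling set (words with the fixed bad-pair-free $\alpha$-prefix and free bit-tracks, of size $2^{n\cdot 2^n}$), the same timing observation (no output may be emitted before a bad pair is witnessed, else Player~$I$ dodges by continuing with some $j' \ne j^\ast$), and the same distinguishing continuation (repeat the track index~$j_0$ at which the two colliding words differ, so that only bad $j_0$-pairs occur and the $j_0$-track must be copied). The one difference is that you fix $d = 2^n$, whereas the theorem quantifies over every $d$ for which a winning transducer exists; the paper leaves $d > 2^n/2$ arbitrary and simply lets the length-$2^n$ fooling prefix straddle the block separator (inserting~$\$$ at position~$d$) when $d < 2^n$, after which your argument goes through verbatim. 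Your all-active lemma is correct but stronger than needed: you only use that the first bad $j_0$-pair ends at some $m \ge 2^n$, which is immediate from $W_n$ being pair-free, so the paper does not isolate this property.
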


\begin{proof}
 Consider the reachability automaton $\aut_n$ given in Example~\ref{ex-jpairs}, let $L_n = L(\aut_n)$. 
 To begin with, we argue that Player~$O$ has a finite-state winning strategy in the block game~$\blockgame{d}{L_n}$ for every $d > 2^n / 2$.
 As already mentioned in Example~\ref{ex-jpairs-intro}, every word over $\{1,\dots,n\}$ of length $2^n$ contains a bad $j$-pair for some $j$.
 A block size of at least $2^n/2+1$ allows for a lookahead of at least $2^n+1$ symbols, thus Player~$O$ can correctly identify a bad $j$-pair by remembering the first two input blocks (recall that the first input letter is ignored). This observation suffices to implement a finite-state winning strategy adapting the ideas presented in Example~\ref{ex-jpairs-intro}.
 
 On the other hand, there is a word~$x_n \in \set{1, \ldots, n}^*$ of length~$2^n-1$ that has no bad $j$-pair for every $j \in \set{1, \ldots, n}$~\cite{KleinZimmermann16}. This allows us to prove that Player~$O$ does not win $\blockgame{d}{L_n}$ for any $d \le 2^n / 2$: Player~$I$ can make the first move in the block game using (a prefix, if necessary, of) the word~$1x_n$ in the $\alpha$-component and any bits in the $\beta$-components. Then, Player~$O$ has to pick a first letter~$j^*$ with her first move (all other choices by her are irrelevant to our argument and thusly ignored). In order to win, she has to pick this $j^*$ so that the input has a bad $j^*$-pair. However, since by completing $x_n$ and then playing some $j \neq j^*$ ad infinitum, the outcome does not have a bad $j^*$-pair in its $\alpha$-component, i.e., Player~$I$ wins. For more details, we refer to~\cite{KleinZimmermann16}. 

We use a generalization of this argument to prove the lower bound on the size of the output slave of a finite-state winning strategy for $\blockgame{d}{L_n}$. Hence, let $\strataut = (Q, \SigmaI, q_\initmark, \stratautDelta, \SigmaO, \stratautLambda)$ be a succinct transducer that implements a winning strategy in $\blockgame{d}{L_n}$. As argued above, we can assume $d > 2^n/2$. Towards a contradiction, assume that the output slave~$\stratautLambda = (Q_\Lambda,\SigmaI,q_\initmark^\Lambda,E,\SigmaO)$ has fewer than $ 2^{n \cdot 2^n}$ states.

Recall that $\stratautLambda$ processes words of the form~$x_1\$x_2\$$ where $x_1, x_2 \in \SigmaI^d$ are input blocks. Let $X$ be the set of words of the form
\[\bigvec[\scriptsize]{\alpha(0)\cdots \alpha(2^n-1) \\ \beta_1(0)\cdots \beta_1(2^n-1) \\ \vdots \\ \beta_n(0)\cdots  \beta_n(2^n-1)} \in \SigmaI^{2^n}\] 
with $\alpha(1) \cdots \alpha(2^n-1) = x_n$. We have $\size{X} \ge 2^{n\cdot 2^n}$. 

Hence, there are two words in $X$ that lead $\stratautLambda$ to the same state (when converted into the correct input format for $\stratautLambda$) starting in $q_0 = q_\initmark^\Lambda(q_\initmark)$, which is the initial state used to process the first two blocks. Assume $\stratautLambda$ produces an output during these runs. Then, using arguments as above, one can show that it does not implement a winning strategy, as both words do not contain a bad $j$-pair for any $j$.

Hence, both runs end in the same state and have not yet produced any output. Thus, if both words are extended by the same suffix, $\stratautLambda$ produces the same output for both inputs. Now, let $j^*$ be such that the two words differ in their $\beta_{j^*}$-entry at some position. Consider the extension of the two words by picking $j^*$ in the $\alpha$-component and arbitrary bits in the $\beta$-components, until words of length~$2d$ are obtained. As both inputs only have bad $j$-pairs for $j= j^*$, the automaton has to copy the $\beta_{j^*}$-component. However, it cannot achieve this for both inputs, as it is not able to distinguish the different prefixes. Hence, the automaton does not implement a winning strategy.
\end{proof}

 A note on the size of the automaton $\aut_n$ for $L_n$.
 The number of states is in $\bigo(n)$, but its alphabet is in $\bigo(2^n)$.
 To reduce the size of the alphabet we can consider a variant of $L_n$ defined as follows.
 We call this variant $L'_n$, let $\SigmaI = \SigmaO = \{1,\dots,n,\mathrm t,\mathrm f\}$, we use $\mathrm t$ and $\mathrm f$ in place of $\mathbb B$ to distinguish it from $\{1,\dots,n\}$.
 We are interested in pairs~${\alpha \choose \beta}$ in which the $\alpha$-component is of the form~$\mathrm a_0a_1w_1a_2w_2 \cdots$, where $a_0, a_1, \ldots \in \{1,\dots,n\}$ and $w_1, w_2, \ldots \in \{\mathrm t,\mathrm f\}^n$.
 Meaning, instead of vertical $n$-bit vectors as before, we use horizontal $n$-bit vectors.
 If $\alpha$ is not of this form, then every $\beta$ is allowed in the second component.
 If $\alpha$ is of this form, then ${\alpha \choose \beta } \in L'_n$ if, and only if, $\beta$ is of the form~$b_0 b_1x_1b_2x_2\cdots$, where $b_0,b_1,b_2\ldots \in \{1,\dots,n\}$ and $x_0,x_1,\ldots \in \{\mathrm t,\mathrm f\}^n$ such that if $a_1 \cdots a_i$ is the smallest prefix of $a_1a_2 \cdots$ that contains a bad $j$-pair for $j= b_0$, and additionally the first letter of $x_k$ is the $j$th letter of $w_k$ for $1 \leq k \leq i$.
 
 A reachability automaton $\tilde \aut_n$ for $L'_n$ can be constructed with polynomial size in $n$.
 The idea is to use an automaton similar to the automaton $\aut_n$, and additionally have a ring counter up to $n$ to compare the first bit of $x_k$ with the $j$th bit of $w_k$.
 
 As before, the block game~$\blockgame{d}{L(\tilde \aut_n)}$ can be won by Player~$O$ for any $d$ that allows enough lookahead to identify a bad $j$-pair for some $j$.
 Since every word over $\{1,\dots,n\}$ of length at least $2^n$ contains such a pair, every prefix (in the correct format) of length greater than $2^n\cdot(n+1)$ contains such a pair.
 With the same reasoning as above, a transducer implementing (the output function of) a winning strategy must store every $n$-bit vector until an occurrence of a bad $j$-pair for some $j$ has been witnessed. Thus, the state space of such a transducer is in $\bigo(2^{n\cdot2^n})$.

\subsection{Tradeoff Between Block Size and Memory}
\label{subsec_tradeoff} 
 
Finally, we consider another promising facet of finite-state strategies in delay games: lookahead can be traded for memory and vice versa. Such tradeoffs have previously been presented between lookahead and the semantic quality of winning strategies in games with quantitative winning conditions~\cite{Zimmermann17}, and between memory size and the semantic quality of winning strategies~\cite{WeinertZimmermann17}. With the definition of finite-state strategies, one can add another dimension to the study of tradeoffs in infinite games. 

\begin{theorem}\label{thm-tradeoff}
For every even $k > 0$, there is a language $L_k^R$ recognized by a safety automaton $\aut_k$ such that 
\begin{itemize}
 \item Player~$O$ has a finite-state winning strategy in the block game~$\blockgame{d}{L_k^R}$ for every $d \geq k/2$, 
 \item there exists a succinct transducer $\strataut = (Q, \SigmaI, q_\initmark, \stratautDelta, \SigmaO, \stratautLambda)$ implementing a winning strategy in $\blockgame{d}{L_k^R}$ with $|\stratautDelta| \in \bigo(2^{k-d})$ and $|\stratautLambda| \in \bigo(2^{d})$ for every $d \in \{k/2,\dots,k\}$, and
  \item there exists an explicit transducer $\strataut = (Q, \SigmaI, q_\initmark, \delta, \SigmaO, \lambda)$ implementing a winning strategy in $\blockgame{d}{L_k^R}$  with $|\strataut| \in \bigo(2^{k-d})$ for every $d \in \{k/2,\dots,k\}$.
\end{itemize}
\end{theorem}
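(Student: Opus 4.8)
The plan is to exhibit a single safety language $L_k^R$ together with the three promised transducers. Since all three assertions are \emph{existence} (upper-bound) statements, no matching lower bound is needed: it suffices to engineer one family whose winning strategies naturally split a length-$k$ dependency into a part that must be remembered across rounds and a part that can be recomputed inside a single round. Concretely, I would design $L_k^R$ in the spirit of the bad-$j$-pair constructions of Examples~\ref{ex-jpairs-intro} and~\ref{ex-jpairs}, but parameterised directly by the decisive length~$k$: Player~$I$ produces an input in which every factor of length~$k$ is guaranteed to contain a distinguished structure, and Player~$O$ must identify this structure within her lookahead and then reproduce (in reversed order) the part of it she has already passed. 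A violation is witnessed after reading boundedly many letters, so $L_k^R$ is recognised by a safety automaton~$\aut_k$ that buffers one length-$k$ factor; the size of $\aut_k$ is immaterial to the statement.

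The first bullet—winnability for every $d\ge k/2$—is where the threshold enters. When Player~$O$ must commit the block $\block{b_i}$ covering output positions $id,\dots,(i+1)d-1$, she has already seen $\block{a_0}\cdots\block{a_{i+1}}$, i.e.\ the input up to position $(i+2)d-1$; relative to the \emph{first} position of $\block{b_i}$ this is a lookahead of $2d$ letters. I would argue that any length-$k$ window relevant to $\block{b_i}$ is fully contained in this two-block view precisely when $2d\ge k$, so that the distinguished structure can always be located and reversed. This yields an (infinite-state) winning strategy and, crucially, exposes the split that drives the last two bullets: the only information about a relevant window that cannot be re-read from the current two blocks is the part lying before the current block start~$id$, which comprises at most $k-d$ letters.

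For the transducers I would realise exactly this split. The \emph{explicit} transducer keeps as its state the out-of-block remainder of the pending structure—at most $k-d$ letters—giving $\bigo(2^{k-d})$ states; its (uncounted) output function, with free access to $\block{a_i}$ and $\block{a_{i+1}}$, reads off the in-block part and emits the reversed block directly, so $\size{\strataut}\in\bigo(2^{k-d})$. For the \emph{succinct} transducer I reuse these $k-d$ carried letters as the master states, so the transition slave only updates this short suffix and has $\size{\stratautDelta}\in\bigo(2^{k-d})$; the output slave, which must itself reconstruct the reversal while streaming over the two $\$$-separated blocks, has to buffer up to $d$ freshly read letters before it can emit the reversed block, giving $\size{\stratautLambda}\in\bigo(2^{d})$. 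That the slaves compute the intended transition and output functions is then a routine induction of the kind used in the proof of Theorem~\ref{thm-transducertransform}.

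The main obstacle is choosing the exact indexing of $L_k^R$ so that the three quantitative claims hold \emph{simultaneously}: the reversal must reach far enough into the current two blocks that winnability genuinely fails for $2d<k$, yet the portion of each relevant window falling before the block start must be exactly $k-d$, so that the carried state is $\bigo(2^{k-d})$ rather than the naive $\bigo(2^{k})$. Because $L_k^R$ is fixed while the split $k-d$ depends on~$d$, this decomposition cannot be hard-wired into the language; it must emerge from how a length-$k$ dependency intersects blocks of size~$d$, and I expect the delicate bookkeeping to lie in showing that the output slave can always re-derive the in-block part of each window from its two-block view, so that only the $k-d$ out-of-block letters survive in the master state. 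Once this accounting is pinned down, the winning-strategy argument and both transducer constructions follow the same simulation pattern already used for Theorems~\ref{thm-main} and~\ref{thm-transducertransform}.
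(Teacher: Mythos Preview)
Your accounting of the $k-d$/$d$ split is the right idea and is exactly what the paper exploits, but you are reaching for far heavier machinery than is needed, and that machinery does not obviously deliver the stated bounds.

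The paper's $L_k^R$ has nothing to do with bad $j$-pairs. It is simply the block-reversal language over $\SigmaI\times\SigmaO=\mathbb{B}^2$: $\binom{\alpha}{\beta}\in L_k^R$ iff $\beta(i)=\alpha((k-1)-i)$ for $0\le i\le k-1$, i.e.\ Player~$O$ must output the first $k$ input letters in reverse order, after which everything is free. A safety automaton recognises this by buffering a length-$k/2$ window, and the whole analysis is then two lines. For $d\ge k/2$ the first two input blocks already contain all of $\alpha(0)\cdots\alpha(k-1)$, so Player~$O$ can output $\block{b_0}=\alpha(k-1)\cdots\alpha(k-d)$; but to produce $\block{b_1}$, which must start with $\alpha(k-d-1)\cdots\alpha(0)$, she no longer sees $\block{a_0}$ through the output function and must therefore have stored its first $k-d$ letters in the state. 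That single observation immediately gives $|Q|\in\bigo(2^{k-d})$ for the explicit transducer and $|\stratautDelta|\in\bigo(2^{k-d})$, $|\stratautLambda|\in\bigo(2^{d})$ for the succinct one.

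Your bad-$j$-pair framing is both unnecessary and risky. First, the reversal requirement \emph{alone} already forces the threshold $d\ge k/2$, so no combinatorial ``every factor of length $k$ contains a distinguished structure'' ingredient is needed to get the first bullet. Second, layering bad $j$-pairs on top introduces a second length scale ($2^n$ versus $k$) and the output slave would then have to \emph{locate} the pair before reversing, which costs $\bigo(2^n)$ states rather than the clean $\bigo(2^d)$ buffering bound; your sketch does not explain how these scales are reconciled. Third, your ``every length-$k$ window'' phrasing suggests a sliding constraint, whereas the paper imposes the requirement only on the \emph{first} $k$ positions---this is what makes the carried state collapse to exactly the first $k-d$ letters rather than a rolling buffer. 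In short: drop the bad-$j$-pair detour, take the one-shot reversal language, and your $k-d$ bookkeeping already \emph{is} the proof.
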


\begin{proof}
 We start by describing the language $L_k^R$ over the alphabet $\SigmaI \times \SigmaO = \mathbb B^2$.
 A pair ${\alpha \choose \beta}$ is part of the language if, and only if, $\beta(i) = \alpha((k-1)-i)$ for $0 \leq i \leq k-1$, that is, the first block of length $k$ has to be reversed by Player~$O$.
 
 A safety automaton $\aut_k$ recognizing $L_k^R$ is build as follows.
 Initially, $\aut_k$ stores the first sequence of length $k/2$ in its state space starting from $\left(\varepsilon,\uparrow\right)$ and from a state $\left({a_1 \cdots a_i \choose b_1 \cdots a_i},\uparrow\right)$ upon reading the next letter ${b_{i+1} \choose a_{i+1}}$ it goes to $\left({a_1 \cdots a_{i+1} \choose b_1 \cdots b_{i+1}},\uparrow\right)$ for $0 \leq i \leq k/2-1$.
 Say $\aut_k$ has reached $\left({a_1 \cdots a_{k/2} \choose b_1 \cdots b_{k/2}},\uparrow\right)$, then upon reading the letter ${a_{k/2} \choose b_{k/2}}$ it goes to the state $\left({a_1 \cdots a_{k/2 - 1} \choose b_1 \cdots b_{k/2 -1}},\downarrow\right)$; and to a rejecting sink with any other letter.
 Subsequently, it has to check whether the next sequence of length $k/2 -1$ is equal to ${b_{k/2-1} \cdots b_1 \choose a_{k/2 -1} \cdots a_1}$.
 This can be done checking that in a state $\left({a_1 \cdots a_i \choose b_1 \cdots b_i},\downarrow\right)$ the next read letter is ${b_i \choose a_i}$ and going to $\left({a_1 \cdots a_{i-1} \choose b_1 \cdots b_{i-1}},\downarrow\right)$ for $1 \leq i \leq k/2-1$.
 After reaching $\left(\varepsilon,\downarrow\right)$, any sequence is valid.
 
 It is easy to see that Player~$O$ can win the block game for every $d \geq k/2$.
 Now, for $d \in \{k/2,\dots,k\}$, we show that there exists a succinct transducer $\strataut = (Q, \SigmaI, q_\initmark, \stratautDelta, \SigmaO, \stratautLambda)$ implementing a winning strategy for Player~$O$ in $\blockgame{d}{L_k^R}$  with $|\stratautDelta| \in \bigo(2^{k-d})$ and $|\stratautLambda| \in \mathcal \bigo(2^{d})$.
 Let $x\gamma \in \SigmaI^\omega$ with $x = a_1\cdots a_{k} \in \SigmaI^k$ denote the input sequence that Player~$I$ plays in the block game~$\blockgame{d}{L_k^R}$.
 The first output block that must be produced by Player~$O$ is $a_{k}\cdots{a_{k-d-1}}$.
 This sequence is part of the first lookahead, the output slave $\stratautLambda$ of a succinct transducer $\strataut$ has to store this sequence completely to reverse it, thus $|\stratautLambda| \in \mathcal \bigo(2^d)$.
 The next output block that has to be produced must begin with $a_{k-d}\cdots a_1$.
 This sequence is not part of the next lookahead, it is part of the first input block (the first $k-d$ letters to be precise), the next lookahead is the second and third input block.
 Thus, is must be stored by the transition slave $\stratautDelta$ of $\strataut$ so that this sequence can be passed on to $\stratautLambda$ which has to output it.
 Hence, $\stratautDelta$ has to memorize the first $k-d$ input letters, resulting in $|\stratautDelta| \in \mathcal \bigo(2^{k-d})$.
 
 Regarding explicit transducers, the same reasoning can be applied.
 Thus, in order to implement a winning strategy in the block game~$\blockgame{d}{L_k^R}$, an explicit transducer $\strataut = (Q, \SigmaI, q_\initmark, \delta, \SigmaO, \lambda)$ has to memorize the first $k-d$ input letters, resulting in a state space of size $\bigo(2^{k-d})$.
 Recall, $|Q|$ is defined as the size of $\strataut$, hence $|\strataut| \in \bigo(2^{k-d})$. 
 
 Taking a look at the special cases of $d = k/2$ and $d = k$, the above result yields that an explicit transducer $\strataut$ needs memory of $\bigo(2^{k/2})$ and in the latter case no memory to win the block game $\blockgame{d}{L_k^R}$.
 Generally, for some $d$ between $k/2$ and $k$, an explicit transducer $\strataut$ needs memory of $\bigo(2^{k-d})$ to implement a winning strategy in the block game $\blockgame{d}{L_k^R}$.
 Let us analyze this result; increasing the block size by one halves the number of memory states an explicit transducer needs, thus the tradeoff between the block size and the necessary memory is gradual.
\end{proof}

The example of the block-reversal winning condition $L_k^R$ presented in the proof of Theorem \ref{thm-tradeoff} allows for a tradeoff between the block size and the necessary memory to implement a winning strategy in the block game.
However, the size of an automaton that recognizes $L_k^R$ as well as the lower bound on the block size is exponential in $k$, so the necessary lookahead is only linear in the size of the automaton.
It is an open question whether there is a winning condition recognizable by an automaton of polynomial size with an exponential lower bound on the necessary block size that allows for a tradeoff between block size and memory.

\section{Conclusion}
\label{sec-conc}
We have presented a very general framework for analyzing delay games. If the automaton recognizing the winning condition satisfies a certain aggregation property, our framework yields upper bounds on the necessary lookahead to win the game, an algorithm for determining the winner (under some additional assumptions on the acceptance condition), and finite-state winning strategies for Player~$O$, if she wins the game at all. These results cover all previous results on the first two aspects (although not necessarily with optimal complexity of determining the winner). 

Thereby, we have lifted another important aspect of the theory of infinite games to the setting with delay. However, many challenging open questions remain, e.g., a systematic study of memory requirements in delay games is now possible. For delay-free games, tight upper and lower bounds on these requirements are known for almost all winning conditions.

Furthermore, in our study we focused on the state complexity of the automata implementing the strategies, i.e., we measure the quality of a strategy in the number of states of a transducer implementing it. However, this is not the true size of such a machine, as this ignores the need to represent the transition function and the output function, which have an exponential domain (in the block size) in the case of delay-aware strategies.
We addressed this issue and have proposed a succinct notion of transducers implementing delay-aware strategies.
Although we have presented examples where our succinct notion allows for a significantly smaller representation of strategies compared to the true size of an explicit representation, generally such a representation cannot be smaller than an explicit one.

Another exciting question concerns the tradeoff between memory and amount of lookahead: can one trade memory for lookahead? In other settings, such tradeoffs exist, e.g., lookahead allows to improve the quality of strategies~\cite{Zimmermann17}. We have presented a game where Player~$O$ can indeed trade lookahead for memory and vice versa. Salzmann has presented further tradeoffs of this kind, e.g., linear lookahead allows exponential reductions in memory size in comparison to delay-free strategies~\cite{Salzmann15}. In current work, we investigate whether these results are inherent to his notion of finite-state strategy, which differs subtly from the one proposed here, or whether they exist in our setting as well.

Finite-state strategies in arena-based games are typically computed by game reductions, which turn a game with a complex winning condition into one in a larger arena with a simpler winning condition. In future work, we plan to lift this approach to delay games. Note that the algorithm for computing finite-state strategies presented here can already be understood as a reduction, as we turn a delay game into a Gale-Stewart game. This removes the delay, but preserves the type of winning condition. However, it is also conceivable that staying in the realm of delay games yields better results, i.e., by keeping the delay while simplifying the winning condition. In future work, we address this question.

\paragraph{Acknowledgements} The authors are very grateful to the anonymous reviewers of this and an earlier version of the paper, which significantly improved the exposition. 

\bibliographystyle{elsarticle-num}
\bibliography{biblio}

\begin{thebibliography}{10}
\expandafter\ifx\csname url\endcsname\relax
  \def\url#1{\texttt{#1}}\fi
\expandafter\ifx\csname urlprefix\endcsname\relax\def\urlprefix{URL }\fi
\expandafter\ifx\csname href\endcsname\relax
  \def\href#1#2{#2} \def\path#1{#1}\fi

\bibitem{Zimmermann17}
M.~Zimmermann, Games with costs and delays, in: {LICS} 2017
  \cite{DBLP:conf/lics/2017}, pp. 1--12.
\newblock \href {http://dx.doi.org/10.1109/LICS.2017.8005125}
  {\path{doi:10.1109/LICS.2017.8005125}}.

\bibitem{ChenFLMZ18}
M.~Chen, M.~Fr{\"{a}}nzle, Y.~Li, P.~N. Mosaad, N.~Zhan, What's to come is
  still unsure - synthesizing controllers resilient to delayed interaction, in:
  S.~K. Lahiri, C.~Wang (Eds.), ATVA 2018, Vol. 11138 of LNCS, Springer, 2018,
  pp. 56--74.
\newblock \href {http://dx.doi.org/10.1007/978-3-030-01090-4\_4}
  {\path{doi:10.1007/978-3-030-01090-4\_4}}.

\bibitem{GaleStewart53}
D.~Gale, F.~M. Stewart, Infinite games with perfect information, Annals of
  Mathematics 28 (1953) 245--266.
\newblock \href {http://dx.doi.org/10.1515/9781400881970-014}
  {\path{doi:10.1515/9781400881970-014}}.

\bibitem{HoschLandweber72}
F.~A. Hosch, L.~H. Landweber, Finite delay solutions for sequential conditions,
  in: ICALP 1972, 1972, pp. 45--60.

\bibitem{BuechiLandweber69}
J.~R. B\"uchi, L.~H. Landweber, Solving sequential conditions by finite-state
  strategies, Transactions of the American Mathematical Society 138 (1969)
  295--311.
\newblock \href {http://dx.doi.org/10.2307/1994916}
  {\path{doi:10.2307/1994916}}.

\bibitem{HoltmannKaiserThomas12}
M.~Holtmann, L.~Kaiser, W.~Thomas, Degrees of lookahead in regular infinite
  games, LMCS 8~(3).
\newblock \href {http://dx.doi.org/10.2168/LMCS-8(3:24)2012}
  {\path{doi:10.2168/LMCS-8(3:24)2012}}.

\bibitem{KleinZimmermann16}
F.~Klein, M.~Zimmermann, How much lookahead is needed to win infinite games?,
  LMCS 12~(3).
\newblock \href {http://dx.doi.org/10.2168/LMCS-12(3:4)2016}
  {\path{doi:10.2168/LMCS-12(3:4)2016}}.

\bibitem{Martin75}
D.~A. Martin, Borel determinacy, Annals of Mathematics 102 (1975) 363--371.
\newblock \href {http://dx.doi.org/10.2307/1971035}
  {\path{doi:10.2307/1971035}}.

\bibitem{KleinZimmermann15}
F.~Klein, M.~Zimmermann, What are strategies in delay games? {Borel}
  determinacy for games with lookahead, in: S.~Kreutzer (Ed.), {CSL} 2015,
  Vol.~41 of LIPIcs, Schloss Dagstuhl - Leibniz-Zentrum für Informatik, 2015,
  pp. 519--533.
\newblock \href {http://dx.doi.org/10.4230/LIPIcs.CSL.2015.519}
  {\path{doi:10.4230/LIPIcs.CSL.2015.519}}.

\bibitem{FridmanLoedingZimmermann11}
W.~Fridman, C.~Löding, M.~Zimmermann, Degrees of lookahead in context-free
  infinite games, in: M.~Bezem (Ed.), {CSL} 2011, Vol.~12 of LIPIcs, Schloss
  Dagstuhl - Leibniz-Zentrum für Informatik, 2011, pp. 264--276.
\newblock \href {http://dx.doi.org/10.4230/LIPIcs.CSL.2011.264}
  {\path{doi:10.4230/LIPIcs.CSL.2011.264}}.

\bibitem{KleinZimmermann16a}
F.~Klein, M.~Zimmermann, Prompt delay, in: A.~Lal, S.~Akshay, S.~Saurabh,
  S.~Sen (Eds.), FSTTCS 2016, Vol.~65 of LIPIcs, Schloss Dagstuhl -
  Leibniz-Zentrum für Informatik, 2016, pp. 43:1--43:14.
\newblock \href {http://dx.doi.org/10.4230/LIPIcs.FSTTCS.2016.43}
  {\path{doi:10.4230/LIPIcs.FSTTCS.2016.43}}.

\bibitem{Zimmermann16}
M.~Zimmermann, Delay games with {WMSO+U} winning conditions, {RAIRO} - Theor.
  Inf. and Applic. 50~(2) (2016) 145--165.
\newblock \href {http://dx.doi.org/10.1051/ita/2016018}
  {\path{doi:10.1051/ita/2016018}}.

\bibitem{Walukiewicz01}
I.~Walukiewicz, Pushdown processes: Games and model-checking, Inf. and Comput.
  164~(2) (2001) 234--263.
\newblock \href {http://dx.doi.org/10.1006/inco.2000.2894}
  {\path{doi:10.1006/inco.2000.2894}}.

\bibitem{Salzmann15}
T.~Salzmann, How much memory is needed to win regular delay games?, Master's
  thesis, Saarland University (2015).

\bibitem{Zimmermann17c}
M.~Zimmermann, Finite-state strategies in delay games, in: P.~Bouyer,
  A.~Orlandini, P.~S. Pietro (Eds.), GandALF 2017, Vol. 256 of {EPTCS}, 2017,
  pp. 151--165.
\newblock \href {http://dx.doi.org/10.4204/EPTCS.256.11}
  {\path{doi:10.4204/EPTCS.256.11}}.

\bibitem{Rabin1969}
M.~O. Rabin, Decidability of second-order theories and automata on infinite
  trees, Transactions of the American Mathematical Society 141~(1-35) (1969) 4.

\bibitem{Khoussainov03}
B.~Khoussainov, Finite state strategies in one player {McNaughton} games, in:
  C.~Calude, M.~J. Dinneen, V.~Vajnovszki (Eds.), DMTCS 2003, Dijon, Vol. 2731
  of LNCS, Springer, 2003, pp. 203--214.
\newblock \href {http://dx.doi.org/10.1007/3-540-45066-1\_16}
  {\path{doi:10.1007/3-540-45066-1\_16}}.

\bibitem{Rabinovich09}
A.~Rabinovich, Synthesis of finite-state and definable winning strategies, in:
  R.~Kannan, K.~N. Kumar (Eds.), FSTTCS 2009, Vol.~4 of LIPIcs, Schloss
  Dagstuhl - Leibniz-Zentrum für Informatik, 2009, pp. 359--370.
\newblock \href {http://dx.doi.org/10.4230/LIPIcs.FSTTCS.2009.2332}
  {\path{doi:10.4230/LIPIcs.FSTTCS.2009.2332}}.

\bibitem{LeRouxPauly16}
S.~L. Roux, A.~Pauly, Extending finite memory determinacy to multiplayer games,
  in: A.~Lomuscio, M.~Y. Vardi (Eds.), {SR} 2016, Vol. 218 of {EPTCS}, 2016,
  pp. 27--40.
\newblock \href {http://dx.doi.org/10.4204/EPTCS.218.3}
  {\path{doi:10.4204/EPTCS.218.3}}.

\bibitem{Thomas94}
W.~Thomas, Finite-state strategies in regular infinite games, in: P.~S.
  Thiagarajan (Ed.), FSTTCS 1994, Vol. 880 of LNCS, Springer, 1994, pp.
  149--158.
\newblock \href {http://dx.doi.org/10.1007/3-540-58715-2\_121}
  {\path{doi:10.1007/3-540-58715-2\_121}}.

\bibitem{DziembowskiJW97}
S.~Dziembowski, M.~Jurdzi\'{n}ski, I.~Walukiewicz, How much memory is needed to
  win infinite games?, in: LICS 1997, {IEEE} Computer Society, 1997, pp.
  99--110.
\newblock \href {http://dx.doi.org/10.1109/LICS.1997.614939}
  {\path{doi:10.1109/LICS.1997.614939}}.

\bibitem{ChatterjeeHenzingerHorn11}
K.~Chatterjee, T.~A. Henzinger, F.~Horn, The complexity of request-response
  games, in: A.~H. Dediu, S.~Inenaga, C.~Mart{\'{\i}}n{-}Vide (Eds.), LATA
  2011, Vol. 6638 of LNCS, Springer, 2011, pp. 227--237.
\newblock \href {http://dx.doi.org/10.1007/978-3-642-21254-3\_17}
  {\path{doi:10.1007/978-3-642-21254-3\_17}}.

\bibitem{Horn05}
F.~Horn, \href{https://www.irif.fr/~horn/publications.html}{Streett games on
  finite graphs}, in: GDV 2005, 2005.
\newline\urlprefix\url{https://www.irif.fr/~horn/publications.html}

\bibitem{WallmeierHuettenThomas03}
N.~Wallmeier, P.~Hütten, W.~Thomas, Symbolic synthesis of finite-state
  controllers for request-response specifications, in: O.~H. Ibarra, Z.~Dang
  (Eds.), CIAA 2003, Vol. 2759 of LNCS, Springer, 2003, pp. 11--22.
\newblock \href {http://dx.doi.org/10.1007/3-540-45089-0\_3}
  {\path{doi:10.1007/3-540-45089-0\_3}}.

\bibitem{FijalkowH13}
N.~Fijalkow, F.~Horn, Les jeux d'accessibilit{\'{e}}
  g{\'{e}}n{\'{e}}ralis{\'{e}}e, Technique et Science Informatiques 32~(9-10)
  (2013) 931--949, see also~\cite{DBLP:journals/corr/abs-1010-2420}.
\newblock \href {http://dx.doi.org/10.3166/tsi.32.931-949}
  {\path{doi:10.3166/tsi.32.931-949}}.

\bibitem{ColcombetFH14}
T.~Colcombet, N.~Fijalkow, F.~Horn, Playing safe, in: V.~Raman, S.~P. Suresh
  (Eds.), FSTTCS 2014, Vol.~29 of LIPIcs, Schloss Dagstuhl - Leibniz-Zentrum
  für Informatik, 2014, pp. 379--390.
\newblock \href {http://dx.doi.org/10.4230/LIPIcs.FSTTCS.2014.379}
  {\path{doi:10.4230/LIPIcs.FSTTCS.2014.379}}.

\bibitem{EmersonJutla91}
E.~A. Emerson, C.~S. Jutla, Tree automata, mu-calculus and determinacy
  (extended abstract), in: FOCS 1991, {IEEE} Computer Society, 1991, pp.
  368--377.
\newblock \href {http://dx.doi.org/10.1109/SFCS.1991.185392}
  {\path{doi:10.1109/SFCS.1991.185392}}.

\bibitem{Mostowski91}
A.~Mostowski, Games with forbidden positions, Tech. Rep.~78, University of
  Gda\'nsk (1991).

\bibitem{CJKLS16}
C.~S. Calude, S.~Jain, B.~Khoussainov, W.~Li, F.~Stephan, Deciding parity games
  in quasipolynomial time, in: H.~Hatami, P.~McKenzie, V.~King (Eds.), {STOC}
  2017, {ACM}, 2017, pp. 252--263.
\newblock \href {http://dx.doi.org/10.1145/3055399.3055409}
  {\path{doi:10.1145/3055399.3055409}}.

\bibitem{FJSSW17}
J.~Fearnley, S.~Jain, S.~Schewe, F.~Stephan, D.~Wojtczak, An ordered approach
  to solving parity games in quasi polynomial time and quasi linear space, in:
  H.~Erdogmus, K.~Havelund (Eds.), {SPIN} 2017, {ACM}, 2017, pp. 112--121.
\newblock \href {http://dx.doi.org/10.1145/3092282.3092286}
  {\path{doi:10.1145/3092282.3092286}}.

\bibitem{JL17}
M.~Jurdzinski, R.~Lazic, Succinct progress measures for solving parity games,
  in: {LICS} 2017 \cite{DBLP:conf/lics/2017}, pp. 1--9.
\newblock \href {http://dx.doi.org/10.1109/LICS.2017.8005092}
  {\path{doi:10.1109/LICS.2017.8005092}}.

\bibitem{DBLP:conf/lics/Lehtinen18}
K.~Lehtinen, A modal {\(\mu\)} perspective on solving parity games in
  quasi-polynomial time, in: A.~Dawar, E.~Gr{\"{a}}del (Eds.), {LICS} 2018,
  {ACM}, 2018, pp. 639--648.
\newblock \href {http://dx.doi.org/10.1145/3209108.3209115}
  {\path{doi:10.1145/3209108.3209115}}.

\bibitem{Bojanczyk11}
M.~Boja\'{n}czyk, Weak {MSO} with the unbounding quantifier, Theory Comput.
  Syst. 48~(3) (2011) 554--576.
\newblock \href {http://dx.doi.org/10.1007/s00224-010-9279-2}
  {\path{doi:10.1007/s00224-010-9279-2}}.

\bibitem{McNaughton93}
R.~McNaughton, Infinite games played on finite graphs, Ann. Pure Appl. Logic
  65~(2) (1993) 149--184.
\newblock \href {http://dx.doi.org/10.1016/0168-0072(93)90036-D}
  {\path{doi:10.1016/0168-0072(93)90036-D}}.

\bibitem{PnueliRosner89a}
A.~Pnueli, R.~Rosner, On the synthesis of an asynchronous reactive module, in:
  G.~Ausiello, M.~Dezani{-}Ciancaglini, S.~R.~D. Rocca (Eds.), ICALP 1989, Vol.
  372 of LNCS, Springer, 1989, pp. 652--671.
\newblock \href {http://dx.doi.org/10.1007/BFb0035790}
  {\path{doi:10.1007/BFb0035790}}.

\bibitem{GraedelThomasWilke02}
E.~Gr{\"a}del, W.~Thomas, T.~Wilke (Eds.), Automata, Logics, and Infinite
  Games: A Guide to Current Research, Vol. 2500 of LNCS, Springer, 2002.
\newblock \href {http://dx.doi.org/10.1007/3-540-36387-4}
  {\path{doi:10.1007/3-540-36387-4}}.

\bibitem{WeinertZimmermann17}
A.~Weinert, M.~Zimmermann, Easy to win, hard to master: Optimal strategies in
  parity games with costs, LMCS 13~(3).
\newblock \href {http://dx.doi.org/10.23638/LMCS-13(3:29)2017}
  {\path{doi:10.23638/LMCS-13(3:29)2017}}.

\bibitem{DBLP:conf/lics/2017}
32nd Annual {ACM/IEEE} Symposium on Logic in Computer Science, {LICS} 2017,
  Reykjavik, Iceland, June 20-23, 2017, {IEEE} Computer Society, 2017.

\bibitem{DBLP:journals/corr/abs-1010-2420}
N.~Fijalkow, F.~Horn, \href{http://arxiv.org/abs/1010.2420}{The surprizing
  complexity of reachability games}, arXiv 1010.2420.
\newline\urlprefix\url{http://arxiv.org/abs/1010.2420}

\end{thebibliography}

\appendix
\section{Arena-based Games vs.\ Gale-Stewart Games}
\label{sec-appendix}
In this short appendix, we give a formal definition of the arena-based games mentioned in Section 3. We begin by giving a quick recap of arena-based games to introduce our notation. 

An arena $\arena = (V, V_I, V_O, E, v_\initmark)$ consists of a finite directed graph~$(V, E)$ without terminal vertices, a partition~$(V_I, V_O)$ into the positions of Player~$I$ and Player~$O$, and an initial vertex~$v_\initmark \in V$. A play is an infinite path through~$\arena$ starting in $v_\initmark$.

A game~$\arenagame = (\arena, \win)$ consists of an arena~$\arena$, say with set~$V$ of vertices, and a winning condition~$\win \subseteq V^\omega$. A play is winning for Player~$O$, if it is in $\win$. 

A strategy for Player~$O$ is a mapping~$\sigma \colon V^*\cdot V_O \rightarrow V$ such that $(v,\sigma(wv)) \in E$ for every $wv \in V^*V_O$. A play~$v_0 v_1 v_2 \cdots$ is consistent with $\sigma$, if $v_{i+1} = \sigma(v_0 \cdots v_i)$ for every $i$ with $v_i \in V_O$. A strategy is winning, of every consistent play is winning for Player~$O$. If Player~$O$ has a winning strategy for $\arenagame$, then we say she wins $\arenagame$.

A finite-state strategy for an arena~$\arena$ with set~$V$ of vertices is again implemented by a transducer~$\strataut = (Q, \SigmaI, q_\initmark, \delta, \SigmaO, \lambda)$ where $Q$, $q_\initmark$, and $\delta$ are as in Subsection~\ref{subsec-finitestate4galestewart}, where $\SigmaI = \SigmaO = V$, and $\lambda \colon Q \times V \rightarrow V$. The strategy implemented by $\strataut$ is defined as $\sigma(wv) = \lambda(\delta^*(q_\initmark,wv),v)$, where $\delta^*(q_\initmark, wv)$ is the state reached by $\strataut$ when processing $wv$ starting in $q_\initmark$. The size of $\strataut$ is defined to be $\size{Q}$. 

A strategy is finite-state if it is implemented  by some finite transducer; it is positional, if it is implemented by some transducer of size one. 

Now, given a Gale-Stewart game~$\game(L(\aut))$ for some automaton~$\aut = (Q, \SigmaI \times \SigmaO, q_\initmark, \delta, \acc)$, we define the arena-based game~$\arenagame_\aut = (\arena_\aut, \win_\aut)$ with $\arena_\aut = (V, V_I, V_O, E, v_\initmark)$ such that:
\begin{itemize}
	\item $V = V_I \cup V_O$ with $V_I = \delta \cup \set{v_\initmark}$ for some fresh initial vertex~$v_\initmark \notin \delta$ and  $V_O = Q \times \SigmaI$.
	\item $E$ is the union of the following sets of edges:
	\begin{itemize}
		\item $\set{(v_\initmark, (q_\initmark, a)) \mid a \in \SigmaI}$ (initial moves of Player~$I$),
		\item $\set{((q, {a \choose b}, q'),(q', a')) \mid (q, {a \choose b}, q') \in V_1, a' \in \SigmaI}$ (regular moves of Player~$I$), and
		\item $\set{ ((q,a), (q, {a \choose b}, q')) \mid (q, a) \in V_0, b\in \SigmaO, q' = \delta(q, {a \choose b}) }$  (moves of Player~$O$).
	\end{itemize} 
	\item $\win_\aut = \set{v_\initmark (q_0,a_0) t_0 (q_1,a_1) t_1 (q_2,a_2) t_2 \cdots \mid t_0 t_1 t_2 \cdots \in \acc} $.
\end{itemize}

The following lemma formalizes a claim from Section~\ref{sec-fsindg}.

\begin{lemma}
\label{lemma-finitestate4galestewart}
Let $\game(L(\aut))$ and $\arenagame_\aut$ be defined as above. Then, Player~$O$ wins $\game(L(\aut))$ if, and only if, she wins $\arenagame_{\aut}$. Furthermore, a finite-state winning strategy with $n$ states for Player~$O$ in $\arenagame_{\aut}$ can be turned into a finite-state winning strategy with $\size{Q} \cdot \size{\SigmaI}\cdot n$ states for Player~$O$ in $\game(L(\aut))$.
\end{lemma}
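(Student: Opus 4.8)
The plan is to set up a tight, winner-preserving correspondence between the plays of $\game(L(\aut))$ and the plays of $\arenagame_\aut$, and then read off both the equivalence of winners and the finite-state transfer from it.

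First I would pin down the shape of arena plays. Starting in $v_\initmark$, the edge relation forces every play of $\arenagame_\aut$ to alternate between Player~$O$ vertices $(q_i,a_i)\in Q\times\SigmaI$ and transition vertices $t_i=(q_i,{a_i \choose b_i},q_{i+1})\in\delta$, with $q_0=q_\initmark$ and $q_{i+1}=\delta(q_i,{a_i \choose b_i})$. Thus a play has the form $v_\initmark (q_0,a_0) t_0 (q_1,a_1) t_1\cdots$ and is determined by the two sequences $\alpha=a_0a_1\cdots\in\SigmaI^\omega$ (Player~$I$'s choices of outgoing edge from $v_\initmark$ and from the transition vertices) and $\beta=b_0b_1\cdots\in\SigmaO^\omega$ (Player~$O$'s choices). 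The crucial observation is that $t_0t_1t_2\cdots$ is exactly the run of $\aut$ on ${\alpha \choose \beta}$; hence by the definitions of $\win_\aut$ and $L(\aut)$ this arena play is won by Player~$O$ iff ${\alpha \choose \beta}\in L(\aut)$, i.e.\ iff the corresponding play of $\game(L(\aut))$ is won by Player~$O$. This gives a bijection between plays of the two games that preserves the winner, and the edge structure guarantees that every arena play arises this way.

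Given this bijection, I would establish the equivalence of the two games by transporting strategies along it in either direction: a strategy for Player~$O$ in one game is simulated in the other by translating the opponent's move history into the history of the other game, querying the given strategy, and translating its answer back. Consistency is preserved under the bijection, so every play consistent with the simulating strategy is the image of a play consistent with the original (winning) strategy and is therefore won; this yields both directions of the iff. For the quantitative part, let $\strataut=(P,V,p_\initmark,\eta,V,\mu)$ with $\size{P}=n$ implement a winning strategy $\sigma$ in $\arenagame_\aut$, so that $\sigma(wv)=\mu(\eta^*(p_\initmark,wv),v)$. I would build a Gale-Stewart transducer with state set $Q\times\SigmaI\times P$ (of size $\size{Q}\cdot\size{\SigmaI}\cdot n$) maintaining the invariant that after reading $a_0\cdots a_i$ it is in state $(q_i,a_i,p_i)$, where $q_i$ is the current $\aut$-state of the simulated play and $p_i=\eta^*(p_\initmark, v_\initmark(q_0,a_0)t_0\cdots t_{i-1}(q_i,a_i))$ is the memory of $\strataut$ on the arena history up to and including the vertex $(q_i,a_i)$. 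The output function returns the $\SigmaO$-letter $b_i$ read off from $\mu(p_i,(q_i,a_i))=(q_i,{a_i \choose b_i},q_{i+1})$; the transition function, on reading $a_{i+1}$, recomputes $t_i$ (hence $q_{i+1}$ and $b_i$) the same way, advances the memory to $p_{i+1}=\eta(\eta(p_i,t_i),(q_{i+1},a_{i+1}))$, and stores $(q_{i+1},a_{i+1},p_{i+1})$; a designated initial element encodes $q_\initmark$ together with $\eta^*(p_\initmark,v_\initmark)$, its first transition additionally processing the vertex $(q_\initmark,a_0)$. By construction the implemented strategy makes, on every input, exactly the moves $\sigma$ prescribes in the corresponding arena play, so its consistent plays are precisely the images of $\sigma$-consistent arena plays and hence winning by the correspondence above.

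The main obstacle I expect is the bookkeeping in this last step: $\strataut$ is driven by the interleaved sequence of Player~$O$ vertices and transition vertices, whereas the new transducer only ever reads Player~$I$'s $\SigmaI$-letters and must silently reconstruct the intervening arena vertices. Getting the update offsets right is where care is needed — in particular that the letter $b_i$ output after reading $a_i$ is produced from the very memory state that is reused one step later to advance past the transition $t_i$, together with the one-step delay between reading an input letter and committing Player~$O$'s next arena vertex, and the special handling of the initial vertex $v_\initmark$. Each individual update is a routine composition of $\eta$, $\mu$, and $\delta$, so the difficulty is entirely in stating and verifying the invariant rather than in any single computation.
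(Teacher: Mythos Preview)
Your proposal is correct and follows essentially the same approach as the paper: a winner-preserving bijection between plays (and play prefixes) of the two games, transport of strategies along it, and for the finite-state transfer a product transducer whose state space is exactly the paper's choice of Player~$O$ vertices~$Q\times\SigmaI$ crossed with the $n$ memory states of the arena transducer. Your write-up is considerably more detailed than the paper's sketch, but the underlying construction and the bookkeeping invariant you spell out are precisely what the paper leaves implicit.
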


\begin{proof}
There is a bijection between play prefixes in $\game(L(\aut))$ and in $\arenagame_\aut$. By taking limits, this bijection can be lifted to a bijection  between plays that additionally preserves the winner of plays. Using the former bijection one can easily translate strategies between these games and use the second bijection to prove that this transformation preserves being a winning strategy. Finally, it is also straightforward to implement the transformation from $\arenagame_{\aut}$ to $\game(L(\aut))$ with finite-state strategies: the transducer implementing the strategy for $\game(L(\aut))$ uses a product state space consisting of the states of the given transducer for $\arenagame_{\aut}$ and Player~$O$ vertices from $\arenagame_{\aut}$ to keep track of the last vertex of the play prefix obtained by the first bijection. This information is sufficient to mimic the strategy for $\arenagame_{\aut}$ in $\game(L(\aut))$.
\end{proof}

Now, for some delay game~$\delaygame{L(\aut)}$ for some automaton~$\aut = (Q, \SigmaI \times \SigmaO, q_\initmark, \delta, \acc)$ with constant delay function~$f$ with $f(0) = d > 0$, we define the arena-based game~$\arenagame_{\aut,d} = (\arena_{\aut,d}, \win_{\aut,d})$ with $\arena_{\aut,d} = (V, V_I, V_O, E, v_\initmark)$ such that:
\begin{itemize}
	\item $V = V_I \cup V_O$ with $V_I = \delta \times \SigmaI^{d-1} \cup \set{v_\initmark}$ for some fresh initial vertex~$v_\initmark$ and  $V_O = Q \times \SigmaI^d$.
	\item $E$ is the union of the following sets of edges:
	\begin{itemize}
		\item $\set{(v_\initmark, (q_\initmark, w)) \mid w \in \SigmaI^{d}}$ (initial moves of Player~$I$),
		\item $\set{(((q, {a \choose b}, q'),w),(q', wa')) \mid ((q, {a \choose b}, q'),w) \in V_1, a' \in \SigmaI}$ (regular moves of Player~$I$), and
		\item $\set{ ((q,aw), (q, {a \choose b}, q'),w) \mid (q, aw) \in V_0, b\in \SigmaO, q' = \delta(q, {a \choose b}) }$ (moves of Player~$O$).
	\end{itemize} 
	\item $\win_\aut = \set{v_\initmark (q_0,w_0) (t_0,w_0') (q_1,w_1) (t_1,w_1') (q_2,w_2) (t_2,w_2') \cdots \mid t_0 t_1 t_2 \cdots \in \acc} $.
\end{itemize}

Again, the following lemma  formalizes a claim from Section~\ref{sec-fsindg}.

\begin{lemma}
\label{lemma-finitestate4delay}
Let $\delaygame{L(\aut)}$ and $\arenagame_{\aut,d}$ be defined as above. Then, Player~$O$ wins $\delaygame{L(\aut)}$ if, and only if, she wins $\arenagame_{\aut,d}$. Furthermore, a finite-state winning strategy with $n$ states for Player~$O$ in $\arenagame_{\aut,d}$ can be turned into a finite-state winning strategy with $\size{Q} \cdot \size{\SigmaI}^d\cdot n$ states for Player~$O$ in $\delaygame{L(\aut)}$.
\end{lemma}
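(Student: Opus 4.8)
The plan is to mimic the proof of Lemma~\ref{lemma-finitestate4galestewart}, the delay-free analogue, whose structure carries over once the right correspondence between configurations is set up. The crucial observation is that the vertices of $\arena_{\aut,d}$ store exactly the information needed to simulate a play of $\delaygame{L(\aut)}$: a vertex $(q,w) \in V_O$ records the current state $q$ of $\aut$ (after processing all input–output letters already committed) together with the lookahead $w \in \SigmaI^d$, i.e., the $d$ input letters Player~$I$ has produced but Player~$O$ has not yet answered; a vertex $((q,{a \choose b},q'),w) \in V_I$ additionally records the last transition taken and the shortened lookahead $w \in \SigmaI^{d-1}$. First I would make this precise as an invariant and read off from it the bijection between play prefixes of the two games.

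The bijection pairs moves as follows. The initial edge $v_\initmark \to (q_\initmark,x_0)$ corresponds to round~$0$ of $\delaygame{L(\aut)}$, in which Player~$I$ picks the block $x_0 \in \SigmaI^{f(0)} = \SigmaI^{d}$; here one uses $f(0)=d$. Every subsequent Player~$I$ edge, which appends a single letter to the queue, corresponds to a move $x_i \in \SigmaI^{f(i)} = \SigmaI$ for $i>0$ (recall $f$ is constant, so $f(i)=1$), and every Player~$O$ edge $(q,aw) \to ((q,{a \choose b},q'),w)$ corresponds to Player~$O$ answering the next unanswered input letter $a$ by $b$. By construction the sequence $t_0 t_1 t_2 \cdots$ of transitions traversed along a play of $\arenagame_{\aut,d}$ is precisely the run of $\aut$ on the outcome ${x_0 x_1 x_2 \cdots \choose y_0 y_1 y_2 \cdots}$ of the paired play of $\delaygame{L(\aut)}$. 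Taking limits lifts the bijection from prefixes to infinite plays, and since $\win_{\aut,d}$ asks exactly for $t_0 t_1 t_2 \cdots \in \acc$, it preserves the winner. This already yields the equivalence that Player~$O$ wins $\delaygame{L(\aut)}$ if, and only if, she wins $\arenagame_{\aut,d}$: any (winning) strategy transports along the prefix bijection to a (winning) strategy in the other game.

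For the finite-state claim I would turn a transducer $\strataut$ with $n$ states implementing a winning strategy for Player~$O$ in $\arenagame_{\aut,d}$ into one for $\delaygame{L(\aut)}$ whose states are pairs consisting of a state of $\strataut$ and a Player~$O$ vertex from $V_O = Q \times \SigmaI^{d}$, giving $\size{Q}\cdot\size{\SigmaI}^d\cdot n$ states. The new transducer reads the input of $\delaygame{L(\aut)}$ one letter at a time and maintains the last Player~$O$ vertex $(q,w)$ of the simulated arena play together with the corresponding state of $\strataut$; this is all that is needed to recover the arena strategy's next Player~$O$ move, hence the output letter, and then to update both components when the next input letter arrives, since that single letter drives one Player~$O$ edge followed by one Player~$I$ edge in the arena. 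The only wrinkle is the initial phase in which the first $d$ letters are read before any output is committed; there the outputs are irrelevant and the partial queue is merely accumulated, so these auxiliary configurations can be absorbed into the stated bound.

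The step I expect to require the most care is verifying the bijection across the two regimes of the delay function — the single block of length $d$ in round~$0$ versus the length-one moves in all later rounds — and checking that it is consistent with the alternating Player~$O$/Player~$I$ edge structure of the arena, so that the queue invariant (length $d$ at $V_O$-vertices, length $d-1$ at $V_I$-vertices) is maintained throughout and the traversed transitions really reproduce the run of $\aut$ on the outcome. Once this invariant is nailed down, the strategy transport and the transducer bookkeeping are routine.
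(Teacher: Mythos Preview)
Your proposal is correct and follows essentially the same approach as the paper, which simply says ``Similarly to the proof of Lemma~\ref{lemma-finitestate4galestewart}'': establish the winner-preserving bijection between play prefixes (handling the length-$d$ initial block versus the length-one later moves), and build the product transducer over states of $\strataut$ and Player~$O$ vertices $Q \times \SigmaI^d$. Your identification of the queue invariant and the initial accumulation phase is exactly the extra bookkeeping needed over the delay-free case.
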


\begin{proof}
Similarly to the proof of Lemma~\ref{lemma-finitestate4galestewart}.
\end{proof}

\end{document}